\newcommand{\sign}{\operatorname{sign}}
\newcommand{\mcU}{\mathcal{U}}
\newcommand{\mcV}{\mathcal{V}}
\newcommand{\mcS}{\mathcal{S}}
\newcommand{\realpositive}{\mathbb{R}_{>0}}
\newcommand{\fLVe}{\subscr{f}{LVe}}
\newcommand{\until}[1]{\{1,\dots, #1\}}
\newcommand{\subscr}[2]{#1_{\textup{#2}}}
 \newcommand{\setdef}[2]{\{#1
  \; | \; #2\}} 
\newcommand{\bigsetdef}[2]{\big\{#1 \; | \; #2\big\}}
\newcommand{\Bigsetdef}[2]{\Big\{#1 \; \big| \; #2\Big\}}
\newcommand{\map}[3]{#1: #2 \rightarrow #3}
\newcommand{\jac}[1]{D\mkern-2.5mu{#1}}
\newcommand\oprocendsymbol{\hbox{$\triangle$}}
\newcommand\oprocend{\relax\ifmmode\else\unskip\hfill\fi\oprocendsymbol}
\def\eqoprocend{\tag*{$\triangle$}}
\DeclareSymbolFont{bbold}{U}{bbold}{m}{n}
\DeclareSymbolFontAlphabet{\mathbbold}{bbold}
\newcommand{\vect}[1]{\mathbbold{#1}}
\newcommand{\real}{\mathbb{R}}
\newcommand{\complex}{\mathbb{C}}
\DeclareMathOperator{\diag}{diag}
\DeclareMathOperator{\Ker}{\mathrm{Ker}}
\DeclareMathOperator{\Img}{\mathrm{Img}}
\renewcommand{\top}{\mathsf{T}} 
\newtheorem{theorem}{Theorem}
\newtheorem{proposition}[theorem]{Proposition}
\newtheorem{lemma}[theorem]{Lemma}
\newtheorem*{lemma*}{Lemma}
\theoremstyle{definition}
\newtheorem{definition}[theorem]{Definition}
\newtheorem{remark}[theorem]{Remark}
\newtheorem{example}[theorem]{Example}
\newtheorem*{example*}{Example}
\newcommand{\suchthat}{\;\ifnum\currentgrouptype=16 \middle\fi|\;}
\newcommand{\scirc}{\raise1pt\hbox{$\,\scriptstyle\circ\,$}}
\newcommand{\norm}[1]{\| #1 \|}
\newcommand{\verti}[1]{{\left\vert\kern-0.25ex\left\vert\kern-0.25ex\left\vert #1 
    \right\vert\kern-0.25ex\right\vert\kern-0.25ex\right\vert}}
\definecolor{orcidlogocol}{HTML}{A6CE39}
\tikzset{
	orcidlogo/.pic={
		\fill[orcidlogocol] svg{M256,128c0,70.7-57.3,128-128,128C57.3,256,0,198.7,0,128C0,57.3,57.3,0,128,0C198.7,0,256,57.3,256,128z};
		\fill[white] svg{M86.3,186.2H70.9V79.1h15.4v48.4V186.2z}
		svg{M108.9,79.1h41.6c39.6,0,57,28.3,57,53.6c0,27.5-21.5,53.6-56.8,53.6h-41.8V79.1z M124.3,172.4h24.5c34.9,0,42.9-26.5,42.9-39.7c0-21.5-13.7-39.7-43.7-39.7h-23.7V172.4z}
		svg{M88.7,56.8c0,5.5-4.5,10.1-10.1,10.1c-5.6,0-10.1-4.6-10.1-10.1c0-5.6,4.5-10.1,10.1-10.1C84.2,46.7,88.7,51.3,88.7,56.8z};
	}
}
\newcommand\orcidicon[1]{\href{https://orcid.org/#1}{\mbox{\scalerel*{
				\begin{tikzpicture}[yscale=-1,transform shape]
				\pic{orcidlogo};
				\end{tikzpicture}
			}{|}}}}
\begin{document}


\title{Weak and Semi-Contraction for Network Systems and Diffusively-Coupled
  Oscillators}

%
%

\author{Saber
  Jafarpour~\textsuperscript{\orcidicon{0000-0002-7614-2940}},~\IEEEmembership{Member,~IEEE,} Pedro Cisneros-Velarde,~\IEEEmembership{Student Member, IEEE,} 
   and \\ Francesco Bullo~\textsuperscript{\orcidicon{0000-0002-4785-2118}},~\IEEEmembership{Fellow,~IEEE}
   \IEEEcompsocitemizethanks{
     \IEEEcompsocthanksitem This work was supported in part by the U.S.\ Defense
     Threat Reduction Agency under grant  HDTRA1-19-1-0017.
     \IEEEcompsocthanksitem Saber Jafarpour, Francesco
     Bullo, and Pedro Cisneros-Velarde are with the Center
     of Control, Dynamical Systems and Computation, UC Santa Barbara, CA
     93106-5070, USA. {\tt \{saber,bullo\}@ucsb.edu, {pacisne@gmail.com}}.}}

%
%

\markboth{Submitted to IEEE Transactions on Automatic Control, May 18, 2020}%
{Jafarpour \MakeLowercase{\textit{et al.}}}
%


\maketitle

\begin{abstract}
  We develop two generalizations of contraction theory,
  {\color{black}namely,} semi-contraction and weak-contraction theory.
  First, using the notion of semi-norm, we propose a geometric framework
  for semi-contraction theory. We introduce matrix semi-measures and
  characterize their properties.
  {\color{black}We show that the spectral abscissa of a matrix is the
    infimum over weighted semi-measures.}
  For dynamical systems, we use the semi-measure of their Jacobian to
  {\color{black}characterize the contractivity properties of their
    trajectories.}
  Second, for weakly contracting systems, we prove a dichotomy for the
  asymptotic behavior of their trajectories and {\color{black}novel
    sufficient conditions for convergence to an equilibrium.}
  Third, we show that every trajectory of a doubly-contracting system,
  {\color{black} i.e., a system that is both weakly and semi-contracting,}
  converges to an equilibrium point.
  Finally, we apply our results to various important network systems
  including affine averaging and affine flow systems, continuous-time
  distributed primal-dual algorithms, and {\color{black}networks of
    diffusively-coupled dynamical systems}.
  {\color{black}For diffusively-coupled systems, the semi-contraction theory
    leads to a sufficient condition for synchronization that is sharper, in
    general, than previously-known tests.}
\end{abstract}

\begin{IEEEkeywords}
  contraction theory, stability analysis, synchronization
\end{IEEEkeywords}

%
\IEEEpeerreviewmaketitle

\section{Introduction}

\paragraph*{Problem description} 

Strict contractivity is a useful and classical property of dynamical
systems, which ensures global exponential stability of a unique
equilibrium.  However, numerous example applications fail to satisfy this
property and exhibit richer dynamic properties. In this paper, motivated by
applications in network systems, we study systems that satisfy relaxed
versions of the standard contractivity conditions. We characterize the
implications of these relaxed conditions on the asymptotic behavior of the
dynamical system.  We aim to develop a generalized contractivity theory
that can explain the {\color{black}asymptotic} behavior of some classic
example systems, including affine averaging and flow systems, distributed
primal-dual dynamics, and {\color{black}networks of diffusively-coupled
  systems}.  {\color{black}Specifically, we aim to provide sharp conditions
  for exponential convergence and synchronization in network systems.}


\paragraph*{Literature review}

Studying contractivity of dynamical systems using matrix measures has a
long history that can be traced back to Lewis~\cite{DCL:49} and
Demidovi\v{c}~\cite{BPD:61a}. In the control community, {\color{black}
  matrix measures were adopted by Desoer and
  Vidyasagar~\cite{CAD-MV:1975,MV:78-book} and} contraction theory was
introduced by Lohmiller and Slotine~\cite{WL-JJES:98}. We refer
to~\cite{AP-AP-NVDW-HN:04} for a historical review and to the
surveys~\cite{ZA-EDS:14b,MdB-DF-GR-FS:16} for recent developments and
applications to consensus and synchronization in complex networks.


Several generalizations of contraction theory have been proposed in the
literature. In~\cite{JJS:03}, the notion of partial contraction is
introduced to study convergence of system trajectories to a specific
behavior or to a manifold. The idea is to impose contractivity only on a
part of the states of the system. Partial contraction with respect to the
$\ell_2$-norm has been further developed in~\cite{WW-JJES:05,QCP-JJS:07} to
study synchronization in complex networks. A similar notion is studied
by~\cite{MdB-DF-GR-FS:16} in the context of convergence to invariant
subspaces.  Extensions of contraction theory to non-Euclidean norms
{\color{black}and metrics} have been explored in the context of monotone
dynamical systems. For compartmental systems, contractivity with respect to
the $\ell_1$-norm has been used to study convergence to the equilibrium
point~\cite{EL-GC-KS:14}. {\color{black}For monotone systems,
  \cite{FF-RS:16} uses the contractivity of the so-called Hilbert metric to
  propose a nonlinear generalization of the Perron\textendash{}Frobenius
  theorem}. The connection between monotonicity and contractivity of
dynamical systems has been studied in detail in~\cite{SC:19,YK-BB-MC:20}.
Other extensions include contraction theory on Finsler
manifolds~\cite{FF-RS:14}, contraction theory on Riemannian
manifolds~\cite{JWSP-FB:12za}, transverse contraction for convergence to
limit cycles~\cite{IRM-JES:14}, and contraction after
transient~\cite{MM-EDS-TT:16}.


{\color{black}Contraction theory has been used to study consensus and
  synchronization problems for dynamical systems over networks. We next
  review some important example systems.}


{\color{black}Diffusively-coupled dynamics appear in different areas of
  science and engineering. Examples include (i) \emph{chemical
    reaction-diffusion} in biological tissues and the process of
  morphogenesis in developmental biology~\cite{AMT:52}, (ii) variants of
  the well-known \emph{Goodwin model} for oscillating autoregulated genes
  in cellular systems~\cite{BCG:65}, (iii) the well-known
  \emph{FitzHugh\textendash{}Nagumo model} describing neuronal interactions
  in the brain~\cite{RFH:61}, and (iv) \emph{cellular neural networks
    (CNNs)} for real-time large-scale signal processing in parallel
  computing~\cite{LOC-LY:88}.  Synchronization is arguably one of the most
  important collective behavior in networks of diffusively-coupled
  dynamics. Finding sharp conditions that ensure synchronization of
  diffusively-coupled dynamics is important for detecting stable pattern
  formations in morphogenesis, analyzing oscillatory behaviors in Goodwin
  model of cellular systems, and preventing disorders such as Parkinson’s
  disease in FitzHugh\textendash{}Nagumo model of neurons. Synchronization
  of diffusively-coupled dynamics has been extensively studied using
  contraction theory, {\color{black} e.g.,
    see~\cite{CWW-LOC:95b,WL-TC:06,WW-JJES:05,LS-RS:09,PD-MdB-GR:11,ZA-EDS:14}}.
  Indeed the notion of partial contraction was developed
  in~\cite{WW-JJES:05} precisely to study this class of problems. The early
  work~\cite{CWW-LOC:95b} introduces what is now known as the QUAD
  condition for studying synchronization of coupled oscillators.
  In~\cite{WL-TC:06}, local and global synchronization of linearly-coupled
  oscillators over directed graphs have been analyzed using the QUAD
  condition.  While explaining the relationship between QUAD condition and
  contractivity of vector fields, \cite{PD-MdB-GR:11} also studies
  diffusively-coupled identical nonlinear oscillators on undirected graphs.
  \cite{MA:11} proposes an $\ell_2$-norm condition for synchronization of
  diffusively-coupled oscillators with time-invariant interconnections. In
  many diffusively-coupled systems including chemical reaction-diffusion
  and Goodwin model, it has been shown~\cite{ZA-EDS:14,MC:07a} that
  quadratic Lyapunov functions and $\ell_2$-matrix measures provide
  conservative estimates for the onset of synchronization. In other
  diffusively-coupled systems including CNNs, certain properties of the
  internal dynamics make it easier to study synchronization with respect to
  non-Euclidean norms. For diffusively-coupled networks with time-varying
  interconnections, synchronization has been studied using non-Euclidean
  matrix measures in~\cite{MC:07a,ZA-EDS:14}. In~\cite{GR-MDB-EDS:13}, the
  synchronization of complex networks is studied using a contraction-based
  hierarchical approach via mixed norms.}

{\color{black}Primal-dual algorithms for centralized and distributed
  optimization have been widely studied and adopted in several
  applications; we refer to the~\cite{TY-XY-JW-DW:19} for a
  comprehensive survey.  In the last decade, there has been a renewed
  interest in convergence analysis~\cite{DF-FP:10} with more attention
  devoted to convergence guarantees in distributed
  optimization~\cite{JW-NE:11}. Estimates of convergence rates have
  been obtained only under strong convexity assumptions on the cost
  functions. In many important applications, obtaining convergence
  rates of these algorithms is crucial in order to provide safety
  guarantees and to analyze their discrete-time
  implementations~\cite{GQ-NL:19}. However, in some applications
  including power grids~\cite{NL-CL-ZC-SHL:13} and resource
  allocation~\cite{DD-MRJ:18}, the cost functions do not
  satisfy strong convexity assumptions and the existing literature does
  not provide estimates for their convergence rates. Recently,
  contraction theory has been used to study convergence and robustness
  of continuous-time primal-dual
  algorithms~\cite{HDN-TLV-KT-JJS:18,PCV-SJ-FB:19r} .}

\paragraph*{Contribution}

In this paper, we develop two generalizations of contraction theory that
broaden its range of applicability and allow for richer and more complex
dynamic behaviors. {\color{black}In the first generalization, called
  semi-contractivity, the notion of matrix measure is extended to
  matrix semi-measure. This allows the distance
between trajectories to increase in certain directions, thus requiring the
dynamical system to be contractive only on a certain subspace. In the
second generalization, called weak contractivity, the matrix measure
of the system Jacobian can vanish, thus allowing the distance between trajectories to
remain constant for some time.}

Using the notion of semi-norm, we present a geometric framework for studying
semi-contracting systems. We introduce the semi-measure of a matrix, which
is associated with a semi-norm, and we study the linear algebra of
semi-measures. We provide two optimization problems that establish a
connection between semi-measures of a matrix and its spectral abscissa on
certain subspaces. These optimization problems can be considered as
extensions of~\cite[Proposition 2.3]{MYL-LW:98} to semi-norms and
semi-measures, and they play a crucial role in studying the convergence
rate of trajectories in our framework. Next, we prove a generalization of
the well-known Coppel's inequality~\cite{WAC:1965}, which provides upper
and lower bounds on the semi-norm of flows of linear time-varying systems.
{\color{black} These results generalize the classic treatment of matrix
  measures and Coppel's inequality given for example
  in~\cite[Chapter~2]{MV:78-book}.}
{\color{black} Finally, for a given time-varying dynamical system and
  semi-norm, we introduce the notion of semi-contraction of the system and
  two notions of invariance of the semi-norm, namely shifted- and
  infinitesimal invariance.  We prove various results.  First, for a
  semi-contracting system with an infinitesimally invariant semi-norm, the
  semi-distance between any two trajectories vanishes exponentially fast.
  Second, for a semi-contracting system with a shifted-invariant semi-norm,
  every trajectory converges to the shifted kernel of the semi-norm.  Third
  and final, an infinitesimally invariant semi-norm is also
  shifted-invariant for a time-invariant dynamical system.}

{\color{black}The notion of semi-contraction is related to the notions of
  (i) partial contraction, as proposed in~\cite{JJS:03}, elaborated
  in~\cite{WW-JJES:05,QCP-JJS:07}, and surveyed in~\cite{MdB-DF-GR-FS:16};
  and (ii) horizontal contraction on Finsler manifolds, as proposed
  in~\cite{FF-RS:14} and elaborated in~\cite{DW:20}. Beside providing a
  comprehensive unifying framework, our treatment of semi-contraction
  differs from partial contraction since it allows arbitrary semi-measures
  and from horizontal contraction as it leads to sharper statements for a
  more restricted class of systems.  Appendix~\ref{app:semi-horizontal}
  contains a detailed comparison between semi-contraction and horizontal
  contraction.}


{\color{black} Weakly contracting (also referred to as non-expanding)
  dynamical systems are introduced, and their asymptotic behavior analyzed,
  in~\cite{GC:17,SC:19,PCV-SJ-FB:19r}.  These treatments, however, are
  based on specific norms and tailored to specific applications. In this
  paper, we develop two novel general results about the asymptotic behavior
  of weakly contracting systems.}  Our first result is a dichotomy for
{\color{black}systems that are weakly contracting with respect to an
  arbitrary norm}; each trajectory is bounded if and only if the system has
an equilibrium point. For systems that are weakly contracting with respect
to a weighted $\ell_1$-norm or a weighted $\ell_\infty$-norm, we show that
the dichotomy can be more elaborate; every trajectory converges to an
equilibrium point if and only if an equilibrium point exists.


We introduce the class of doubly-contracting
systems, i.e, systems which are both semi-contracting and
weakly contracting (possibly with respect to two different semi-norms). We
show that, for a doubly-contracting system whose equilibrium points form a
subspace of $\real^n$, every trajectory converges exponentially to an
equilibrium point. Moreover, we provide a convergence rate that is
independent of the norm and the semi-norm and depends only on the limiting
equilibrium point.

We provide several applications of our results to network systems: affine
averaging and flow systems, continuous-time primal-dual dynamics, and
{\color{black}networks of coupled dynamical systems. First, we study the
  affine averaging and affine flow systems using our framework. Affine
  averaging and affine flow models are the most basic examples of linear
  network systems and are embedded in numerous nonlinear network
  models. Any general methodology for the analysis of nonlinear network
  system needs to fully recover the properties of these systems. }We show
that affine averaging (respectively, affine flow) systems are weakly
contracting with respect to the $\ell_\infty$-norm (respectively,
$\ell_1$-norm) and semi-contracting with respect to an appropriate weighted
$\ell_\infty$-norm (respectively, $\ell_1$-norm).  {\color{black}Our theorem
  on doubly contracting systems then characterizes the complete asymptotic
  behavior of both systems, that is, global exponential convergence to a
  unique equilibrium or unbounded evolution as a function of the network
  parameters.}


{\color{black} Second, we study a distributed implementation of the
  continuous-time primal-dual algorithm for optimizing a function over a
  connected network. We show that, under some weak convexity assumption on
  the cost function, the distributed primal-dual algorithm is weakly
  contracting and every trajectory of the system converges exponentially to
  an equilibrium point. Additionally, we obtain an expression for the
  convergence rate to the equilibrium points.  Compared
  to~\cite{PCV-SJ-FB:19r}, we prove the global exponential convergence of
  the distributed primal-dual algorithm to the solution of the optimization
  problem and obtain the convergence rate when the cost function is weakly
  convex. }

  

Finally, we investigate synchronization in networks of diffusively-coupled
identical dynamical systems. For networks of diffusively-coupled identical
dynamics, we use our semi-contraction framework to propose novel sufficient
conditions for global synchronization. A key step in obtaining these
synchronization conditions is the introduction of a new class of norms
called $(2,p)$-tensor norms, with various useful properties. Compared to
the contraction-based approaches using the
$\ell_2$-norm~\cite{CWW-LOC:95b,WW-JJES:05,PD-MdB-GR:11}, our
synchronization conditions (i) are {\color{black}more general since they
  are} based on the weighted $\ell_p$-matrix measure of the Jacobian of
{\color{black}the dynamical system} for every $p\in [1,\infty]$, (ii)
provide an explicit rate of convergence to the synchronized trajectories,
and (iii) can recover the exact threshold of synchronization for
{\color{black}diffusively-coupled linear systems.}  Compared to the results
in~\cite{MC:07a,ZA-EDS:14}, our synchronization conditions (i) are
applicable to arbitrary undirected network topology, (ii) allow for a
arbitrary class of weighted $p$-norms, for every $p\in [1,\infty]$, and
(iii) demarcate the roles of internal dynamics and the network
connectivity.

\paragraph*{Paper Organization} Section~\ref{sec:notation} introduces
the notation. Section~\ref{sec:partial} introduces the matrix semi-measures
and use them to study the semi-contracting systems. Sections~\ref{sec:weak}
and~\ref{sec:double} study weakly contracting and doubly-contracting
systems, respectively. Finally, Section~\ref{sec:app} analyzes three
applications of our semi-and weak-contraction theory to network
systems.
This document is an ArXiv technical report and, compared with its journal
version, it additionally contains in Appendix~\ref{app:Theorem5-6-proofs} the proof of
Theorems~\ref{thm:measureporp} and~\ref{lem:computational}, in
Appendix~\ref{app:diffusive-extended} an extension of Theorem~\ref{thm:diffusive} for
diffusively-coupled identical dynamical systems, and in Appendix~\ref{sec:LV} a
contractivity-based analysis of the cooperative Lotka\textendash{}Volterra
population model.

\section{Notation}\label{sec:notation}
For a set $S\subseteq \real^n$, its interior, closure, and diameter are
denoted by $\mathrm{int}(S)$, $\mathrm{cl}(S)$, and $\mathrm{diam}(S)$,
respectively. The $n\times n$ identity matrix is $I_n$ and the all-ones and
all-zeros column vectors of length $n$ are $\vect{1}_n$ and $\vect{0}_n$,
respectively. For $A\in \complex^{n\times m}$, the conjugate transpose of
$A$ is $A^{H}$, the real part of $A$ is $\Re(A)$, the range of $A$ is
$\Img(A)$ and the kernel of $A$ is $\Ker(A)$. The Moore\textendash{}Penrose
inverse of $A$ is the unique matrix $A^{\dagger}\in \complex^{m\times n}$
such that $AA^{\dagger}A = A$, $A^{\dagger}AA^{\dagger}= A^{\dagger}$ and
$AA^{\dagger}$ and $A^{\dagger}A$ are Hermitian matrices. It can be shown
that $AA^{\dagger }$ is the orthogonal projection onto $\Img(A)$ and
$A^{\dagger}A$ is the orthogonal projection onto $\Img(A^{H})$.  Let
$\lambda_1(A),\ldots,\lambda_n(A)$ and $\mathrm{spec}(A)$ denote the
eigenvalues and the spectrum of $A \in\complex^{n\times
  n}$. {\color{black}Given two real symmetric matrices $A,B\in
  \real^{n\times n}$, we write $A\preceq B$ if $B-A$ is positive semi-definite. Given a vector
subspace $\mcS\subseteq \complex^{n}$, a vector $v\in \complex^n$, and
a matrix $A\in \complex^{n\times n}$, the orthogonal complement of $S$ is
$S^{\perp}$ and we define $v+\mcS = \setdef{v+u}{u\in \mcS}$ and $A\mcS := \setdef{Au}{u\in \mcS}$.} The vector
subspace $\mcS\subseteq \complex^{n}$ is invariant under $A\in
\complex^{n\times n}$ if $A\mcS\subseteq\mcS$. Given $A\in
\complex^{n\times n}$ and a vector subspace $\mcS\subseteq \complex^{n}$,
$\mcS$ is invariant under $A$ if and only if $\mcS^{\perp}$ is invariant
under $A^{H}$. Given $A\in \complex^{n\times n}$ and a vector subspace
$\mcS\subseteq \complex^n$ invariant under $A$, define
\begin{align*}
  \mathrm{spec}_{\mcS}(A) :=
  \setdef{\lambda\in\mathrm{spec}(A)}{\exists v\in\mcS\mbox{
      s.t. } Av = \lambda v}.
\end{align*}
Note
$\mathrm{spec}(A) = \mathrm{spec}_{\mcS}(A) \cup
\mathrm{spec}_{\mcS^{\perp}}(A^{H}) $. {\color{black}Denote the
  spectral abscissa of $A$ by $\alpha(A)$} and define the spectral
abscissa of $A$ restricted to $\mcS$ by
$\alpha_{\mcS}(A)
:=\max\setdef{\Re(\lambda)}{\lambda\in\mathrm{spec}_{\mcS}(A)}$. For
$A\in \complex^{n\times n}$ such that $\Re(\lambda)\le 0$, for every
$\lambda\in \mathrm{spec}(A)$, the essential spectral abscissa of $A$
is defined by
\begin{align*}
  \alpha_{\mathrm{ess}}(A) :=\max\setdef{\Re(\lambda)}{\lambda\in\mathrm{spec}(A)-\{0\}}.
\end{align*}
The absolute value of $z\in \complex$ is denoted by $|z|$. A norm
$\|\cdot\|$ on $\complex^n$ is \emph{absolute} if $\|x\|=\|y\|$ for
every $x,y\in \complex^n$ such that $|x_i|=|y_i|$, $i\in\{1,\ldots,n\}$. Let $x\in \real^n$, $r>0$, and
$\|\cdot\|$ be a norm on $\real^n$. Then the open ball of $\|\cdot\|$
centered at $x$ with radius $r$ is $B_{\|\cdot\|}(x,r)=\setdef{y\in
  \real^n}{\|y-x\|<r}$ and the closed ball of $\|\cdot\|$ centered at $x$
with radius $r$ is $\overline{B}_{\|\cdot\|}(x,r)=\setdef{y\in
  \real^n}{\|y-x\|\le r}$. 
For $p\in [1,\infty]$, denote the $\ell_p$-norm on $\real^n$ by
$\|\cdot\|_p$.  All $\ell_p$-norm are absolute. The
$\ell_p$-norm is polyhedral if and only if $p\in \{1,\infty\}$.  For any
two complex matrices $A$ and $B$, their Kronecker product is denoted by
$A\otimes B$. Given two vector spaces $V$ and
$W$, the tensor product space $V\otimes W$ is given by $
V\otimes W = \mathrm{span}\setdef{v\otimes w}{v\in V,\; w\in W}$. Consider the time-varying dynamical system:
\begin{align}\label{eq:TV-nonlinear}
  \dot{x} = f(t,x),\quad t\in \real_{\ge 0}, x\in \real^n.
\end{align}
We assume that $(t,x)\mapsto f(t,x)$ is twice-differentiable in $x$ and essentially
bounded in $t$. Denote the flow of~\eqref{eq:TV-nonlinear} starting from $x_0$ by $t\mapsto
\phi(t,x_0)$. A set $\mcS\subseteq \real^n$ is invariant with respect
to~\eqref{eq:TV-nonlinear} if $\phi(t,\mcS)\subseteq \mcS$ for every $t\in
\real_{\ge 0}$.  A vector field $\map{X}{\real^n}{\real^n}$ is piecewise
real analytic if there exist closed sets $\{\Sigma_i\}_{i=1}^{m}$ which
partition $\real^n$ and $X$ is real analytic on $\mathrm{int}(\Sigma_i)$,
for every $i\in \{1,\ldots,m\}$.

\section{Semi-contracting systems}\label{sec:partial}

{\color{black}In this section we provide a
geometric framework for semi-contracting systems via semi-norms}.

\subsection{Linear algebra and matrix semi-measures}\label{sec:semi-measure}

We start with semi-norms and associated semi-measures.

\begin{definition}[Semi-norms]
  A function $\verti{\cdot}:\real^n \to \real_{\ge 0}$ is a
  \emph{semi-norm} on $\real^n$, if 
  \begin{enumerate}
  \item $\verti{cv} = |c| \verti{v}$, for every $v\in \real^n$ and
    $c\in \real$;
  \item $\verti{v+w} \le \verti{v}+\verti{w}$, for every $v,w\in
    \real^n$. 
  \end{enumerate}
\end{definition}
For a semi-norm $\verti{\cdot}$, its kernel is defined by
\begin{align*}
  \Ker \verti{\cdot} = \setdef{v\in \real^n}{\verti{v} = 0}. 
\end{align*}
It is easy to see that $\Ker \verti{\cdot}$ is a subspace of $\real^n$ and
$\verti{\cdot}$ is a norm on the vector space
$\Ker\verti{\cdot}^{\perp}$. Semi-norms can naturally arise from norms. Let
$k\le n$, and $\|\cdot\|:\real^k\to \real_{\ge 0}$ be a norm on $\real^k$
and $R\in \real^{k\times n}$. Then the $R$-weighted semi-norm on $\real^n$
associated with the norm $\|\cdot\|$ on $\real^k$ is defined by {\color{black}
\begin{align}\label{eq:R-seminorm}
  \verti{v}_{R} = \|Rv\|,\quad\mbox{ for all } v\in \real^n .
\end{align}}
It is easy to see that the $R$-weighted semi-norm {\color{black}$\verti{\cdot}_{R}$} is a norm
if and only if $k=n$ and $R$ is invertible.

A semi-norm $\verti{\cdot}$ on $\real^n$
naturally induces a semi-norm on the space of real-valued matrices
$\real^{n\times n}$.

\begin{definition}[Induced semi-norm]\label{def:induced-norm}
  Let $\verti{\cdot}:\real^n \to \real_{\ge 0}$ be a semi-norm on
  $\real^n$, the induced semi-norm on $\real^{n\times n}$ (which without
  any confusion, we denote again by $\verti{\cdot}$) is defined by
  \begin{align*}
    \verti{A} = \sup\setdef{\verti{Av}}{\verti{v}=1, \;
      v\perp\Ker\verti{\cdot}}.
  \end{align*}
\end{definition}

The following properties of induced semi-norms are known~\cite{VVK:83} and
we omit the proof in the interest of brevity.
      
\begin{proposition}[Properties of induced semi-norms]\label{induced-semi-norm}
      Let $\verti{\cdot}:\real^n \to \real_{\ge 0}$ be a semi-norm on
      $\real^n$ and denote the induced semi-norm on $\real^{n\times n}$
      again by $\verti{\cdot}$. Then, for every $A,B\in \real^{n\times n}$
      and $c\in \real$,
    \begin{enumerate}
    \item\label{p1:trivial} $\verti{I_n} = 1$, $\verti{A}\ge 0$, and
      $\verti{cA} = |c| \verti{A}$;
    \item\label{p3:triangle1} $\verti{A+B}\le \verti{A} +\verti{B}$;
    \item\label{p5:triangle2} $\verti{Av}\le \verti{A}\verti{v}$, for every
      $v\perp \Ker\verti{\cdot}$.
      \end{enumerate}
    \end{proposition}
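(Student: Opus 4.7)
The plan is to verify each of the three properties directly from the definition of the induced semi-norm, exploiting the fact that $\verti{\cdot}$ restricted to the subspace $\Ker\verti{\cdot}^{\perp}$ is a genuine norm, so that standard arguments from the norm case apply after restricting attention to that subspace.

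For item~\ref{p1:trivial}, I would first note that $\verti{A}\ge 0$ is immediate since the supremum is taken over non-negative quantities. For $\verti{I_n}=1$, the defining set reduces to $\setdef{\verti{v}}{\verti{v}=1,\;v\perp\Ker\verti{\cdot}}=\{1\}$, provided this set is non-empty; non-emptiness follows because $\verti{\cdot}$ is a norm on $\Ker\verti{\cdot}^{\perp}$, so unit vectors exist there (unless $\Ker\verti{\cdot}^{\perp}=\{0\}$, a degenerate case that I would dismiss by convention). Scalar homogeneity $\verti{cA}=|c|\verti{A}$ is then obtained by pulling $|c|$ outside using the homogeneity of the underlying semi-norm applied to $\verti{cAv}=|c|\verti{Av}$ and taking suprema.

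For item~\ref{p3:triangle1}, I would use the pointwise triangle inequality $\verti{(A+B)v}\le \verti{Av}+\verti{Bv}$ valid for every $v\in\real^n$, and then upper-bound each term on the right by its supremum over $\{\verti{v}=1,v\perp\Ker\verti{\cdot}\}$ before passing to the supremum on the left-hand side.

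The only step that requires slight care is item~\ref{p5:triangle2}, because the inequality must be extended from unit vectors in $\Ker\verti{\cdot}^{\perp}$ to all vectors in that subspace. The idea is the standard normalization trick: for $v\perp\Ker\verti{\cdot}$ with $\verti{v}\ne 0$, set $u=v/\verti{v}$, which still lies in $\Ker\verti{\cdot}^{\perp}$ and satisfies $\verti{u}=1$, so $\verti{Au}\le \verti{A}$ and the bound follows by homogeneity. The edge case $\verti{v}=0$ with $v\perp\Ker\verti{\cdot}$ is handled by recalling that $\verti{\cdot}$ is a norm on $\Ker\verti{\cdot}^{\perp}$, forcing $v=0$ and making the inequality trivial. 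No step here presents a real obstacle; the main subtlety is simply to keep track of the orthogonality constraint throughout so that the restriction of $\verti{\cdot}$ behaves as an honest norm.
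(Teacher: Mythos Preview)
Your argument is correct and complete; each item follows exactly as you outline, with the normalization trick in item~\ref{p5:triangle2} being the only place needing the observation that $\verti{\cdot}$ is a genuine norm on $\Ker\verti{\cdot}^{\perp}$. The paper itself does not give a proof of this proposition (it cites~\cite{VVK:83} and omits the argument in the interest of brevity), so there is nothing to compare against.
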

    
      \begin{definition}[Matrix semi-measures]
         Let $\verti{\cdot}:\real^n \to \real_{\ge 0}$ be a semi-norm on
         $\real^n$ and we denote the induced semi-norm on $\real^{n\times
           n}$ again by $\verti{\cdot}$. Then the \emph{matrix
           semi-measure} associated with $\verti{\cdot}$ is defined by
    \begin{align}\label{eq:matrix-seminorm}
      \mu_{\verti{\cdot}} (A) = \lim_{h\to 0^+} \frac{\verti{I_n + hA} -
        1}{h}.
      \end{align}
 \end{definition}

 \begin{theorem}[Properties of matrix semi-measures]\label{thm:measureporp}
   Let $\verti{\cdot}:\real^n \to \real_{\ge 0}$ be a semi-norm on
   $\real^n$ and let $\mu_{\verti{\cdot}}$ be the associated matrix
   semi-measure. Then, for every $A,B\in \real^{n\times n}$, 
   \begin{enumerate}
   \item\label{p1:well-def} $\mu_{\verti{\cdot}}(A)$ is well-defined;
   \item \label{p2:tri} $\mu_{\verti{\cdot}}(A+B) \le
     \mu_{\verti{\cdot}}(A) + \mu_{\verti{\cdot}}(B)$;
   \item\label{p4:lipschitz}
     $|\mu_{\verti{\cdot}}(A)-\mu_{\verti{\cdot}}(B)|\le \verti{A-B}$.
   \end{enumerate}
   Moreover, if $\Ker\verti{\cdot}$ is invariant under $A$, then
  \begin{enumerate}\setcounter{enumi}{3}
  \item\label{p3:realpart} $\alpha_{\Ker\verti{\cdot}^{\perp}}(A^\top)\le
    \mu_{\verti{\cdot}}(A)$.
   \end{enumerate}
 \end{theorem}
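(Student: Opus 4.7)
The plan is to establish the four properties in order, extending the classical matrix-measure arguments (see~\cite{MV:78-book}) to the semi-norm setting for parts (i)--(iii), and then reducing part (iv) to the classical case via a block decomposition induced by the $A$-invariance of $\Ker\verti{\cdot}$. For (i), I would show that the difference quotient $g(h)=\frac{\verti{I_n+hA}-1}{h}$ is non-decreasing on $(0,\infty)$ and bounded below, so the right limit at $0$ exists. Monotonicity follows from the convex-combination identity $I_n+hA=(1-h/s)I_n+(h/s)(I_n+sA)$ for $0<h<s$ together with the triangle inequality and $\verti{I_n}=1$ from Proposition~\ref{induced-semi-norm}(i); a lower bound $g(h)\ge -\verti{A}$ follows from the reverse triangle inequality. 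Part (ii) follows by applying the triangle inequality to the identity $I_n+h(A+B)=\tfrac12(I_n+2hA)+\tfrac12(I_n+2hB)$, subtracting $1$, dividing by $h$, and letting $h\to 0^+$. Part (iii) follows from $|\verti{I_n+hA}-\verti{I_n+hB}|\le h\verti{A-B}$ and the same limit procedure.

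The main work is part (iv). With respect to the orthogonal decomposition $\real^n=V\oplus K$, where $V=\Ker\verti{\cdot}^{\perp}$ and $K=\Ker\verti{\cdot}$, write $v=v_V+v_K$. The assumed $A$-invariance of $K$ yields the block form
\[
A = \begin{pmatrix} A_{VV} & 0 \\ A_{KV} & A_{KK} \end{pmatrix}.
\]
A short triangle-inequality argument using $\verti{v_K}=0$ shows $\verti{v_V+v_K}=\verti{v_V}$ for every decomposition, so $\verti{\cdot}$ restricts to a genuine norm $\|\cdot\|_0$ on $V$ and $\verti{v}=\|v_V\|_0$ in general. Because $I_n+hA$ preserves $K$ with the same block structure, its induced semi-norm depends only on the $V$-block: for any $v\in V$ with $\|v\|_0=1$ one has $\verti{(I_n+hA)v}=\|(I_V+hA_{VV})v\|_0$, and therefore $\verti{I_n+hA}=\|I_V+hA_{VV}\|_0$. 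Passing to the limit in the definition of $\mu_{\verti{\cdot}}$ gives $\mu_{\verti{\cdot}}(A)=\mu_{\|\cdot\|_0}(A_{VV})$, and the classical spectral-abscissa inequality for ordinary norms on the finite-dimensional space $(V,\|\cdot\|_0)$ yields $\mu_{\|\cdot\|_0}(A_{VV})\ge \alpha(A_{VV})$. Finally, the block form of $A^\top$ shows $\mathrm{spec}_{\Ker\verti{\cdot}^{\perp}}(A^\top)=\mathrm{spec}(A_{VV}^\top)=\mathrm{spec}(A_{VV})$, so $\alpha_{\Ker\verti{\cdot}^{\perp}}(A^\top)=\alpha(A_{VV})\le \mu_{\verti{\cdot}}(A)$.

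The main obstacle lies in part (iv): verifying carefully that the induced semi-norm of $I_n+hA$ genuinely collapses to the ordinary operator norm of its $V$-block under the $A$-invariance assumption, so that the classical matrix-measure spectral bound can be invoked on the finite-dimensional normed space $(V,\|\cdot\|_0)$. Careful bookkeeping of the transpose is also essential so that the $A^\top$-invariance of $V$ (dual to the $A$-invariance of $K$) correctly identifies $\mathrm{spec}_{\Ker\verti{\cdot}^{\perp}}(A^\top)$ with $\mathrm{spec}(A_{VV})$ and the restricted spectral abscissa aligns with $\alpha(A_{VV})$.
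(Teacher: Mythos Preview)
Your arguments for parts~(i)--(iii) coincide with the paper's: both establish well-definedness by showing the difference quotient is monotone and bounded below (you phrase monotonicity via the convex-combination identity, the paper via the equivalent inequality $\verti{(1/h_1)I_n+A}-\verti{(1/h_2)I_n+A}\le 1/h_1-1/h_2$), and both obtain~(ii) and~(iii) immediately from the triangle inequality for the induced semi-norm.

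For part~(iv) you take a genuinely different route. The paper works directly with the complexified semi-norm: it selects an eigenvector $v\in\Ker\verti{\cdot}^{\perp}$ of $A^\top$ with eigenvalue $\lambda$, normalizes it so that $\verti{v}_{\complex}=1$, and bounds $\Re(\lambda)=\lim_{h\to 0^+}\bigl(\verti{(1+h\lambda)v}_{\complex}-1\bigr)/h$ by the semi-measure. Your approach instead reduces the problem to the classical norm case: you block-decompose $A$ along $V\oplus K$ with $V=\Ker\verti{\cdot}^{\perp}$, observe that $\verti{I_n+hA}=\|I_V+hA_{VV}\|_0$ for the genuine norm $\|\cdot\|_0=\verti{\cdot}|_V$, and then invoke the standard inequality $\mu_{\|\cdot\|_0}(A_{VV})\ge\alpha(A_{VV})$ together with the spectral identification $\mathrm{spec}_{V}(A^\top)=\mathrm{spec}(A_{VV}^\top)=\mathrm{spec}(A_{VV})$. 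Your reduction has two advantages: it yields as a by-product the useful identity $\mu_{\verti{\cdot}}(A)=\mu_{\|\cdot\|_0}(A_{VV})$, and it lands directly on $\mu_{\verti{\cdot}}(A)$ rather than $\mu_{\verti{\cdot}}(A^\top)$. Indeed, the paper's eigenvector computation, as written, naturally produces a bound on $\mu_{\verti{\cdot}}(A^\top)$; to obtain the stated bound on $\mu_{\verti{\cdot}}(A)$ one must instead use a vector $w\in V$ with $\mathcal{P}_V Aw=\lambda w$ (a right eigenvector of $A_{VV}$, which exists since $\mathrm{spec}(A_{VV})=\mathrm{spec}(A_{VV}^\top)$), a step your block-decomposition argument handles transparently.
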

 \begin{proof}
   We refer to Appendix~\ref{app:Theorem5-6-proofs} for the proof. 
   \end{proof}

Given the $R$-weighted semi-norm {\color{black}$\verti{\cdot}_{R}$} as defined
in~\eqref{eq:R-seminorm}, we denote its induced semi-norm on
$\real^{n\times n}$ by {\color{black}$\verti{\cdot}_{R}$} and its matrix semi-measure by
$\mu_{R}$. Specifically, for the $\ell_p$-norm on $\real^k$, we let
{\color{black}$\verti{\cdot}_{p,R}$} and $\mu_{p,R}$ denote the associated $R$-weighted
semi-norm on $\real^n$ and matrix semi-measure, respectively.
 
\begin{theorem}[Computation of semi-measures]\label{lem:computational}
    Let $\|\cdot\|$ be a norm with associated matrix measure $\mu$, $R\in
  \real^{k\times n}$, for $k\le n$, be a full rank matrix,
  {\color{black} and $P=R^{\top}R\in \real^{n\times n}$.} Then, for each
  $A\in \real^{n\times n}$, $\xi\in \real^n_{\ge 0}$, {\color{black}and $c\in
    \real$},
  \begin{enumerate}
  \item\label{p1:seminorm} {\color{black}$\verti{A}_R = \|RAR^{\dagger}\|$},
  \item\label{p2:seminorm} $\mu_{R} (A) = \mu(RAR^{\dagger})$,
  \item\label{p4:1-norm}$\mu_{1,\diag(\xi)}(A) =\max_{j:\xi_j\ne
      0}\left\{a_{jj} + \xi_j\sum_{i: \xi_i\ne 0}
      \frac{|a_{ij}|}{\xi_i}\right\}$,
  \item\label{p4:inf-norm}$\mu_{\infty,\diag(\xi)}(A) =\max_{i:\xi_i\ne
      0}\left\{a_{ii} + \xi_i\sum_{j: \xi_j\ne 0}
      \frac{|a_{ij}|}{\xi_j}\right\}$,
  \item\label{p5:2-norm} {\color{black}$\mu_{2,R}(A)\le c$ if and only
      if, for each $x\in \Ker(P)^{\perp}$,}
      \begin{align}\label{eq:semi-demidovich}
        x^{\top}(PA+A^{\top}P - 2c P)x\le 0,
      \end{align}
\end{enumerate}
Moreover, if $\Ker(R)$ is invariant under $A$, then
\begin{enumerate}\setcounter{enumi}{5}
  \item\label{p6:2-norm-inv} {\color{black}$\mu_{2,R}(A)\le c$ if and only
      if $PA+A^{\top}P \preceq 2c P$,}
\item \label{p3:2-norm} $\mu_{2,R}(A) =
  \frac{1}{2}\alpha_{\Ker R^{\perp}}\left(A+ P^{\dagger}A^{\top}P\right)$.
  \end{enumerate}
\end{theorem}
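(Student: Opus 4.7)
The plan is to establish (i) and (ii) as the cornerstone identities, then derive the remaining parts from them. For (i), since $R\in\real^{k\times n}$ has full row rank, $RR^{\dagger}=I_k$ and $R^{\dagger}R$ is the orthogonal projection onto $\Img(R^{\top})=\Ker(R)^{\perp}=\Ker\verti{\cdot}_R^{\perp}$. Parametrizing by $w=Rv$ with $v\in\Ker(R)^{\perp}$ gives the bijection $v=R^{\dagger}w$ and $\verti{v}_R=\|w\|$, and the definition of the induced semi-norm immediately yields $\verti{A}_R=\sup\{\|RAR^{\dagger}w\|:\|w\|=1\}=\|RAR^{\dagger}\|$. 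Part (ii) follows at once: $\verti{I_n+hA}_R=\|R(I_n+hA)R^{\dagger}\|=\|I_k+hRAR^{\dagger}\|$, so letting $h\to 0^+$ yields $\mu_R(A)=\mu(RAR^{\dagger})$.

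For (iii) and (iv), the diagonal matrix $\diag(\xi)$ is not full rank when $\xi$ has zero entries, so I would first reduce to the nonzero-index subset $J=\{i:\xi_i\ne 0\}$: let $S_J$ be the $|J|\times n$ row-selection matrix and take the effective full-rank matrix $R=\diag(\xi_J)S_J$, whose associated semi-norm coincides with $\verti{\cdot}_{p,\diag(\xi)}$. A direct computation then gives $RAR^{\dagger}=\diag(\xi_J)\,A_J\,\diag(\xi_J)^{-1}$, where $A_J$ is the principal $J$-submatrix of $A$. Applying (ii) together with the standard column-sum expression for $\mu_1$ and row-sum expression for $\mu_\infty$ of this diagonal similarity yields the claimed formulas.

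For (v), I would combine (ii) with the standard equivalence $\mu_2(N)\le c\iff N+N^{\top}\preceq 2cI_k$ and then test against $w=Rx$ for $x\in\Ker(P)^{\perp}$, which ranges bijectively over $\real^k$. Using $\|w\|^2=x^{\top}Px$, $R^{\dagger}Rx=x$ on $\Ker(P)^{\perp}$, and $w^{\top}RAR^{\dagger}w=x^{\top}PAx$, the matrix inequality converts to $x^{\top}(PA+A^{\top}P-2cP)x\le 0$ for all $x\in\Ker(P)^{\perp}$. For (vi), the invariance of $\Ker(R)=\Ker(P)$ under $A$ cleans up the quadratic form: writing $x=x_1+x_2$ with $x_1\in\Ker(P)^{\perp}$ and $x_2\in\Ker(P)$, one checks $Px_2=0$ and $PAx_2=R^{\top}RAx_2=0$ (since $Ax_2\in\Ker(R)$), so cross and $x_2$-only terms vanish and the form equals its value on $x_1$ alone. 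Hence the restricted inequality of (v) upgrades to the global inequality $PA+A^{\top}P\preceq 2cP$.

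For (vii), I would use (vi) to identify $2\mu_{2,R}(A)$ with the largest $\lambda$ such that $(PA+A^{\top}P)v=2\lambda Pv$ for some $v\in\Ker(P)^{\perp}$, a Hermitian generalized eigenvalue problem on the subspace where $P$ is positive definite. Multiplying by $P^{\dagger}$ and using $P^{\dagger}Pv=v$ on $\Ker(P)^{\perp}$ converts it to $(A+P^{\dagger}A^{\top}P)v=2\lambda v$; conversely, if $(A+P^{\dagger}A^{\top}P)v=\lambda v$ for $v\in\Ker(R)^{\perp}$, multiplying by $P$ and using that $A^{\top}Pv\in\Ker(R)^{\perp}=\Img(P)$ so that $PP^{\dagger}A^{\top}Pv=A^{\top}Pv$ recovers the generalized eigenvalue equation and in particular forces $\lambda$ to be real. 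I expect the main obstacle to be making this two-sided correspondence airtight: $\Ker(R)^{\perp}$ is not in general invariant under $M:=A+P^{\dagger}A^{\top}P$, so $\alpha_{\Ker R^{\perp}}(M)$ must be interpreted in the paper's eigenvector-in-subspace sense, and one must carefully verify that the supremum over eigenvalues of $M$ with eigenvectors in $\Ker(R)^{\perp}$ agrees with the supremum of generalized eigenvalues on $\Ker(P)^{\perp}$.
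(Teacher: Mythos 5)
Parts \ref{p1:seminorm}--\ref{p6:2-norm-inv} of your proposal follow essentially the same route as the paper: the parametrization $w=Rv$ on $\Ker(R)^{\perp}$ with $RR^{\dagger}=I_k$ and $R^{\dagger}R$ the orthogonal projection for \ref{p1:seminorm}--\ref{p2:seminorm}, the reduction to the full-rank submatrix and the standard $\mu_1$/$\mu_\infty$ formulas for \ref{p4:1-norm}--\ref{p4:inf-norm}, and the substitution $x=R^{\dagger}\alpha$ (equivalently your decomposition $x=x_1+x_2$, which is the same computation since $PAx_2=\vect{0}_n$ and $Px_2=\vect{0}_n$ kill the cross terms) for \ref{p5:2-norm}--\ref{p6:2-norm-inv}. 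These parts are correct.

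Part \ref{p3:2-norm} is where your argument has a genuine gap, and you have in fact located it yourself but not closed it. Your reduction from the generalized eigenproblem $(PA+A^{\top}P)v=2\lambda Pv$, $v\in\Ker(P)^{\perp}$, to $(A+P^{\dagger}A^{\top}P)v=2\lambda v$ requires $P^{\dagger}PAv=Av$, i.e.\ $Av\in\Img(P)=\Ker(R)^{\perp}$. The hypothesis only gives invariance of $\Ker(R)$ under $A$, hence invariance of $\Ker(R)^{\perp}$ under $A^{\top}$, not under $A$. What multiplication by $P^{\dagger}$ actually yields is $(\Pi A+P^{\dagger}A^{\top}P)v=2\lambda v$ with $\Pi=P^{\dagger}P$, so that $Mv=2\lambda v+(I_n-\Pi)Av$ with the residual lying in $\Ker(R)$; in block form with respect to $\real^n=\Ker(R)\oplus\Ker(R)^{\perp}$ the generalized eigenvectors see only the compression $M_{22}$ of $M$ to $\Ker(R)^{\perp}$, whereas membership in $\spec_{\Ker R^{\perp}}(M)$ additionally requires the off-diagonal block to annihilate the eigenvector. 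Consequently, the direction you do prove (an eigenvector of $M$ in $\Ker(R)^{\perp}$ gives a generalized eigenvector) only yields $\tfrac{1}{2}\alpha_{\Ker R^{\perp}}(M)\le\mu_{2,R}(A)$; the converse inequality, i.e.\ that the maximal generalized eigenvalue $2\mu_{2,R}(A)$ is actually attained in $\spec_{\Ker R^{\perp}}(M)$, is exactly the step you defer to ``careful verification'' and is the substance of the claim.

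The paper avoids the generalized eigenproblem altogether: it computes $\mu_{2,R}(A)=\mu_2(RAR^{\dagger})=\tfrac{1}{2}\lambda_{\max}\bigl(RAR^{\dagger}+(R^{\dagger})^{\top}A^{\top}R^{\top}\bigr)$, observes via $RP^{\dagger}=(R^{\dagger})^{\top}$ and $PR^{\dagger}=R^{\top}$ that this symmetric matrix equals $R\bigl(A+P^{\dagger}A^{\top}P\bigr)R^{\dagger}$, and then invokes a separate lemma (Lemma~\ref{lem:useful}) identifying $\spec(R\,\cdot\,R^{\dagger})$ with $\spec_{\Ker R^{\perp}}(\cdot)$; that lemma is proved by exhibiting one containment and then matching cardinalities ($k$ eigenvalues on each side). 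If you want to complete your route, you need an argument of that type (or an explicit clarification that $\alpha_{\Ker R^{\perp}}$ is to be read on the compression $M_{22}$); as written, the two-sided correspondence is asserted rather than established.
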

\begin{proof}
   We refer to Appendix~\ref{app:Theorem5-6-proofs} for the proof. 
   \end{proof}

  \subsection{Spectral abscissa as an optimal matrix measure}\label{sec:abscissa}

  Theorem~\ref{thm:measureporp}\ref{p3:realpart} shows that the
  semi-measures of a matrix is lower bounded by its spectral abscissa. In
  the next theorem we study this gap and we show that on the space of all
  semi-measures this lower bound is tight. In this part, we use the
  generalization of the results in Section~\ref{sec:semi-measure} to
  $\complex^n$. Specifically, we use
  Theorem~\ref{lem:computational}\ref{p1:seminorm} and~\ref{p2:seminorm} and
  and Theorem~\ref{thm:measureporp}\ref{p3:realpart} for matrices,
  norms, and matrix measures defined on $\complex^n$. 

\begin{theorem}[Optimal matrix measures and spectral abscissa]\label{thm:abscissa}
  Let $A\in\complex^{n\times{n}}$ and let $\mcS\subseteq\complex^n$ be a
  $(n-k)$-dimensional subspace which is invariant under $A$. Then
  \begin{enumerate}[leftmargin=4mm]
  \item\label{omm:p1:semi-norm} $\alpha_{\mcS^{\perp}}(A^H) =
    \inf\setdef{\mu_{\verti{\cdot}}(A)}{\verti{\cdot} \mbox{ a semi-norm
        with kernel $\mcS$}}$;
  \item\label{omm:p2:R} let $\|\cdot\|$ be an absolute norm with its
    associated matrix measure $\mu$, then $\alpha_{\mcS^{\perp}}(A^{H}) =
    \inf \setdef{\mu_R(A)}{R\in \complex^{k\times n},\Ker(R) = \mcS}$.
    \end{enumerate}
\end{theorem}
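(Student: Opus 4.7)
The plan is to bracket $\alpha_{\mcS^\perp}(A^H)$ from above and below, using Theorem~\ref{thm:measureporp}\ref{p3:realpart} for the easy direction and a Jordan-form rescaling on $\mcS^\perp$ for the matching upper bound. For the lower bound I would apply the complex version of Theorem~\ref{thm:measureporp}\ref{p3:realpart}: for any semi-norm $\verti{\cdot}$ with $\Ker\verti{\cdot}=\mcS$, the assumed $A$-invariance of $\mcS$ yields $\alpha_{\mcS^\perp}(A^H)\le\mu_{\verti{\cdot}}(A)$, and specializing to $\verti{\cdot}=\verti{\cdot}_R$ with $\Ker R=\mcS$ gives the corresponding lower bound in (ii). Moreover, since every $R$-weighted semi-norm is a semi-norm with kernel $\Ker R$, the infimum in (i) is bounded above by the infimum in (ii); thus it suffices to prove the upper bound in (ii).

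For the upper bound I would fix $\epsilon>0$ and pick an orthonormal basis $\{e_1,\dots,e_n\}$ of $\complex^n$ with $\spn\{e_1,\dots,e_{n-k}\}=\mcS$. In this basis the $A$-invariance of $\mcS$ puts $A$ in block-upper-triangular form with diagonal blocks $A_{11}\in\complex^{(n-k)\times(n-k)}$ and $A_{22}\in\complex^{k\times k}$, and the block-lower-triangular form of $A^H$ identifies $A^H|_{\mcS^\perp}$ with $A_{22}^H$, so $\alpha_{\mcS^\perp}(A^H)=\alpha(A_{22})$. Invoking the classical result \cite[Proposition~2.3]{MYL-LW:98} on $\complex^k$ with the chosen absolute norm $\|\cdot\|$, I would produce an invertible $D\in\complex^{k\times k}$ with $\mu(DA_{22}D^{-1})\le\alpha(A_{22})+\epsilon$, concretely by conjugating each Jordan block of $A_{22}$ by $\diag(1,\epsilon,\epsilon^2,\dots)$ to shrink its superdiagonal to order~$\epsilon$. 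Setting $R:=[\,0\mid D\,]\in\complex^{k\times n}$ in the chosen basis, orthonormality forces $R^\dagger$ to be the $n\times k$ matrix with an $(n-k)\times k$ zero block stacked above $D^{-1}$; hence $RAR^\dagger=DA_{22}D^{-1}$, and Theorem~\ref{lem:computational}\ref{p2:seminorm} delivers $\mu_R(A)=\mu(DA_{22}D^{-1})\le\alpha_{\mcS^\perp}(A^H)+\epsilon$. Letting $\epsilon\to 0$ closes both upper bounds.

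The main obstacle is the classical Lyapunov-type fact that $\alpha(B)=\inf_D \mu(DBD^{-1})$ over invertible $D$ for absolute norms; invoked as a black box, it carries all the analytic content of the argument. The rest is a clean algebraic reduction of the semi-measure problem on $\complex^n$ to an ordinary-measure problem on the $k$-dimensional space $\mcS^\perp$, made transparent by the orthonormal adapted basis together with the explicit identity $\mu_R(A)=\mu(RAR^\dagger)$ of Theorem~\ref{lem:computational}\ref{p2:seminorm}; the only minor bookkeeping is confirming the block structures of $A$, $A^H$, $R$, and $R^\dagger$ in that basis.
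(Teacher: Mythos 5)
Your proposal is correct, and the lower-bound half coincides with the paper's (both rest on the complexified Theorem~\ref{thm:measureporp}\ref{p3:realpart} and the containment of $R$-weighted semi-norms among semi-norms with kernel $\mcS$). The upper bound, however, is reached by a genuinely different route. The paper builds $R$ from a basis of eigenvectors of $A^H$ lying in $\mcS^{\perp}$, so that $RAR^{\dagger}$ is exactly diagonal and the infimum is actually \emph{attained} when $A$ is diagonalizable; the general case is then handled by density of diagonalizable matrices, a perturbation bound $|\mu_R(A)-\mu_R(A_\epsilon)|\le\epsilon\|R\|\|R^{\dagger}\|$, and continuity of eigenvalues. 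You instead block-upper-triangularize $A$ in an orthonormal basis adapted to $\mcS$, identify $\alpha_{\mcS^{\perp}}(A^H)$ with $\alpha(A_{22})$, and conjugate the Jordan form of $A_{22}$ by $\diag(1,\epsilon,\epsilon^2,\dots)$; your computations of $R^{\dagger}$ and of $RAR^{\dagger}=DA_{22}D^{-1}$ are correct. Your route treats diagonalizable and non-diagonalizable $A$ uniformly and sidesteps the paper's density step, which relies on a uniform bound for $\sup\{\|R\|\|R^{\dagger}\|\}$ over the admissible $R$; what you give up is exact attainment of the infimum in the diagonalizable case, which the paper later exploits (the ``$\min$'' in Lemma~\ref{ex:useful} for undirected graphs), though the theorem as stated only asks for the infimum. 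One presentational caution: \cite[Proposition~2.3]{MYL-LW:98} as quoted infimizes over \emph{all} norms, so it does not by itself yield an invertible $D$ with $\mu(DA_{22}D^{-1})\le\alpha(A_{22})+\epsilon$ for the \emph{fixed} absolute norm; what actually closes that step is your explicit Jordan rescaling together with the fact---used identically by the paper---that for an absolute norm the measure of a diagonal matrix equals $\max_i\Re(\lambda_i)$, whence $\mu(\Lambda+\epsilon N)\le\max_i\Re(\lambda_i)+\epsilon\|N\|$ by subadditivity. You should state that absoluteness enters precisely there; with that said explicitly, the argument is complete.
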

\begin{proof}
  First note that, by Theorem~\ref{thm:measureporp}\ref{p3:realpart}, for
  every semi-norm $\verti{\cdot}$ with $\Ker\verti{\cdot}=\mcS$, we have
  $\alpha_{\mcS^{\perp}}(A^H)\le \mu_{\verti{\cdot}}(A) $. This implies that
  \begin{align}\label{eq:inequality1}
     \alpha_{\mcS^{\perp}}(A^H) & \le \inf\setdef{\mu_{\verti{\cdot}}(A)}{\verti{\cdot}
      \mbox{ a semi-norm }, \Ker\verti{\cdot}=\mcS}\nonumber\\
    &\le \inf \setdef{\mu_R(A)}{R\in
     \complex^{k\times n}, \Ker(R) = \mcS} .
    \end{align}
  Therefore, in order to prove statements~\ref{omm:p1:semi-norm}
  and~\ref{omm:p2:R}, we need to show that
  \begin{align*}
   \alpha_{\mcS^{\perp}}(A^H) &\ge \inf \setdef{\mu_R(A)}{R\in
     \complex^{k\times n},\Ker(R) = \mcS}\\ & \ge \inf\setdef{\mu_{\verti{\cdot}}(A)}{\verti{\cdot}
    \mbox{ a semi-norm }, \Ker\verti{\cdot}=\mcS}.
    \end{align*}
  We start by proving statement~\ref{omm:p2:R}. Consider the case that $A$ is
  diagonalizable. Let  $\mathrm{spec}_{\mcS^{\perp}}(A^{H}) =
  \{\lambda_1,\ldots,\lambda_{k}\}$. Since $A$ is diagonalizable, there
  exists a set of linearly independent vectors
  $\{v_1,\ldots,v_{k}\}\subset \mcS^{\perp}$ such that
  $A^Hv_i=\lambda_i v_i$. Define the matrix $R\in
  \complex^{k\times n}$, where $R_i$ (the $i$th row of matrix $R$) is
  equal to $v_i^{H}$, for every $i\in \{1,\ldots, k\}$. Note that
  $\{v_1,\ldots,v_{k}\}$ is linearly independent and therefore $R$ is
  full rank. Moreover, it is easy to see that $\Ker R =\mcS$. On the other
  hand, we have $RA = \Lambda R$, where
  $\Lambda=\diag\{\lambda^H_1,\ldots,\lambda^H_{k}\}\in
  \complex^{k\times k}$. This implies that
  $RAR^{\dagger}=\Lambda$. As a result, we have $\mu_{R}(A) =
  \mu(RAR^{\dagger}) = \mu(\Lambda)$ and therefore
  \begin{align*}
    \mu_{R}(A) &= \mu(\Lambda) = \lim_{h\to 0^+} \frac{\|I_{k} +
    h\Lambda\|-1}{h} \\ & =
    \max_{i\in\until{k}}\{\Re(\lambda_i)\} = \alpha_{\mcS^{\perp}}(A^H),
  \end{align*}
  where the third equality holds because $\|\cdot\|$ is absolute and so
  $\|I_n+h\Lambda\| = \max_{i\in\until{k}}\{|1+ h\lambda_i|\}$.
  Thus
  $\alpha_{\mcS^{\perp}}(A^H) \ge \inf \setdef{\mu_R(A)}{R\in
    \complex^{k\times n},\Ker(R) = \mcS}$. Now, consider the case when $A$
  is not diagonalizable. Note that, in the complex field $\complex$, the
  set of diagonalizable matrices are dense in $\complex^{n\times
    n}$. Therefore, for every $\epsilon>0$, there exists a diagonalizable
  $A_{\epsilon}$ such that $\|A-A_{\epsilon}\|\le \epsilon$.
  Theorem~\ref{thm:measureporp}\ref{p4:lipschitz} implies
  $|\mu_{R}(A)-\mu_{R}(A_{\epsilon})|\le \epsilon\|R\|\|R^{\dagger}\|$.  As
  a result
  \begin{align*}
    \mu_{R}(A)\le \mu_{R}(A_{\epsilon}) + \epsilon\|R\|\|R^{\dagger}\|.
  \end{align*}
  By taking the infimum over the set of $R\in
  \complex^{k\times n}$ with $\Ker(R)=\mcS$
  and noting the fact that $\sup\{\|R\|\|R^{\dagger}\|\mid R\in
  \complex^{k\times n}, \;
  \Ker(R)=\mcS\}\le M$, for some $M\in \real_{\ge 0}$, we get
  \begin{align*}
     \inf&\{\mu_R(A)\mid R\in \complex^{k\times n},\Ker(R) = \mcS\} \\
    & \qquad \le  \inf\{\mu_R(A_{\epsilon})\mid R\in
     \complex^{k\times n},\Ker(R) =
     \mcS\} + M\epsilon \\
     & \qquad = \alpha_{\mcS^{\perp}}(A^H_{\epsilon}) + M\epsilon,
  \end{align*}
  where the last equality holds because $A_{\epsilon}$ is
  diagonalizable. By continuity of eigenvalues, we get
  $\lim_{\epsilon\to0^+} \alpha_{\mcS^{\perp}}(A^H_{\epsilon}) =
  \alpha_{\mcS^{\perp}}(A^H)$. This implies that
  \begin{align*}
    \inf\{\mu_R(A)\mid R\in \complex^{k\times n},\Ker(R) = \mcS\} \le \alpha_{\mcS^{\perp}}(A^H).
  \end{align*}
  This completes the proof of statement~\ref{omm:p2:R}.
  Statement~\ref{omm:p1:semi-norm} is a consequence
  of~\eqref{eq:inequality1} and statement~\ref{omm:p2:R}.
\end{proof}
   
\begin{remark}
  \begin{enumerate}
  \item Statement~\ref{omm:p1:semi-norm} is a generalization to semi-norms
    and semi-measures of \cite[Proposition 2.3]{MYL-LW:98}, which states
    $\alpha(A) = \inf\setdef{\mu_{\norm{\cdot}}(A)}{\norm{\cdot} \mbox{ a
        norm}}$.
      
  \item Statement~\ref{omm:p2:R} is a generalization to absolute
    norms of~\cite{MV:78}, which essentially proves the result for the
    $\ell_2$ case.  For the special case of $\ell_p$-norm on $\real^n$,
    Theorem~\ref{thm:abscissa} shows that the infimum of the weighted
    $\ell_p$-measure of a matrix with respect to the weights recovers the
    spectral abscissa of the matrix, that is, for any $p\in [1,\infty]$,
      \begin{equation}
        \alpha(A) = \inf_{R \text{ invertible}} \mu_{p,R}(A). \eqoprocend
      \end{equation}
    \end{enumerate} 
  \end{remark}

Next we present an application to algebraic graph theory.

\begin{lemma}[Semi-measures of Laplacian matrices]\label{ex:useful}
  Let $G$ be a weighted digraph with a globally reachable vertex and with
  Laplacian matrix $L$ (with eigenvalue $\lambda_1(L)=0$). Let $\|\cdot\|$ be
  a norm with associated matrix measure $\mu$. Then
  \begin{multline} \label{prop:Reps}
    \inf\setdef{\mu_{R}(-L)}{R\in \complex^{(n-1)\times n},
      \Ker(R)=\mathrm{span}(\vect{1}_n)} \\
    =\alpha_{\vect{1}_n^{\perp}}(-L^{\top})= \alpha_{\mathrm{ess}}(-L)<0.
  \end{multline}
  Moreover, if $G$ is undirected, then
  \begin{multline} \label{prop:RV}
    \min\setdef{\mu_{R}(-L)}{R\in\complex^{(n-1)\times n},
      \Ker(R)=\mathrm{span}(\vect{1}_n)} \\
    =   \mu_{R_\mcV}(-L) = \alpha_{\vect{1}_n^{\perp}}(-L^{\top})= -\lambda_2(L) <0,
  \end{multline}
  where $\mcV=\{v_2,\dots,v_n\}$ are orthonormal eigenvectors of the Laplacian $L$ and
  \begin{equation} \label{def:RV}   
    R_{\mcV} = \begin{bmatrix} v_2 & \dots & v_n \end{bmatrix}^\top \in \real^{(n-1)\times n}.
  \end{equation}
\end{lemma}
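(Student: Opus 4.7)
The plan is to reduce the lemma to Theorem~\ref{thm:abscissa}\ref{omm:p2:R} applied to $A = -L$ with the invariant subspace $\mcS = \spn(\vect{1}_n)$. Since $L\vect{1}_n = \vect{0}_n$, the subspace $\spn(\vect{1}_n)$ is indeed $(-L)$-invariant and has dimension~$1$, so $\mcS^{\perp} = \vect{1}_n^{\perp}$ has dimension $k = n-1$. Noting that $(-L)^H = -L^{\top}$ because $L$ is real, Theorem~\ref{thm:abscissa}\ref{omm:p2:R} (applied with an absolute norm, as is implicit here) immediately gives
\begin{equation*}
  \inf\setdef{\mu_{R}(-L)}{R\in\complex^{(n-1)\times n},\;\Ker(R)=\spn(\vect{1}_n)}
  = \alpha_{\vect{1}_n^{\perp}}(-L^{\top}).
\end{equation*}

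Next I would identify $\alpha_{\vect{1}_n^{\perp}}(-L^{\top})$ with $\alpha_{\mathrm{ess}}(-L)$. Using the decomposition $\spec(-L) = \spec_{\spn(\vect{1}_n)}(-L) \cup \spec_{\vect{1}_n^{\perp}}(-L^{\top})$ recalled in Section~\ref{sec:notation}, and the elementary observation that any $v = c\vect{1}_n$ with $c\ne 0$ satisfies $-Lv = 0$, one has $\spec_{\spn(\vect{1}_n)}(-L) = \{0\}$. The standard algebraic-graph-theoretic fact that $G$ possesses a globally reachable vertex forces $0$ to be a simple eigenvalue of $L$ and every other eigenvalue of $L$ to have strictly positive real part. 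Hence $\spec_{\vect{1}_n^{\perp}}(-L^{\top}) = \spec(-L)\setminus\{0\}$, which yields $\alpha_{\vect{1}_n^{\perp}}(-L^{\top}) = \alpha_{\mathrm{ess}}(-L) < 0$ and completes~\eqref{prop:Reps}.

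For the undirected case I would exhibit the optimizer $R_{\mcV}$ explicitly. Orthonormality of $\{v_2,\dots,v_n\}$ gives $R_{\mcV}R_{\mcV}^{\top} = I_{n-1}$, whence $R_{\mcV}^{\dagger} = R_{\mcV}^{\top}$, and $\Ker(R_{\mcV}) = \spn(v_2,\dots,v_n)^{\perp} = \spn(\vect{1}_n)$. Using $Lv_i = \lambda_i v_i$ row by row, one gets $R_{\mcV}(-L)R_{\mcV}^{\top} = -\diag(\lambda_2,\dots,\lambda_n)$. Theorem~\ref{lem:computational}\ref{p2:seminorm} then gives $\mu_{R_{\mcV}}(-L) = \mu(-\diag(\lambda_2,\dots,\lambda_n))$, and because $\|\cdot\|$ is absolute this equals $\max_{i\ge 2}(-\lambda_i) = -\lambda_2$. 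Combined with the first part (and the connectedness $\lambda_2 > 0$ from the globally reachable vertex assumption in the undirected setting), this proves $R_\mcV$ attains the infimum and~\eqref{prop:RV} follows.

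The main obstacle is essentially just assembling the pieces: all heavy lifting is done by Theorem~\ref{thm:abscissa}\ref{omm:p2:R}, the spectrum decomposition, and the classical simplicity-of-zero result for Laplacians of digraphs with a globally reachable vertex. The only subtle point worth flagging is the implicit requirement that the underlying norm be absolute, which is what makes $\mu$ evaluate to the spectral abscissa on a diagonal matrix in the undirected calculation and what makes Theorem~\ref{thm:abscissa}\ref{omm:p2:R} applicable in the directed one.
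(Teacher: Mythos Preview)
Your proposal is correct and follows essentially the same approach as the paper's proof: invoke Theorem~\ref{thm:abscissa}\ref{omm:p2:R} with $A=-L$ and $\mcS=\spn(\vect{1}_n)$, use the simplicity of the zero eigenvalue for a digraph with a globally reachable node to identify $\alpha_{\vect{1}_n^{\perp}}(-L^{\top})$ with $\alpha_{\mathrm{ess}}(-L)$, and in the undirected case compute $\mu_{R_{\mcV}}(-L)$ directly via Theorem~\ref{lem:computational}\ref{p2:seminorm} and $R_{\mcV}^{\dagger}=R_{\mcV}^{\top}$. Your explicit flag that the norm must be absolute is a useful clarification that the lemma statement itself leaves implicit.
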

\begin{proof}
  Theorem~\ref{thm:abscissa}\ref{omm:p2:R} with $A=-L$ and $\mcS = \mathrm{span}(\vect{1}_n)$ gives
  \begin{align*}
    \inf_{\substack{R\in \complex^{(n-1)\times n}\\\Ker(R)=\mathrm{span}(\vect{1}_n)}} \mu_{R}(-L) =\alpha_{\vect{1}_n^{\perp}}(-L)
  \end{align*}
  Since $G$ has a globally reachable node, $\lambda_1 = 0$ is a simple
  eigenvalue of $L$ with associated right eigenvector $\vect{1}_n$ and all
  the other eigenvalues have positive real parts. This means that
  $\alpha_{\vect{1}_n^{\perp}}(-L^{\top})=\alpha_{\mathrm{ess}}(-L)<0$. 

  If $G$ is undirected, then $L$ is symmetric and all its eigenvalues are
  real. Let $0=\lambda_1<\lambda_2\le \lambda_3\le \ldots \le \lambda_n$
  denote the eigenvalues of $L$. Then it is clear that
  $\alpha_{\vect{1}_n^{\perp}}(-L^{\top}) = \alpha_{\mathrm{ess}}(-L) =
  -\lambda_2<0$. Moreover, we have
  $R_{\mcV}LR_{\mcV}^{\top}=\diag(\lambda_2,\ldots,\lambda_n)$. Additionally,
  we know that
  $R_{\mcV}^{\dagger}=R_{\mcV}^{\top}(R_{\mcV}R_{\mcV}^{\top})^{-1} =
  R_{\mcV}^{\top}$~\cite[E4.5.20]{CDM:01}. As a result,
  \begin{align*}
    \mu_{R_{\mcV}}(-L) = \mu(-R_{\mcV}LR_{\mcV}^{\dagger}) = \mu(-R_{\mcV}LR_{\mcV}^{\top}) = -\lambda_2.
    \hfill\quad\mbox{}\qquad\qedhere
  \end{align*}
\end{proof}

\begin{remark}\label{remark:Reps}
Equation~\eqref{prop:Reps} implies that, for any $\epsilon>0$, there exists
a matrix $R_\epsilon\in \complex^{(n-1)\times n}$ with
$\Ker(R)=\mathrm{span}(\vect{1}_n)$ satisfying
$\mu_{R_\epsilon}(-L)\leq\alpha_{\mathrm{ess}}(-L)+\epsilon$. \oprocend
\end{remark}
 
\subsection{{\color{black}Semi-contraction results for dynamical systems}}
  Consider a continuous map $t\mapsto A(t)\in \real^{n\times n}$ and the
  dynamical system
  \begin{align}\label{eq:LTV}
    \dot{x}(t) = A(t) x(t),\quad\mbox{ for } t\ge t_0\in \real
  \end{align}
  with the initial condition $x(t_0)=x_0$. 

  \begin{theorem}[{\color{black}Coppel's inequality for semi-norms}]\label{thm:coppel}
    Let $\verti{\cdot}$ be a semi-norm on $\real^n$ and
    let $\mu_{\verti{\cdot}}$ be the associated matrix
    semi-measure. Assume that, for every $t\ge t_0$,
    $\Ker\verti{\cdot}$ is invariant under $A(t)$. Then, for every $t\ge t_0$, we have
    \begin{multline*}
      \exp\left(\int_{t_0}^{t} \mu_{\verti{\cdot}}(-A(\tau))d\tau\right)\verti{x(0)} \le
      \verti{x(t)}\\\le \exp\left(\int_{t_0}^{t} \mu_{\verti{\cdot}}(A(\tau))d\tau\right) \verti{x(0)}.
      \end{multline*}
      Moreover, if $A$ is time-invariant, we have
      \begin{align*}
        \exp(t\mu_{\verti{\cdot}}(-A)) \verti{x(0)} \le
      \verti{x(t)}\le \exp(t\mu_{\verti{\cdot}}(A)) \verti{x(0)}. 
      \end{align*}
\end{theorem}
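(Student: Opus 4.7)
The plan is to adapt the classical proof of Coppel's inequality for the linear system~\eqref{eq:LTV}, with the key novelty being the use of the $A(t)$-invariance of $\Ker\verti{\cdot}$ to compensate for the fact that, for a semi-norm, the submultiplicative bound $\verti{Av}\le\verti{A}\verti{v}$ of Proposition~\ref{induced-semi-norm}\ref{p5:triangle2} holds only for vectors $v\perp\Ker\verti{\cdot}$.

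First I would set $v(t):=\verti{x(t)}$ and combine the first-order expansion $x(t+h)=(I_n+hA(t))x(t)+o(h)$ with the triangle inequality to obtain $v(t+h)\le \verti{(I_n+hA(t))x(t)}+o(h)$. To bound the right-hand side, I would decompose $x(t)=x_\perp(t)+x_\parallel(t)$ along the orthogonal splitting $\real^n=\Ker\verti{\cdot}^\perp\oplus\Ker\verti{\cdot}$. The semi-norm axioms immediately give $\verti{u+w}=\verti{u}$ whenever $w\in\Ker\verti{\cdot}$, and the invariance hypothesis guarantees $(I_n+hA(t))x_\parallel(t)\in\Ker\verti{\cdot}$. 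Together with Proposition~\ref{induced-semi-norm}\ref{p5:triangle2} (applicable because $x_\perp(t)\perp\Ker\verti{\cdot}$), this yields
\[
\verti{(I_n+hA(t))x(t)}=\verti{(I_n+hA(t))x_\perp(t)}\le \verti{I_n+hA(t)}\,\verti{x_\perp(t)}=\verti{I_n+hA(t)}\,v(t).
\]

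Combining this with the first-order expansion $\verti{I_n+hA(t)}=1+h\mu_{\verti{\cdot}}(A(t))+o(h)$ that follows directly from the definition~\eqref{eq:matrix-seminorm} of the semi-measure produces the upper Dini-derivative estimate $D^+ v(t)\le \mu_{\verti{\cdot}}(A(t))\,v(t)$, whence the upper bound in the theorem follows from the Gr\"onwall\textendash{}Bellman lemma applied to the continuous function $v$. For the lower bound, I would use a standard time-reversal trick: the substitution $y(\tau):=x(t_0+t-\tau)$ on $[t_0,t]$ satisfies $\dot y = B(\tau) y$ with $B(\tau)=-A(t_0+t-\tau)$, and $\Ker\verti{\cdot}$ is also $B(\tau)$-invariant. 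Applying the upper bound already established to $y$ and changing the integration variable delivers the lower bound on $\verti{x(t)}$. The time-invariant statement is then a direct specialization with $A(\tau)\equiv A$.

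The principal obstacle is the decomposition step above: without the $A$-invariance of $\Ker\verti{\cdot}$, a kernel component of $x(t)$ could leak into $\Ker\verti{\cdot}^\perp$ along the flow and destroy the submultiplicative bound; the invariance hypothesis is precisely what is needed to rule this out. Once that step is in place, the remainder of the argument is a routine adaptation of the classical Coppel estimate.
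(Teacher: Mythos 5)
Your proof is correct and, for the upper bound, follows essentially the same route as the paper: both arguments combine the first-order expansion of $x(t+h)$ with the orthogonal decomposition relative to $\Ker\verti{\cdot}$ and the invariance hypothesis, so as to reduce to the component in $\Ker\verti{\cdot}^{\perp}$ where Proposition~\ref{induced-semi-norm}\ref{p5:triangle2} applies (the paper phrases this by inserting the orthogonal projection $\mathcal{P}$ onto $\Ker\verti{\cdot}^{\perp}$ and using $\mathcal{P}A(t)x=\mathcal{P}A(t)\mathcal{P}x$ together with $\verti{\mathcal{P}v}=\verti{v}$, which is the same computation as your $x=x_\perp+x_\parallel$ splitting). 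Two remarks. First, you are more careful than the paper's own proof on two counts: you work with the upper Dini derivative rather than assuming $t\mapsto\verti{x(t)}$ is differentiable, and you actually establish the lower bound via time reversal, whereas the paper only derives the upper differential inequality and appeals to Gr\"onwall\textendash{}Bellman. Second, observe that your time-reversal argument yields $\verti{x(t)}\ge \exp\bigl(-\int_{t_0}^{t}\mu_{\verti{\cdot}}(-A(\tau))\,d\tau\bigr)\verti{x(t_0)}$, which is the classical form of Coppel's lower bound; the lower bound as displayed in the theorem is missing this leading minus sign in the exponent (the scalar example $\dot x=ax$ with $a<0$ shows the printed version cannot hold), so your derivation in fact proves the corrected statement rather than the one as typeset.
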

\begin{proof}
  Note that $t\mapsto A(t)$ is continuous. Therefore the solutions
  $t\mapsto x(t)$ of the time-varying dynamical system~\eqref{eq:LTV}
  are differentiable. Thus, for small enough $h$, we can write
  $x(t+h) = x(t) + h A(t) x(t) + o(h)$. Let  $\mathcal{P}$ be
  the orthogonal projection onto $\Ker\verti{\cdot}^{\perp}$. Then we
  have
  \begin{align*}
    \mathcal{P} x(t+h) &= \mathcal{P} x(t) + h \mathcal{P} A(t) x(t) +
    o(h) \\ & = \mathcal{P} x(t) + h \mathcal{P} A(t) \mathcal{P} x(t) +
    o(h).
  \end{align*}
  where the last equality holds because $A(t) \Ker\verti{\cdot} \subseteq
  \Ker\verti{\cdot}$. Therefore, $\verti{\mathcal{P}v} = \verti{v}$ and
  Theorem~\ref{induced-semi-norm} together imply
  \begin{align*}
    \frac{\verti{x(t+h)} - \verti{x(t)}}{h} \le  \frac{\verti{I_n +
    hA} -1}{h} \verti{x(t)} + \frac{o(h)}{h}.
  \end{align*}
  Taking the limit as $h\to 0^+$, we obtain $\frac{d}{dt} \verti{x(t)} \le
  \mu_{\verti{\cdot}}(A)\verti{x(t)}$. The result then follows from the
  Gr\"{o}nwall\textendash{}Bellman inequality.
\end{proof}

{\color{black}\begin{definition}[Semi-contracting systems]
  Let $C\subseteq \real^n$ be a convex set, $c>0$ be a positive number,
  $\verti{\cdot}$ be a semi-norm on $\real^n$, and $\mu_{\verti{\cdot}}$ be
  its associated matrix semi-measure. The time-varying dynamical
  system~\eqref{eq:TV-nonlinear} is
  \begin{enumerate}
  \item   \emph{semi-contracting} on $C$ with
  respect to $\verti{\cdot}$ with rate $c$, if
  \begin{align*}
    \mu_{\verti{\cdot}}(\jac{f}(t,x)) \le -c,\enspace\mbox{for all }t\in
    \real_{\ge 0}, x\in C
  \end{align*}
  \end{enumerate}
  Moreover, the subspace $\Ker\verti{\cdot}$ is
  \begin{enumerate}
  \item \emph{infinitesimally invariant} under the
    system~\eqref{eq:TV-nonlinear}, if 
    \begin{align}\label{InfI}
      \jac f(t,x) \Ker\verti{\cdot} \subseteq \Ker\verti{\cdot},
      \enspace\mbox{for all }t\in
      \real_{\ge 0}, x\in C; 
    \end{align}
      \item \emph{shifted-invariant} under the 
    system~\eqref{eq:TV-nonlinear}, if there exists
    $x^*\in \real^n$, such that
    \begin{align} \label{ShfI}
      f(t,x^*+\Ker\verti{\cdot})\subseteq \Ker\verti{\cdot},
      \enspace\mbox{for all }t\in \real_{\ge 0}. 
      \end{align}
  \end{enumerate}
  \end{definition}

  Now, we study the asymptotic behavior of trajectories of the
  dynamical system~\eqref{eq:TV-nonlinear} when it is 
  semi-contracting.
  
  \begin{theorem}[Trajectories of semi-contracting 
  systems]\label{thm:partialcontraction} Consider the time-varying
    dynamical system~\eqref{eq:TV-nonlinear} with a convex and invariant
    set $C$. Assume the system~\eqref{eq:TV-nonlinear} is semi-contracting
    on $C$ with respect to a semi-norm $\verti{\cdot}$ with rate
    $c>0$. Then
    \begin{enumerate}      
    \item\label{p1:estimates} if infinitesimal invariance~\eqref{InfI} holds,
      then for every $x_0,y_0\in C$ and $t\in \real_{\ge 0}$,
      \begin{align}\label{eq:semi-contraction-traj}
        \verti{\phi(t,x_0) - \phi(t,y_0)} \le e^{-ct}\verti{x_0-y_0};
      \end{align}

    \item\label{p2:convergence} if shifted-invariance~\eqref{ShfI} holds for
      some $x^*\in \Ker\verti{\cdot}^{\perp}$ and $C=\real^n$, then for every $x_0\in \real^n$ and $t\in
      \real_{\ge 0}$,
      \begin{align}\label{eq:partial-contraction-traj}
        \verti{\phi(t,x_0)-x^*} \le  e^{-ct}\verti{x_0-x^*},
      \end{align}
      and $t\mapsto \phi(t,x_0)$ converges to
      $x^*+\Ker\verti{\cdot}$ with exponential rate $c$.
    \end{enumerate}
     Moreover, assuming the system~\eqref{eq:TV-nonlinear} is
     time-invariant,
      \begin{enumerate}\setcounter{enumi}{2}
    \item\label{p3:f} if \eqref{InfI} holds, then for every $x_0\in \real^n$
      and every $t\in \real_{\ge 0}$,
      \begin{align*}
        \verti{f(\phi(t,x_0))} \le e^{-ct}\verti{f(x_0)};
      \end{align*}
      \item\label{p4:invariantsubspace} if \eqref{InfI} holds and
        $C=\real^n$, then~\eqref{ShfI} holds for some $x^*\in \real^n$.
    \end{enumerate}
\end{theorem}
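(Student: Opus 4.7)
The plan is to treat the four parts sequentially, reducing each to Theorem~\ref{thm:coppel} (Coppel's inequality for semi-norms) applied to a suitable linear time-varying system obtained from the Jacobian of $f$.

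For part~\ref{p1:estimates}, I would set $\eta(t) = \phi(t,x_0) - \phi(t,y_0)$, so that the fundamental theorem of calculus gives $\dot\eta(t) = M(t)\eta(t)$ with
\begin{equation*}
M(t) \;=\; \int_0^1 \jac f\bigl(t,\phi(t,y_0)+s\eta(t)\bigr)\,ds.
\end{equation*}
Sub-additivity of $\mu_{\verti{\cdot}}$ (Theorem~\ref{thm:measureporp}\ref{p2:tri}), together with a Riemann-sum limit, yields $\mu_{\verti{\cdot}}(M(t))\le -c$, and infinitesimal invariance~\eqref{InfI} passes from each Jacobian through the integral to give $M(t)\Ker\verti{\cdot}\subseteq\Ker\verti{\cdot}$. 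Theorem~\ref{thm:coppel} applied to $\dot\eta = M(t)\eta$ then delivers~\eqref{eq:semi-contraction-traj}.

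Part~\ref{p2:convergence} is the most delicate, because shifted invariance need not imply infinitesimal invariance, so one cannot directly invoke Theorem~\ref{thm:coppel} on the difference $\phi(t,x_0)-x^*$. The idea is to split $z(t) = \phi(t,x_0)-x^*$ via the orthogonal projection $\mathcal{P}$ onto $\Ker\verti{\cdot}^\perp$ and its complement $\mathcal{Q}=I-\mathcal{P}$, and to write
\begin{equation*}
\dot z(t) \;=\; f\bigl(t,x^* + \mathcal{Q} z(t)\bigr) + M(t)\mathcal{P} z(t),
\end{equation*}
with $M(t)=\int_0^1 \jac f\bigl(t,x^* + \mathcal{Q} z(t) + s\mathcal{P} z(t)\bigr)\,ds$. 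Shifted invariance~\eqref{ShfI} places the first summand in $\Ker\verti{\cdot}$, so projecting by $\mathcal{P}$ eliminates it and leaves $\dot u(t) = \mathcal{P} M(t)\,u(t)$ for $u(t) = \mathcal{P} z(t)$. Since $\mathcal{Q} Mv\in\Ker\verti{\cdot}$ is semi-norm-null, a direct estimate gives $\verti{(I+h\mathcal{P} M)v} = \verti{(I+hM)v}$ for $v\perp\Ker\verti{\cdot}$, and hence $\mu_{\verti{\cdot}}(\mathcal{P} M(t))\le \mu_{\verti{\cdot}}(M(t))\le -c$. Using $\verti{z(t)}=\verti{u(t)}$ and invoking Gr\"onwall yields~\eqref{eq:partial-contraction-traj}; exponential decay of $\verti{\phi(t,x_0)-x^*}$ then forces $\phi(t,x_0)$ toward $x^*+\Ker\verti{\cdot}$.

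Parts~\ref{p3:f} and~\ref{p4:invariantsubspace} fall out as corollaries of the time-invariant semigroup structure. For~\ref{p3:f}, the variation $v(t):=f(\phi(t,x_0))$ satisfies $\dot v(t) = \jac f(\phi(t,x_0))\,v(t)$, to which Theorem~\ref{thm:coppel} applies directly under~\eqref{InfI}. For~\ref{p4:invariantsubspace}, infinitesimal invariance lets the dynamics descend to a well-defined quotient field $\bar f$ on $\real^n/\Ker\verti{\cdot}$, which is contracting with rate $c$; combining~\ref{p3:f} with the identity $\verti{\phi(t,x_0)-\phi(s,x_0)}\le\int_s^t\verti{f(\phi(\tau,x_0))}\,d\tau$ shows the induced quotient trajectory is Cauchy, converging to some equilibrium $\bar x^*$ of $\bar f$; any lift $x^*\in\real^n$ of $\bar x^*$ then satisfies $f(x^*)\in\Ker\verti{\cdot}$, and a final use of infinitesimal invariance upgrades this to $f(x^*+v)\in\Ker\verti{\cdot}$ for every $v\in\Ker\verti{\cdot}$. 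I expect the main obstacle to be the semi-norm manipulation in~\ref{p2:convergence}: because infinitesimal invariance is not assumed, one cannot restrict $M(t)$ to $\Ker\verti{\cdot}^\perp$, so the bound $\mu_{\verti{\cdot}}(\mathcal{P} M(t))\le\mu_{\verti{\cdot}}(M(t))$ must be extracted from the projection identity rather than obtained by a direct invocation of Theorem~\ref{thm:coppel}.
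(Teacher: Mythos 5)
Your proposal is correct and follows essentially the same route as the paper: every part is reduced to Coppel's inequality through the orthogonal projection $\mathcal{P}$ onto $\Ker\verti{\cdot}^{\perp}$, the identity $\mu_{\verti{\cdot}}(\mathcal{P}A)=\mu_{\verti{\cdot}}(A)$, and the fact that $\verti{\cdot}$ is a genuine norm on $\Ker\verti{\cdot}^{\perp}$, and the delicate point you flag in part~\ref{p2:convergence} (that $\Ker\verti{\cdot}$ need not be invariant under $\mathcal{P}M(t)$, so the decay must be established directly on $\Ker\verti{\cdot}^{\perp}$ rather than by quoting Theorem~\ref{thm:coppel}) is exactly the role of the paper's auxiliary system~\eqref{eq:aux}. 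The remaining differences are cosmetic: you use mean-value integrals of the Jacobian where the paper differentiates a homotopy of trajectories in parts~\ref{p1:estimates} and~\ref{p2:convergence}, and in part~\ref{p4:invariantsubspace} you pass to the quotient field and deduce shifted invariance directly from infinitesimal invariance of the kernel, where the paper reaches the same conclusion via parts~\ref{p1:estimates} and~\ref{p3:f} applied to the trajectory through $x^*$.
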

  \begin{proof}
      Regarding part~\ref{p1:estimates}, for $\alpha\in[0,1]$, define
  $\psi(t,\alpha)=\phi\big(t,\alpha x_0+(1-\alpha)y_0\big)$ and note
  $\psi(0,\alpha) = \alpha x_0+(1-\alpha)y_0$ and $\frac{\partial
    \psi}{\partial \alpha}(0,\alpha)= x_0-y_0$.  We then compute:
  \begin{align*}
    \frac{\partial}{\partial t} 
    \frac{\partial}{\partial \alpha} \psi(t,\alpha) &= 
    \frac{\partial}{\partial \alpha}   \frac{\partial}{\partial t} \psi(t,\alpha)  \\
     &= 
    \frac{\partial}{\partial \alpha}   f(t, \psi(t,\alpha) )
    = \frac{\partial f}{\partial x} (t, \psi(t,\alpha) ) 
    \frac{\partial}{\partial \alpha}   \psi(t,\alpha) .
  \end{align*}
  Therefore, $\frac{\partial \psi}{\partial \alpha}(t,\alpha)$ satisfies
  the linear time-varying differential equation $ \frac{\partial}{\partial
    t} \frac{\partial \psi }{\partial \alpha} = \jac{f}(t,\psi)
  \frac{\partial \psi}{\partial \alpha} $. Since the
  system~\eqref{eq:TV-nonlinear} has property~\eqref{InfI},
  Theorem~\ref{thm:coppel} implies
  \begin{align}
    \label{eq:bound-Coppel}
    \verti{\tfrac{\partial \psi}{\partial \alpha}(t,\alpha)} &\leq 
   \verti{\tfrac{\partial \psi}{\partial \alpha}(0,\alpha)}\exp\Big( \int_{0}^t\mu\big( \jac{f}(t,\psi(\tau,\alpha)) \big) d\tau\Big)
    \nonumber \\ &  \leq e^{-ct} \verti{x_0-y_0},
  \end{align}
  where we used $\mu_{\verti{\cdot}}(\jac f(t,x))\le -c$, for every $t\in
  \real_{\ge 0}$ and  $x\in \real^n$. In turn, 
  inequality~\eqref{eq:bound-Coppel} implies
  \begin{align*}
    & \verti{\phi(t,x_0)-\phi(t,y_0)} =\verti{\psi(t,1)-\psi(t,0)} \\ & =
    \verti{\int_{0}^{1}\tfrac{\partial \psi(t,\alpha)}{\partial
        \alpha} d\alpha} \le \int_{0}^{1}\verti{\tfrac{\partial \psi(t,\alpha)}{\partial
        \alpha}}
    d  \alpha \leq e^{-ct} \verti{x_0-y_0}.                            
  \end{align*}
  This completes the proof of part~\ref{p1:estimates}.

  Regarding part~\ref{p2:convergence}, let $\mathcal{P}$ be the orthogonal
  projection onto $\Ker\verti{\cdot}^{\perp}$ and consider the dynamical
  system on $\Ker\verti{\cdot}^{\perp}$:
  \begin{align}\label{eq:aux}
    \dot{y} = f_{\mathcal{P}}(t,y) := \mathcal{P} f(t,y + (I_n-\mathcal{P})\phi(t,x_0)).
  \end{align}
  Note that, for every $v\in \real^{n}$, we have
  $\verti{\mathcal{P}v} =\verti{v}$. This implies that, for $A\in
  \real^{n\times n}$, 
  \begin{align*}
    \mu_{\verti{\cdot}}(\mathcal{P}A) &= \lim_{h\to
    0^{+}}\frac{\verti{\mathcal{P}+h\mathcal{P}^2A}-1}{h} \\ & = \lim_{h\to
    0^{+}}\frac{\verti{\mathcal{P}+h\mathcal{P}A}-1}{h} =\lim_{h\to
    0^{+}}\frac{\verti{I_n+hA}-1}{h}\\ & = \mu_{\verti{\cdot}}(A).
    \end{align*}
  As a result, for every $(t,y)\in \real_{\ge 0}\times \Ker\verti{\cdot}^{\perp}$,
  \begin{align*}
    \mu_{\verti{\cdot}}(Df_{\mathcal{P}}(t,y))  &=
    \mu_{\verti{\cdot}}(\mathcal{P}Df(t,y +
    (I_n-\mathcal{P})\phi(t,x_0))  \\
    & = \mu_{\verti{\cdot}} (Df(t,y +(I_n-\mathcal{P})\phi(t,x_0))) \le -c.
  \end{align*}
  where for the last inequality, we used the fact that
  $\mu_{\verti{\cdot}}(\jac f(t,x))\le -c$, for every $(t,x)\in \real_{\ge
    0}\times \real^n$. Recall that $\verti{\cdot}$ is a norm on
  $\Ker\verti{\cdot}^{\perp}$. Thus, the dynamical system~\eqref{eq:aux} is
  contractive with respect to $\verti{\cdot}$ on
  $\Ker\verti{\cdot}^{\perp}$. Additionally, $(I_n-\mathcal{P})$ is the
  orthogonal projection onto $\Ker\verti{\cdot}$ and therefore, for every
  $t\ge 0$,
  \begin{align*}
    x^* + (I_n-\mathcal{P})\phi(t,x_0) \in x^*+\Ker\verti{\cdot}.
  \end{align*}
  Using~\eqref{ShfI}, for every $t\ge 0$, we obtain
  \begin{align*}
    f(t,x^* + (I_n-\mathcal{P})\phi(t,x_0)) \subseteq \Ker\verti{\cdot}.
  \end{align*}
  This implies that $f_{\mathcal{P}}(t,x^*) = \mathcal{P} f(t,x^* +
  (I_n-\mathcal{P})\phi(t,x_0)) =\vect{0}_n$. As a result, $x^*\in
  \Ker\verti{\cdot}^{\perp}$ is an equilibrium point of the dynamical
  system~\eqref{eq:aux}.  Moreover, one can see that $t\mapsto
  \mathcal{P} \phi(t,x_0)$ is another trajectory of the dynamical
  system~\eqref{eq:aux}. Since the dynamical system~\eqref{eq:aux} is
  contractive on $\Ker\verti{\cdot}^{\perp}$,
  \begin{align*}
    \verti{\mathcal{P}\phi(t,x_0) -
      x^*} & \le e^{-ct}\verti{\mathcal{P}\phi(0,x_0)
      - x^*} \\ & = e^{-ct}\verti{x_0- x^*},
  \end{align*}
  where the last equality hold because
  $\Ker\verti{\cdot} = \Img(I_n-\mathcal{P})$. Moreover,
  \begin{align*}
    \lim_{t\to\infty}\verti{\mathcal{P}\phi(t,x_0) - x^*}
    \le \lim_{t\to\infty}  e^{-ct}\verti{x_0- x^*} = 0. 
  \end{align*}
  Note that $\mathcal{P}\phi(t,x_0) - x^*\in \Ker\verti{\cdot}^{\perp}$ and
  $\verti{\cdot}$ is a norm on $\Ker\verti{\cdot}^{\perp}$. This implies
  that $\lim_{t\to\infty}\mathcal{P}\phi(t,x_0) =x^*$, or equivalently,
  $\lim_{t\to\infty}\phi(t,x_0)\in
  x^*+\Ker\verti{\cdot}$ with exponential convergence rate $c$.

  Regarding part~\ref{p3:f}, note that the map $t\mapsto f(\phi(t,x_0))$
  satisfies $\tfrac{d}{dt} f(\phi(t,x_0)) = \jac f(\phi(t,x_0))
  f(\phi(t,x_0))$. Since~\eqref{InfI} holds, Theorem~\ref{thm:coppel} implies
  \begin{align*}
    &\verti{f(\phi(t,x_0))} \\ & \le
    \exp\left(\int_{0}^{t}\mu_{\verti{\cdot}}\big(\jac
    f(\phi(\tau,x_0))\bigm) d\tau\right) \verti{f(\phi(0,x_0))}\\ & \le e^{-ct} \verti{f(x_0)},
  \end{align*}
where we used $\mu_{\verti{\cdot}}(\jac f(t,x))\le -c$, for every $t\in
\real_{\ge 0}$ and $x\in C$.

 Regarding part~\ref{p4:invariantsubspace}, consider the dynamical system
 on $\Ker\verti{\cdot}^{\perp}$
 \begin{align}\label{eq:auxD}
   \dot{y} = \mathcal{P}f(y).
 \end{align} 
 Note that $\mu_{\verti{\cdot}}(\mathcal{P}Df(y)) = \mu_{\verti{\cdot}}(Df(y)) \le
 -c$, for every $y\in \Ker\verti{\cdot}^{\perp}$ and $\verti{\cdot}$
 is a norm on $\Ker\verti{\cdot}^{\perp}$. Therefore, the dynamical
 system~\eqref{eq:auxD} is contracting on $\Ker\verti{\cdot}^{\perp}$ with
 respect to $\verti{\cdot}$. Since~\eqref{eq:auxD} is time-invariant,
 it has a unique globally stable equilibrium point $x^*\in
 \Ker\verti{\cdot}^{\perp}$. Thus, we have $\mathcal{P}f(x^*) =
 \vect{0}_n$ or equivalently $f(x^*)\in \Ker\verti{\cdot}$. For every $x_0\in \real^n$, recall that $t\mapsto \phi(t,x_0)$ is the 
 trajectory of the dynamical system~\eqref{eq:TV-nonlinear} starting
 form $x_0$. Therefore, by part~\ref{p3:f},
\begin{align*}
\verti{f(\phi(t,x^*))} \le e^{-ct}\verti{f(x^*)} = 0.
\end{align*}
As a result $f(\phi(t,x^*))\in \Ker\verti{\cdot}$, for every $t\ge
0$. This implies that the trajectory $t\mapsto \phi(t,x^*)$ remains in
$x^*+\Ker\verti{\cdot}$, for every $t\ge 0$. Using part~\ref{p1:estimates}, for every $x_0\in
 x^*+\Ker\verti{\cdot}$ and every $t\in \real_{\ge 0}$,
 \begin{align*}
    \verti{\phi(t,x_0)-\phi(t,x^*)} \le e^{-ct} \verti{x_0-x^*} = 0,
 \end{align*}
 where the last equality holds because $x_0-x^*\in \Ker\verti{\cdot}$
 and thus $ \verti{x_0-x^*} = 0$. This means that, for every $x_0\in
 x^*+\Ker\verti{\cdot}$ and every $t\in \real_{\ge 0}$, we have $\phi(t,x_0)\in
 x^*+\Ker\verti{\cdot}$. As a consequence, property~\eqref{ShfI}
 holds for $x^*$.
 \end{proof}

\begin{remark}[Comparison with literature]\label{rem:compare}
  Theorem~\ref{thm:partialcontraction}\ref{p1:estimates} is related to
  horizontal contraction theory, as developed in~\cite{FF-RS:14}.
  Compared to horizontal contraction theory that can examine
  convergence to general submanifolds, semi-contraction theory studies
  convergence to vector subspaces. On the other hand, the notion of semi-measure leads to a
  {\color{black}Coppel's inequality for semi-norms}, to a less restrictive
  invariance condition for semi-contractivity, and to sharp results about
  convergence rates. In short, semi-contraction theory offers a more
  comprehensive treatment with a smaller domain of applicability. We
  postpone a comprehensive comparison to Appendix~\ref{app:semi-horizontal}.
  
  Theorem~\ref{thm:partialcontraction}\ref{p2:convergence} is related to
  partial contraction
  theory~\cite{QCP-JJS:07,MdB-DF-GR-FS:16,PCV-SJ-FB:19r}.  Specifically,
  Theorem~\ref{thm:partialcontraction}\ref{p2:convergence} is more general
  than~\cite[Theorem~1]{QCP-JJS:07}, whereby only the $\ell_2$-norm is
  considered, and \cite{MdB-DF-GR-FS:16,PCV-SJ-FB:19r}, whereby partial
  contraction is defined using an orthonormal set of vectors.  In contrast,
  our statement and proof of
  Theorem~\ref{thm:partialcontraction}\ref{p2:convergence} are geometric
  and coordinate independent.

  To the best of our knowledge, the results in
  Theorem~\ref{thm:partialcontraction}\ref{p3:f} and
  Theorem~\ref{thm:partialcontraction}\ref{p4:invariantsubspace} are novel. \oprocend
\end{remark}

\begin{remark}[Computation of suitable semi-norms]\label{rem:computation}
   An important question is how to find a semi-norm with respect to which a
  given system is semi-contracting. In some cases, one can exploit the
  system structure to guess a candidate semi-norm and then verify
  semi-contractivity; several such examples are in
  Section~\ref{sec:app}. Alternatively, one may resort to computational
  methods. In classical contraction theory, a well-established and
  efficient computational method for polynomial systems is sum of square
  (SOS) programming~\cite{EA-PAP-JJES:08}. This approach can be easily
  extended to treat weighted $\ell_2$-semi-norms.  Consider the dynamical
  system~\eqref{eq:TV-nonlinear} with $f$ polynomial in $(t,x)$ and a
  vector subspace $\mcS\subseteq \real^n$. Using
  Theorem~\ref{lem:computational}\ref{p5:2-norm}, one can compute a
  suitable semi-norm of the form $\verti{\cdot}_{2,R}$ with $\Ker
  \verti{\cdot}_{2,R} =\mcS$ by solving the following SOS program in $P\in
  \real^{n\times n}$:
  \begin{align}\label{prob:checking_demidovich-SOS}
    &y^{\top}Q^{\top}(-P \jac{f}(t,x) - (\jac{f}(t,x))^{\top}P + 2c P )Qy \in \Sigma(t,x,y),\nonumber\\
    &y^{\top}P y \in \Sigma(t,x,y),\nonumber\\
    &Pv_1 = Pv_2 = \cdots = Pv_{n-k} = \vect{0}_n,
  \end{align}
  where $y\in \real^n$ is an intermediate variable, $\Sigma(t,x,y)$ is the
  ring of sum of square polynomials, and $\{v_1,\ldots,v_{n}\}$ is an
  orthonormal basis for $\real^n$ with $\{v_1,\ldots,v_{n-k}\}$ a basis
  for  $\mcS$ and $Q = \begin{bmatrix} \vect{0}_{n\times (n-k)},v_{n-k+1},\ldots,v_n\end{bmatrix}\in
  \real^{n\times n}$.\oprocend
 \end{remark}
 
It should be noted that
Theorem~\ref{thm:partialcontraction}\ref{p2:convergence} does not
state that the asymptotic behavior of generic trajectories is the same
as the asymptotic behavior of the trajectories in the invariant set $x^*+\Ker {\verti{\cdot}}$. The following example elaborates
on this aspect.  }

 \begin{example}\label{ex:semi-contractive}
   The dynamical system on $\real^2$
   \begin{align}
     \begin{split}\label{eq:what-a-lame-system}
     \dot{x}_1 = - x_1,\\
     \dot{x}_2 = x_1x_2^2
     \end{split}
   \end{align}
   is semi-contracting with respect to the semi-norm $\verti{\cdot}$
   defined by $\verti{(x_1,x_2)^{\top}} = |x_1|$, for every
   $(x_1,x_2)^{\top}\in \real^2$. {\color{black}Additionally, $\Ker\verti{\cdot} = \setdef{(0,x_2)}{x_2\in\real}$ is shifted-invariant under~\eqref{eq:what-a-lame-system}.} Thus, by
   Theorem~\ref{thm:partialcontraction}\ref{p2:convergence}, every
   trajectory of the system~\eqref{eq:what-a-lame-system} converges to
   the set $\Ker\verti{\cdot}= \setdef{(0,x_2)}{x_2\in\real}$. Moreover, $\mcS$ is
   the set of equilibrium points for~\eqref{eq:what-a-lame-system}. On the
   other hand, the curve $t\mapsto (x_1(t),x_2(t)):=(e^{-t},e^{t})$ is a
   trajectory of~\eqref{eq:what-a-lame-system} such that
   $\lim_{t\to\infty}x_2(t)=\infty$. Therefore, this trajectory does not
   converge to any equilibrium point. 
  \end{example}

\section{Weakly contracting systems}\label{sec:weak}

We here generalize contraction theory by providing a comprehensive
treatment to weakly contracting systems.

\begin{definition}[Weakly contracting systems]
  Let $C\subseteq \real^n$ be a convex set, $\|\cdot\|$ be a norm on
  $\real^n$, and $\mu$ be its associated matrix measure.  The time-varying
  dynamical system~\eqref{eq:TV-nonlinear} is \emph{weakly contracting} on
  $C$ with respect to $\|\cdot\|$, if
  \begin{align}\label{eq:nonexpansive}
    \mu(\jac{f} (t,x)) \le 0,\quad\mbox{for all } t\in \real_{\ge 0}, x\in C.
  \end{align}
\end{definition}
We aim to study the qualitative behaviors of weakly contracting
systems. For the rest of this section, we restrict ourselves to the
time-invariant dynamical systems:
\begin{align}\label{eq:nonlinear}
  \dot{x} = f(x), 
\end{align}
on convex invariant sets and we assume that $x\mapsto f(x)$ is
twice-differentiable. We start with a useful and novel lemma. 

\begin{lemma}[Weak contraction and bounded trajectory imply invariant set]
  \label{lem:weak-implies-invariant}
  Consider the time-invariant dynamical system~\eqref{eq:nonlinear} with a
  closed convex invariant set $C$. Suppose that $\|\cdot\|$ is a norm on
  $\real^n$, $f$ is continuously differentiable and weakly contracting with
  respect to $\|\cdot\|$, and $t\mapsto x(t)$ is a bounded
  trajectory of the system in $C$. Then the system~\eqref{eq:nonlinear} has
  a compact convex invariant set $W\subseteq C$.
\end{lemma}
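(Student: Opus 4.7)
My plan is to construct $W$ as a sublevel set of a convex Lyapunov-like functional built from the given bounded trajectory. As a preliminary observation, weak contraction makes each flow map $\phi(t,\cdot)$ non-expansive on $C$ with respect to $\|\cdot\|$; this follows from the same path-integration argument used to prove Theorem~\ref{thm:partialcontraction}\ref{p1:estimates}, now applied with rate $c=0$. In particular every trajectory in $C$ is bounded, since
\[
  \|\phi(t,y_0)\| \le \|\phi(t,y_0)-\phi(t,x(0))\| + \|x(t)\| \le \|y_0-x(0)\| + \sup_{s\ge 0}\|x(s)\|,
\]
although only non-expansiveness along the given trajectory $x(t)$ will be needed below.

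The core of the proof is to introduce the functional $h:\real^n\to\real_{\ge 0}$ defined by
\[
  h(y) := \limsup_{t\to\infty} \|y-x(t)\|,
\]
and to establish four properties: (i) finiteness, $h(y)\le \|y\|+M$ with $M:=\sup_{s\ge 0}\|x(s)\|<\infty$; (ii) $1$-Lipschitz continuity, immediate from the reverse triangle inequality inside the $\limsup$; (iii) coercivity, $h(y)\ge \|y\|-M$; and (iv) convexity. Property~(iv) is the only real subtlety: one writes $h(y)=\inf_{T\ge 0}\sup_{t\ge T}\|y-x(t)\|$; for each fixed $T$ the inner supremum is convex as a supremum of convex functions; as $T$ grows these convex functions are non-increasing pointwise; and the pointwise limit of a monotone family of convex functions is convex. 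I expect this to be the step needing the most care.

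With these properties in hand, set $r_C := \min_{y\in C} h(y)$, attained because $h$ is continuous and coercive and $C$ is closed, and define
\[
  W := \{y \in C : h(y)=r_C\}.
\]
Then $W$ is closed and bounded (hence compact) by continuity and coercivity of $h$, convex as the intersection of the convex sublevel set $\{h\le r_C\}$ with the convex set $C$, and nonempty. For forward invariance, non-expansiveness of $\phi(s,\cdot)$ together with the identity $x(t)=\phi(s,x(t-s))$ for $t\ge s$ yields, for every $y\in C$ and $s\ge 0$,
\[
  h(\phi(s,y)) = \limsup_{t\to\infty}\|\phi(s,y)-\phi(s,x(t-s))\| \le \limsup_{t\to\infty}\|y-x(t-s)\| = h(y),
\]
so $h$ is non-increasing along trajectories. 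For $y\in W$ this gives $h(\phi(s,y))\le r_C$, and $\phi(s,y)\in C$ forces $h(\phi(s,y))\ge r_C$ by the definition of $r_C$; hence $\phi(s,y)\in W$ and $W$ is forward invariant.
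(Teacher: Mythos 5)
Your proof is correct, and it takes a genuinely different route from the paper's. The paper works directly with sets: it fixes $r>0$ with the trajectory contained in $\overline{B}_{\|\cdot\|}(x(0),r)\cap C$, forms the nested family $\mcU_s=\bigcap_{t\ge s}\big(\overline{B}_{\|\cdot\|}(x(t),r)\cap C\big)$, and takes $W=\mathrm{cl}\big(\bigcup_{s\ge 0}\mcU_s\big)$; non-emptiness is obtained from the finite intersection property (weak contraction gives $\|x(t)-x(t_i)\|\le\|x(t-t_i)-x(0)\|\le r$ for $t\ge t_i$, so the tail of the trajectory itself witnesses every finite intersection), convexity from the fact that a nested union of convex sets is convex, and invariance from non-expansiveness mapping $\mcU_s$ into $\mcU_{s+\eta}$. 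You instead package the same information into the single functional $h(y)=\limsup_{t\to\infty}\|y-x(t)\|$ --- essentially the asymptotic-center construction from nonexpansive fixed-point theory --- and take $W$ to be its argmin over $C$. Your steps check out: convexity of $h$ holds because $h$ is the pointwise monotone limit of the convex functions $y\mapsto\sup_{t\ge T}\|y-x(t)\|$ (a pointwise limit, unlike a general infimum, of convex functions is convex); coercivity and continuity give attainment of the minimum by Weierstrass; the sublevel-set description $W=\{h\le r_C\}\cap C$ gives convexity and compactness; and the identity $x(t)=\phi(s,x(t-s))$ together with non-expansiveness of $\phi(s,\cdot)$ on the convex invariant set $C$ (which is indeed the $c=0$ case of the path-integration argument in Theorem~\ref{thm:partialcontraction}\ref{p1:estimates}) shows $h$ is non-increasing along trajectories, so the argmin set is forward invariant. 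What your route buys is a cleaner non-emptiness argument, with Weierstrass replacing the finite-intersection-property step; what it costs is the extra care needed to verify convexity of a $\limsup$, which the paper's purely set-theoretic construction avoids.
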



Next we establish a useful qualitative dichotomy.

\begin{theorem}[Dichotomy for qualitative behavior of trajectories of
  weakly contracting systems]
  \label{thm:weak-dichotomy}
 Consider the time-invariant dynamical system~\eqref{eq:nonlinear} with a
 convex invariant set $C\subseteq \real^n$. Let $\|\cdot\|$ be a norm on
 $\real^n$ with the associated matrix measure $\mu$. Suppose that the
 system~\eqref{eq:nonlinear} is weakly contracting on $C$ with respect to
 $\|\cdot\|$. Then either of the following exclusive conditions hold:
 \begin{enumerate}
 \item\label{thmwd:p1} the dynamical system~\eqref{eq:nonlinear} has at
   least one stable equilibrium point $x^*\in C$ and every trajectory
   of~\eqref{eq:nonlinear} starting in $C$ is bounded; or
   
 \item\label{thmwd:p2} the dynamical system~\eqref{eq:nonlinear} has no
   equilibrium point in $C$ and every trajectory of~\eqref{eq:nonlinear}
   starting in $C$ is unbounded.
 \end{enumerate}

 Moreover, under condition~\ref{thmwd:p1}, the following statement hold:
 \begin{enumerate}\setcounter{enumi}{2}
 \item\label{thmwd:p4.5} if $\jac{f}(x^*)$ is Hurwitz, then $x^*$ is
   {\color{black}the unique globally exponentially stable equilibrium point
     in $C$ with convergence rate $-\alpha(\jac{f}(x^*))$}.
 \end{enumerate}
\end{theorem}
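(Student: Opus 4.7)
The plan is to first leverage Coppel's inequality (Theorem~\ref{thm:coppel}) specialized to the weakly contracting case to conclude that distances between trajectories are non-increasing, $\|\phi(t,x_0)-\phi(t,y_0)\|\le\|x_0-y_0\|$ for all $x_0,y_0\in C$ and $t\ge 0$. This yields the easy direction of the dichotomy: if an equilibrium $x^*$ exists, then $\|\phi(t,x_0)-x^*\|\le\|x_0-x^*\|$, so every trajectory is bounded and $x^*$ is Lyapunov stable. Moreover, if any single trajectory $t\mapsto\phi(t,x_0)$ is bounded, then for every $y_0\in C$ the triangle inequality $\|\phi(t,y_0)\|\le\|\phi(t,x_0)\|+\|y_0-x_0\|$ shows that all trajectories are bounded; this gives the ``all-or-nothing'' flavor of the dichotomy once one direction is established.

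For the harder forward direction, suppose some trajectory in $C$ is bounded. Lemma~\ref{lem:weak-implies-invariant} then produces a compact convex invariant set $W\subseteq C$. For each $n\in\N$, Brouwer's fixed point theorem applied to the continuous self-map $\phi(1/n,\cdot):W\to W$ yields a point $x_n\in W$ with $\phi(1/n,x_n)=x_n$; the semigroup property then gives $\phi(k/n,x_n)=x_n$ for every nonnegative integer $k$. Extracting a convergent subsequence $x_n\to x^*\in W$ and, for arbitrary $t\ge 0$, using the joint continuity of the flow with $k_n/n:=\lfloor tn\rfloor/n\to t$, we pass to the limit in $\phi(k_n/n,x_n)=x_n$ to obtain $\phi(t,x^*)=x^*$ for all $t\ge 0$, i.e., $x^*$ is an equilibrium. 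This closes the dichotomy.

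For statement~\ref{thmwd:p4.5}, the Hurwitz hypothesis together with Theorem~\ref{thm:abscissa}\ref{omm:p1:semi-norm} lets us select, for each small $\epsilon>0$, a norm $\|\cdot\|_\epsilon$ with $\mu_{\|\cdot\|_\epsilon}(\jac{f}(x^*))\le\alpha(\jac{f}(x^*))+\epsilon<0$; continuity of $\jac f$ then extends the bound to a neighborhood $U_\epsilon\ni x^*$, giving local exponential convergence to $x^*$ in $\|\cdot\|_\epsilon$ at rate $-\alpha(\jac f(x^*))-\epsilon$. Let $B\subseteq C$ be the basin of attraction of $x^*$. Local exponential stability makes $B$ open. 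The key observation is that $B$ is also closed in $C$: if $x_n\in B$ and $x_n\to x\in C$, then for any $\epsilon'>0$ we pick $n$ so that $\|x-x_n\|<\epsilon'/2$ and then $T$ so that $\|\phi(t,x_n)-x^*\|<\epsilon'/2$ for $t\ge T$, whence non-expansion yields $\|\phi(t,x)-x^*\|<\epsilon'$, so $x\in B$. Since $C$ is convex hence connected, the nonempty clopen set $B$ equals $C$. Uniqueness of $x^*$ follows, because any other equilibrium $y^*\in C=B$ would satisfy $y^*=\phi(t,y^*)\to x^*$. The sharp rate $-\alpha(\jac f(x^*))$ is then recovered by noting that trajectories eventually enter every $U_\epsilon$ and inherit the local rate $-\alpha(\jac f(x^*))-\epsilon$, after which we let $\epsilon\to 0^+$.

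The main obstacle is the third paragraph: transferring the purely local Hurwitz information at $x^*$ into a global exponential convergence statement with the linearization rate, while using only the global \emph{weak} contraction. The open-and-closed argument enabled by convexity of $C$ is the crucial bridge; the remaining delicacy is in letting $\epsilon\to 0^+$ to obtain the sharp rate $-\alpha(\jac f(x^*))$ uniformly over trajectories, which must be done in terms of asymptotic exponents rather than a single uniform multiplicative constant.
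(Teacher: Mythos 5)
Your proposal is correct, and it diverges from the paper's proof in two places worth noting. For the dichotomy, the paper also reduces to Lemma~\ref{lem:weak-implies-invariant} but then simply cites Yorke's result \cite[Lemma~4.1]{AL-JAY:76} to extract an equilibrium from the compact convex invariant set $W$; your Brouwer argument (fixed points of $\phi(1/n,\cdot)$, periodic orbits of period $1/n$, and a limit along $\lfloor tn\rfloor/n\to t$ using joint continuity of the flow) is essentially an inlined proof of that cited lemma, so the logical skeleton is the same. The genuine difference is in statement~\ref{thmwd:p4.5}. The paper's argument is quantitative: it picks $\epsilon,T>0$ so that $\phi(T,\cdot)$ maps $\overline{B}_{\|\cdot\|}(x^*,\epsilon)$ into $\overline{B}_{\|\cdot\|}(x^*,\epsilon/2)$, takes the point $y$ where the segment from $x(0)$ to $x^*$ meets $\partial B_{\|\cdot\|}(x^*,\epsilon)$ (convexity of $C$ guarantees $y\in C$), and uses non-expansion plus collinearity to show $\|x(T)-x^*\|\le\|x(0)-x^*\|-\epsilon/2$, so the trajectory enters the local basin in finite time. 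Your open-and-closed argument on the basin $B$ inside the connected set $C$ reaches the same conclusion by softer topological means; it trades the explicit finite-time estimate for a cleaner closedness step (which, as you note, is exactly where weak contraction is used). Both proofs then handle the rate identically in spirit: the paper asserts that the convergence rate equals that of the linearization, while you make this slightly more explicit via near-optimal norms from Theorem~\ref{thm:abscissa} and a limit $\epsilon\to0^+$ in the asymptotic exponent. Either route is acceptable; the paper's gives a constructive entry time, yours generalizes more readily to settings where the local basin geometry is less explicit.
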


\begin{proof}
  Suppose that the equation $f(x) = \vect{0}_n$ has at least one
  solution $x^*$ in $C$. Let $t\mapsto x(t)$ be a trajectory of the
  system. Since the system~\eqref{eq:nonlinear} is weakly contracting, we have
  \begin{align*}
    \|x(t)-x^*\| \le \|x(0) - x^*\|,\quad\forall t\ge 0. 
  \end{align*}
  By setting $M = \|x(0) - x^*\|$ and using triangle inequality, we see
  that $\|x(t)\|\le \|x^*\| + M$, for every $t\ge 0$. This implies that
  $t\mapsto x(t)$ is bounded and thus statement~\ref{thmwd:p1} holds. Now
  suppose that the algebraic equation $f(x) = \vect{0}_n$ does not have any
  solution in $C$. We now need to show that every trajectory
  of~\eqref{eq:nonlinear} is unbounded. By contradiction, assume $t\mapsto
  x(t)$ is a bounded trajectory of~\eqref{eq:nonlinear}. In
  Lemma~\ref{lem:weak-implies-invariant} we establish that there exists a
  compact convex invariant set $W\subseteq C$ for the dynamical
  system~\eqref{eq:nonlinear}. Therefore, by Yorke
  Theorem~\cite[Lemma~4.1]{AL-JAY:76}, the system~\eqref{eq:nonlinear} has
  an equilibrium point inside $W\subseteq C$. This is in contradiction with
  the assumption that $f(x) = \vect{0}_n$ has no solution inside
  $C$. Therefore, every trajectory of~\eqref{eq:nonlinear} is
  unbounded. This completes the proof of the dichotomy.

  Regarding part~\ref{thmwd:p4.5}, $\jac{f}(x^*)$ being Hurwitz implies
  that $x^*$ is a locally exponentially stable equilibrium point
  for~\eqref{eq:nonlinear}. Therefore, there exists $\epsilon, T > 0$ such
  that $\overline{B}_{\|\cdot\|}(x^*,\epsilon)$ is in the region of
  attraction of the equilibrium point $x^*$ and, for every $z\in
  \overline{B}_{\|\cdot\|}(x^*,\epsilon)$, we have $\phi(T,z) \in
  \overline{B}_{\|\cdot\|}(x^*,\epsilon/2)$. Let $t\mapsto x(t)$ denote a
  trajectory of the dynamical system. Assume that $y\in \partial
  B_{\|\cdot\|}(x^*,\epsilon)$ is a point on the straight line connecting
  $x(0)$ to the unique equilibrium point $x^*$. Then we have
  \begin{align*}
    \left\|x(T) - x^*\right\| &\le  \left\|x(T) -
    \phi(T,y)\right\| + \left\|\phi(T,y)- x^*\right\|
    \\& \le
    \left\|x(0)-y\right\| + \epsilon/2 =\|x(0)- x^*\|-\epsilon/2,
  \end{align*}
  where the last equality holds because $x^*$, $y$, and $x(0)$ are on the
  same straight line. Therefore, after time $T$, $t\mapsto \left\|x(t) -
  x^*\right\|$ decreases by $\epsilon/2$. As a result, there exists a
  finite time $T_{\inf}$ such that, for every $t\ge T_{\inf}$, we have
  $x(t)\in \overline{B}_{\|\cdot\|}(x^*,\epsilon)$. Since
  $\overline{B}_{\|\cdot\|}(x^*,\epsilon)$ is in the region of attraction
  of $x^*$ the trajectory $t\mapsto x(t)$ converges to $x^*$. {\color{black}Since
  $\jac{f}(x^*)$ is Hurwitz, the convergence rate of~\eqref{eq:nonlinear}
  is equal to the convergence rate of its linearization around $x^*$
  which is $-\alpha(\jac{f}(x^*))$.}
\end{proof}
{\color{black}
  \begin{remark}[Comparison with literature]    
    To the best of our knowledge, the dichotomy for weakly contracting
    system (Theorem~\ref{thm:weak-dichotomy}\ref{thmwd:p1}
    and~\ref{thmwd:p2}) is novel. Regarding
    Theorem~\ref{thm:weak-dichotomy}\ref{thmwd:p4.5}:
    \begin{enumerate}
    \item A special case for weakly contracting system with respect to
      $\ell_1$-norm is proved in~\cite[Lemma 6]{EL-GC-KS:14};
    \item A weaker version for when $x^*$ is locally asymptotically (but
      not necessarily exponentially) stable is proved in~\cite[Lemma
        III.1]{PCV-SJ-FB:19r}.
    \item A stronger version for when $\mu(\jac{f}(x^*))<0$ is due
      to~\cite[Corollary 8]{SC:19}. \oprocend
    \end{enumerate}
\end{remark}}

The dichotomy in Theorem~\eqref{thm:weak-dichotomy} 
characterizes the qualitative asymptotic behavior of a weakly contracting system when it
has no equilibrium point. However, for weakly contracting systems with
at least one equilibrium point, this theorem only guarantees
boundedness of the trajectories. The following theorem shows that, for
some classes of norms, weak contractivity implies convergence to equilibrium points.

 \begin{theorem}[Weak contraction and convergence to equilibria]\label{thm:1-inf}
   Consider the time-invariant dynamical system~\eqref{eq:nonlinear} with a
   convex invariant set $C\subseteq \real^n$. Suppose that the vector field
   $f$ is differentiable and piecewise real analytic with an equilibrium
   point $x^*\in C$. Suppose that there exists $p\in\{1,\infty\}$ and an
   invertible matrix $Q\in \real^{n\times n}$ such that
   $\mu_{p,Q}(\jac{f}(x))\le 0$, for every $x\in C$, i.e., the system is
   weakly contracting with respect to $\|\cdot\|_{p,Q}$ on $C$. Then every
   trajectory of~\eqref{eq:nonlinear} starting in $C$ converges to an
   equilibrium point.
\end{theorem}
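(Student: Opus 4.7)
The plan is to combine a Lyapunov argument with LaSalle-type reasoning, exploiting the polyhedral structure of the $\ell_1$ and $\ell_\infty$ norms together with the piecewise analyticity of $f$. First, I take the non-strict Lyapunov candidate $V(x) := \|Q(x - x^*)\|_p$. Weak contractivity with respect to $\|\cdot\|_{p,Q}$, together with the variational argument used in the proof of Theorem~\ref{thm:partialcontraction}\ref{p1:estimates} (here with an ordinary norm, so no invariance hypothesis is needed), yields $V(\phi(t,x_0)) \le V(x_0)$ for all $t \ge 0$. Hence every forward orbit starting in $C$ is bounded, its $\omega$-limit set $\Omega(x_0) \subseteq C$ is nonempty, compact, connected, and invariant, and $V$ is constant on $\Omega(x_0)$, equal to $c := \lim_{t\to\infty} V(\phi(t,x_0))$.

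Second, and this is the heart of the argument, I would prove that every point of $\Omega(x_0)$ is an equilibrium. Fix $y \in \Omega(x_0)$; by invariance $\phi(t,y) \in \Omega(x_0)$ and so $V(\phi(t,y)) = c$ for all $t \ge 0$, giving $\dot V \equiv 0$ along this orbit. The Dini derivative of $V$ along $f$ admits a combinatorially structured, piecewise-linear expression in terms of the active index sets associated with $Q(\phi(t,y) - x^*)$: the argmax set for $p=\infty$, and the sign vector of the nonzero components for $p=1$. While $\phi(\cdot,y)$ lies in the interior of a single analytic region $\Sigma_i$, the trajectory, the active index set (outside a discrete set of times), and the components of $Qf(\phi(\cdot,y))$ are all real analytic. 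The condition $\dot V \equiv 0$ on an open interval therefore forces the inequalities derived from the column/row dominance characterization of $\mu_{p,Q}$ in Theorem~\ref{lem:computational}\ref{p4:1-norm}--\ref{p4:inf-norm} to saturate identically, which collapses to $Qf(\phi(t,y)) = \vect{0}_n$, i.e.\ $f(\phi(t,y)) = \vect{0}_n$, on that interval. Patching across the finitely many crossings between pieces $\Sigma_i$ of the analytic partition gives $f(\phi(t,y)) \equiv \vect{0}_n$, so $y$ is an equilibrium.

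Third, once $\Omega(x_0)$ is known to consist entirely of equilibria, convergence to a single point follows. For any fixed $y \in \Omega(x_0)$, weak contractivity applied with $y$ playing the role of $x^*$ shows that $t \mapsto \|Q(\phi(t,x_0) - y)\|_p$ is non-increasing and hence converges to some $\ell(y) \ge 0$. Because $y$ is itself an $\omega$-limit point of $\phi(\cdot,x_0)$, we must have $\ell(y) = 0$, so $\phi(t,x_0) \to y$ and $\Omega(x_0) = \{y\}$.

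The main obstacle is unambiguously the second step, namely reducing an invariant orbit on a level set of $V$ to an equilibrium. Both hypotheses of the theorem are essential there: the polyhedral nature of $\|\cdot\|_{p,Q}$ for $p\in\{1,\infty\}$ is what renders the Dini derivative of $V$ combinatorially tractable via the explicit formulas of Theorem~\ref{lem:computational}, while piecewise real analyticity rules out trajectories that drift along a level set of $V$ while keeping $\dot V=0$ without $f$ actually vanishing. For general smooth norms, or for merely $C^1$ vector fields, $\Omega(x_0)$ can contain nontrivial orbits and the convergence conclusion genuinely fails.
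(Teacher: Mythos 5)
Your overall architecture (boundedness from weak contraction, a LaSalle argument, polyhedral structure plus piecewise analyticity to show the limiting invariant set consists of equilibria, then single-point convergence) matches the paper's, and your third step is actually a welcome addition: the paper's proof stops at convergence to the \emph{set} of equilibria, and your observation that weak contractivity toward any $\omega$-limit equilibrium forces $\Omega(x_0)$ to be a singleton closes that. The difficulty is your second step. You take $V(x)=\|Q(x-x^*)\|_p$ and claim that $\dot V\equiv 0$ along an orbit in $\Omega(x_0)$, combined with saturation of the dominance inequalities of Theorem~\ref{lem:computational}\ref{p4:1-norm}--\ref{p4:inf-norm}, ``collapses'' to $Qf\equiv\vect{0}_n$. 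That implication is not valid, even pointwise. Writing $u=Q(y-x^*)$ and $Qf(y)=\bar B u$ with $\bar B=\int_0^1 Q\jac{f}(x^*+s(y-x^*))Q^{-1}\,ds$ and $\mu_1(\bar B)\le 0$, the condition $\dot V=0$ forces the active columns of $\bar B$ to have vanishing $\ell_1$-column sums and the off-diagonal entries to align in sign with $\sign(u)$; but these saturation conditions do not imply $\bar B u=\vect{0}_n$. For instance, with $u=(1,1,0)^\top$ and $\bar B$ having first two columns $(-1,0,1)^\top$ and $(0,-1,1)^\top$ (plus any admissible third column), every inequality saturates and $\dot V=0$, yet $\bar B u=(-1,-1,2)^\top\ne\vect{0}_3$. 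Excluding such configurations from persisting along an analytic orbit requires an additional argument that your sketch does not supply.

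The paper sidesteps this entirely by choosing the Lyapunov function $V(x)=\|f(x)\|_{p,Q}$ --- the norm of the vector field, not the distance to $x^*$. There the constant sign vector is $\mathbf{w}=\sign\!\big(Qf(y(\cdot))\big)$, so that $V$ itself equals $\mathbf{w}^\top Qf(y(t))$; the condition $\mathcal{L}_f V=0$ integrates to $\mathbf{w}^\top Qf(y(t))=\beta$ constant, a second integration (using $f(y(t))=\dot y(t)$) gives $\mathbf{w}^\top Qy(t)=\beta t+\eta$, and boundedness of the trajectory forces $\beta=0$, hence $V\equiv 0$ and $f(y(t))\equiv\vect{0}_n$. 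The identity $\|v\|_1=\sign(v)^\top v$ applied to $v=Qf$ --- rather than saturation of the measure inequalities --- is what closes the argument, and it is unavailable with your choice of $V$, since your $\mathbf{w}$ is the sign of $Q(y-x^*)$, not of $Qf(y)$. If you insist on your $V$, you would instead need the rigidity of one-parameter families of isometries of a polyhedral ball adapted to the time-varying averaged system $\dot u=\bar B(t)u$, which is a substantially different and harder route.
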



\begin{proof} We prove the theorem for $p=1$. The proof for
  $p=\infty$ is similar and we omit it. Suppose that the vector field $f$
  is piecewise real analytic and the system~\eqref{eq:nonlinear} is
  weakly contracting with respect to $Q$-weighted $\ell_1$-norm. By
  Theorem~\ref{thm:weak-dichotomy}\ref{thmwd:p1} every trajectory of the
  system~\eqref{eq:nonlinear} is bounded. Now we use the LaSalle Invariance
  Principle~\cite[Theorem 15.7]{FB:20} for the function
  $V(x)=\|f(x)\|_{1,Q}$ on the invariant convex set $C$. 
  For every $c>0$, $V$ is continuous and
  $V^{-1}(c)$ is closed. Moreover, for every trajectory $t\mapsto x(t)$ starting inside
  $V^{-1}(c)$, we have $\tfrac{d}{dt} f(x(t)) = \jac f(x(t))
  f(x(t))$. Using Theorem~\ref{thm:coppel},
    \begin{align*}
      \|f(x(t))\|_{1,Q} &\le \exp\left(\int_{0}^{t}\mu_{1,Q}(\jac
      f(x(\tau)))d\tau\right) \|f(x(0))\|_{1,Q}\\ & \le \|f(x(0))\|_{1,Q}
      \le c,
    \end{align*}
    where the second inequality holds because system~\eqref{eq:nonlinear}
    is weakly contracting with respect to $Q$-weighted $\ell_1$-norm and thus
    $\mu_{1,Q}(\jac f(x))\le 0$, for every $x\in C$ and the last inequality
    holds because $x(0) \in V^{-1}(c)$. Thus, $V^{-1}(c)$ is a closed invariant
    set for~\eqref{eq:nonlinear}.  Therefore, by the LaSalle Invariance
    Principle, the largest invariant set $M$ inside the set
  \begin{equation*}
     \setdef{x\in C}{ \mathcal{L}_f V (x) = 0}\cap V^{-1}(c)
  \end{equation*}
  is nonempty and every trajectory of~\eqref{eq:nonlinear} converges to
  $M$. Let $t\mapsto y(t)$ denote a trajectory in the set $M$ (and
  therefore in the set $C$). Our goal is to show that $t\mapsto y(t)$ is an
  equilibrium point. Since the vector field
  $f$ is piecewise real analytic, there exists a partition
  $\{\Sigma_j\}_{j=1}^{m}$ of $C$ such that $f$ is real analytic on
  $\mathrm{int}(\Sigma_j)$, for every $j\in \{1,\ldots,m\}$. Now consider
  the trajectory $t\mapsto y(t)$. It is clear that, there exists $k\in
  \{1,\ldots,m\}$ and $T>0$ such that $y(t)\in \mathrm{cl}(\Sigma_k)$, for
  every $t\in [0,T]$. Since $f$ is real analytic on
  $\mathrm{int}(\Sigma_k)$, there exists a real analytic vector field $g:
  C\to \real^n$ such that $g(x) = f(x)$, for every $x\in
  \mathrm{cl}(\Sigma_{k})$. This implies that, for every $t\in [0,T]$, the
  curve $t\mapsto y(t)$ is a solution of the dynamical system
    \begin{align*}
      \dot{y}(t) = f(y(t)) = g(y(t)). 
      \end{align*}
Therefore, the curve $t\mapsto y(t)$ is real analytic for every $t\in
[0,T]$.  Note that, for every $t\in [0,T]$,  
   \begin{align*}
    0 &= \frac{d}{dt} V (y(t)) = \sum_{i=1}^{n} \sign\!\Big( (Qf)_i(y(t)) \Big)
    (Q\dot{f})_i(y(t)) \\ & = \sum_{i=1}^{n} \sign\!\Big( (Qg)_i(y(t)) \Big)
    (Q\dot{g})_i(y(t)).
  \end{align*}
 Since $g$ is real analytic and $t\mapsto y(t)$ is real analytic on
 $[0,T]$, then $t\mapsto (Qg)_i(y(t))$ is real analytic, for $i\in
 \{1,\ldots,n\}$. Therefore, either of the following conditions hold:
  \begin{enumerate}
  \item $(Qg)_i(y(t))\ne 0$, for every $t\in [0,T]$, or
  \item $(Qg)_i(y(t))= 0$, for every $t\in [0,T]$.
  \end{enumerate}
  Since $t\mapsto (Qg)_i(y(t))$ is continuous, for
 every $i\in \{1,\ldots,n\}$, this implies that $\sign((Qg)_i(y(t))) = \sign((Qg)_i(y(0)))$, for every
  $t\in[0,T]$ and every $i\in \{1,\ldots,n\}$. Define $\mathbf{w}\in
  \real^n$ by
  \[
  \mathbf{w} = \sign(Qg(y(0))).
  \]
  Thus, we have $\mathcal{L}_f V (y(t)) = \mathbf{w}^{\top}Q\dot{g}(y(t)) = 0$, for every $t\in [0,T]$.
  By integrating this condition, we get
  \[
  \mathbf{w}^{\top}Qg(y(t)) = \beta,\quad\forall t\in [0,T],
  \]
  for some constant $\beta\ge 0$. We first show that $\eta=0$. Note that $g(y(t))=f(y(t)) =
  \dot{y}(t)$. Therefore, we have $\mathbf{w}^{\top}Q\dot{y}(t) = \beta$, for
  every $t\in [t_1,t_2]$. Integrating with respect to time, we get
  \[
  \mathbf{w}^{\top}Q y(t) = \beta t + \eta, \quad\forall  t\in [0,T].  
  \]
  for some constant $\eta\in \real$. Since every trajectory of~\eqref{eq:nonlinear}
  starting in $C$ is bounded, we have $\beta = 0$. Now, note that, for every
  $t\in [0,T]$,
  \begin{align*}
    \|f(y(t))\|_{1,Q} & =
    \sum_{i=1}^{n}\sign((Qf)_i(y(t))(Qf)_i(y(t)) \\ & =
    \mathbf{w}^{\top}Qg(y(t)) = 0 .
  \end{align*}
  This implies that $f(y(t)) = \vect{0}_n$, for every $t\in [0,T]$. Since
  $t\mapsto y(t)$ is a trajectory of~\eqref{eq:nonlinear} and it is
  continuous, we have $f(y(t)) = \vect{0}_n$, for every $t\ge 0$. This
  implies that every trajectory inside $M$ is an equilibrium
  point. Therefore, every trajectory of~\eqref{eq:nonlinear} starting in
  $C$ converges to the set of equilibrium points.
\end{proof}


\begin{remark}[{\color{black}Convergence to equilibrium points}]
  \begin{enumerate}
     

  \item We believe that Theorem~\ref{thm:1-inf} can be generalized to
    polyhedral norms (see~\cite[Definition 5.5.2]{RAH-CRJ:12} for
    definition of polyhedral norms); we omit this generalization in
    the interest of brevity.

  \item Theorem~\ref{thm:1-inf} does not necessarily hold for systems that
    are weakly contracting with respect to $\ell_p$-norms for
    $p\not\in\{1,\infty\}$. The following example illustrates this fact for
    the $\ell_2$-norm. \oprocend
  \end{enumerate} 
\end{remark}

\begin{example}\label{ex:weak-contractive}
  The dynamical system on $\real^2$
  \begin{equation}\label{eq:circles}
  \begin{split}
    &\dot{x}_1 = x_2,\\ &\dot{x}_2 = -x_1,
  \end{split}
  \end{equation}
  has a unique equilibrium point $\vect{0}_2$ and its vector field $f$ is
  piecewise real analytic. Let $\mu_2$ denote the matrix measure with
  respect to the $\ell_2$-norm on $\real^2$.  Then
  \begin{align*}
    \mu_2(\jac{f}(x)) = \lambda_{\max}\!\left(\frac{\jac{f}(x) +
      \jac{f}^{\top}(x)}{2}\right)
    =
    \lambda_{\max}\left(\vect{0}_{2\times
      2}\right) = 0,
  \end{align*}
  so that system~\eqref{eq:circles} is weakly contracting with respect
  to~$\|\cdot\|_2$. However, $t\mapsto (\sin(t),\cos(t))$ is a periodic
  trajectory of~\eqref{eq:circles} which does not converge to the
  equilibrium point. \oprocend
\end{example}

\section{Doubly-contracting systems}\label{sec:double}

Examples~\ref{ex:semi-contractive} and~\ref{ex:weak-contractive} illustrate
that, {\color{black} for time-invariant systems}, semi-contractivity or
weak-contractivity alone do not guarantee the convergence of trajectories
to equilibrium points. Here we show that, for time-invariant systems, a
combination of these two properties can ensure that every trajectory
converges to an equilibrium point.

\begin{theorem}[Double contraction and convergence to equilibria]\label{thm:mixed}
Consider the time-invariant dynamical system~\eqref{eq:nonlinear} with a
convex invariant set $C\subseteq \real^n$. Suppose that $\mcS\subseteq
\real^n$ is a vector subspace {\color{black}consisting of} equilibrium points
of~\eqref{eq:nonlinear}. Assume that there exist
\begin{enumerate}[label*=(A\arabic*)]
\item\label{assu:weak} a norm $\|\cdot\|$ such that~\eqref{eq:nonlinear} is
  weakly contracting with respect to $\|\cdot\|$ on $C$;  and
\item\label{assu:partial} a semi-norm $\verti{\cdot}$ with
  $\Ker\verti{\cdot}=\mcS$ such that~\eqref{eq:nonlinear} is
  semi-contracting with respect to $\verti{\cdot}$ on $C$. 
\end{enumerate}
Then, for every trajectory $t\mapsto x(t)$ of~\eqref{eq:nonlinear} starting
in $C$, there exists $x^*\in \mcS$ such that $\lim_{t\to\infty}x(t)=x^*$
with exponential convergence rate $-\alpha_{\mcS^{\perp}}(\jac
f(x^*))$.
\end{theorem}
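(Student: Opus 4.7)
The plan is to combine the two contractivity assumptions with the fact that $\mcS$ is a subspace of equilibria in three stages. First, weak contraction plus the existence of an equilibrium in $\mcS$ (e.g.\ $\vect{0}_n$) yields, via Theorem~\ref{thm:weak-dichotomy}\ref{thmwd:p1}, that every trajectory $x(t)=\phi(t,x_0)$ with $x_0\in C$ is bounded; let $K\subseteq C$ denote its compact closure. Second, since $\mcS=\Ker\verti{\cdot}$ and every point in $\mcS$ is an equilibrium, we have $f(\mcS)=\{\vect{0}_n\}\subseteq\Ker\verti{\cdot}$, so the shifted-invariance condition~\eqref{ShfI} holds with $x^*=\vect{0}_n\in\Ker\verti{\cdot}^{\perp}$. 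Theorem~\ref{thm:partialcontraction}\ref{p2:convergence} (whose argument localizes to an invariant $C$) then yields $\verti{x(t)}\le e^{-ct}\verti{x_0}$, and letting $P$ denote the orthogonal projection onto $\mcS^{\perp}$, finite-dimensional norm equivalence on $\mcS^{\perp}$ gives $\|Px(t)\|\le K_1 e^{-ct}\verti{x_0}$ for some constant $K_1>0$.

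Next I show the orbit converges to a specific $x^*\in\mcS$. Decompose $x(t)=Px(t)+(I-P)x(t)$, where $(I-P)x(t)\in\mcS$. Since $f$ vanishes on $\mcS$ and is $C^2$, it is Lipschitz on the compact set $K$ with some constant $L$, giving
\[
\bigl\|\tfrac{d}{dt}(I-P)x(t)\bigr\|=\bigl\|(I-P)\bigl(f(x(t))-f((I-P)x(t))\bigr)\bigr\|\le L\,\|Px(t)\|\le LK_1 e^{-ct}\verti{x_0}.
\]
Integrability of this bound shows that $(I-P)x(t)$ is a Cauchy curve as $t\to\infty$ and converges to some $x^*\in\mcS$ (which is closed). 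Combined with $Px(t)\to\vect{0}_n$, this yields $\phi(t,x_0)\to x^*$, with convergence rate at least $c$.

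Finally, to recover the sharp rate $-\alpha_{\mcS^{\perp}}(\jac f(x^*))$, I use the infimum characterization of the spectral abscissa. Since $f$ vanishes on $\mcS$, the Jacobian satisfies $\jac f(x^*)\mcS=\{\vect{0}_n\}$, so $\mcS$ is invariant under $\jac f(x^*)$. By Theorem~\ref{thm:abscissa}\ref{omm:p1:semi-norm}, for any $\epsilon>0$ there exists a semi-norm $\verti{\cdot}_\epsilon$ with $\Ker\verti{\cdot}_\epsilon=\mcS$ and $\mu_{\verti{\cdot}_\epsilon}(\jac f(x^*))\le -\alpha_{\mcS^{\perp}}(\jac f(x^*))+\epsilon$. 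Continuity of $\jac f$ and Theorem~\ref{thm:measureporp}\ref{p4:lipschitz} extend this bound with rate $-\alpha_{\mcS^{\perp}}(\jac f(x^*))+2\epsilon$ to a neighborhood $U$ of $x^*$. Since $x(t)\in U$ for $t\ge T$ by the previous step, re-running the semi-contraction estimate with $\verti{\cdot}_\epsilon$ on $U$ and the Cauchy estimate of the previous step with this sharper rate yields the claimed exponential bound, and letting $\epsilon\to 0$ completes the proof.

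The main obstacle is this last step: the sharp rate is not visible from either (A1) or (A2) with the given semi-norm, and producing it requires selecting a semi-norm tailored to the linearization at $x^*$ via Theorem~\ref{thm:abscissa} and carefully transferring the local linear bound into a global nonlinear rate while controlling the Lipschitz remainder with the exponential decay of the transverse component already established in the previous steps.
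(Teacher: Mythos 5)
Your proof is correct in substance, but it replaces the paper's central convergence argument with a genuinely different one. The paper exploits weak contraction geometrically: since every $(I_n-\mathcal{P})x(t)\in\mcS$ is an equilibrium, each closed ball $\overline{B}_{\|\cdot\|}((I_n-\mathcal{P})x(t),\|\mathcal{P}x(t)\|)$ is invariant, and the nested intersections $D_t$ form a family of compact convex invariant sets whose diameters are controlled by $\|\mathcal{P}x(t)\|$; the Cantor Intersection Theorem then produces the limit $x^*$, and the bound $\|x(t)-x^*\|\le\mathrm{diam}(D_t)\le\|\mathcal{P}x(t)\|$ transfers the semi-contraction rate to the full convergence rate for free. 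You instead argue analytically: since $f$ vanishes on $\mcS$ and is locally Lipschitz, $\frac{d}{dt}(I_n-\mathcal{P})x(t)=(I_n-\mathcal{P})\bigl(f(x(t))-f((I_n-\mathcal{P})x(t))\bigr)$ is bounded by $L\|\mathcal{P}x(t)\|$, which decays exponentially, so the tangential component is Cauchy and integrating the tail gives the same exponential rate. Your route is more elementary and avoids Cantor intersection, at the price of invoking a Lipschitz constant (which must be taken on a compact convex set large enough to contain both the orbit and its projection onto $\mcS$, not just the orbit closure $K$) and of still needing weak contraction for boundedness. The final rate-sharpening step via Theorem~\ref{thm:abscissa} and an $\epsilon$-optimal semi-norm is essentially identical to the paper's; the one point you should make explicit there is that re-running the semi-contraction estimate requires an \emph{invariant} set inside the neighborhood $U$ of $x^*$ on which $\mu_{\verti{\cdot}_\epsilon}(\jac f(x))\le\alpha_{\mcS^{\perp}}(\jac f(x^*))+2\epsilon$ holds -- the paper supplies this with its sets $D_{t_\epsilon}$, and in your setup the weak-contraction ball $\overline{B}_{\|\cdot\|}((I_n-\mathcal{P})x(T),\|\mathcal{P}x(T)\|)$ for $T$ large serves the same purpose.
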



We refer to systems satisfying~\ref{assu:weak} and~\ref{assu:partial} as
\emph{doubly-contracting} in the sense that their trajectories satisfy a
weak contractivity and a semi-contractivity property:
{\color{black}\begin{align*}
  \norm{\phi(t,x_0) - \phi(t,y_0)} &\le    \norm{x_0-y_0},
  \\
  \verti{\phi(t,x_0)} &\le    e^{-ct}\verti{x_0},
\end{align*}}
for every $x_0,y_0\in C$, every $t\in \real_{\ge 0}$, and some $c \in \real_{\ge 0}$.

\begin{proof}
  Let $t\mapsto x(t)$ denote a trajectory of~\eqref{eq:nonlinear} starting
  from $x(0)\in C$.  Let $\mathcal{P}$ be the orthogonal projection onto
  the subspace $\mcS^{\perp}=\Ker\verti{\cdot}^{\perp}$. Note that, for
  every $t\in \real_{\ge 0}$, the orthogonal projection of $x(t)$ onto
  $\mcS=\Ker\verti{\cdot}$ is given by $(I_n-\mathcal{P})x(t)$ and it is an
  equilibrium points for~\eqref{eq:nonlinear}. Moreover, by
  Assumption~\ref{assu:weak}, system~\eqref{eq:nonlinear} is
  weakly contracting with respect to the norm $\|\cdot\|$. This implies
  that, for every $t\in \real_{\ge 0}$ and every $s\ge t$, the point $x(s)$
  remains inside the closed ball
  $\overline{B}_{\|\cdot\|}((I_n-\mathcal{P})x(t),
  \|\mathcal{P}x(t)\|)$. Therefore, for every $t\ge 0$, the point $x(t)$ is
  inside the set $D_t$ defined by
  \begin{align*}
    D_t = \mathrm{cl}\big(\bigcap_{\tau\in [0,t]} \overline{B}_{\|\cdot\|}((I_n-\mathcal{P})x(\tau),
    \|\mathcal{P}x(\tau)\|)\big). 
  \end{align*}
  It is easy to see that, for $s\ge t$, we have $D_{s}\subseteq D_t$. This
  implies that the family $\{D_t\}_{t\in [0,\infty)}$ is a nested family of
    closed subsets of $\real^n$ such that $\mathrm{diam}(D_t)\le
    \|\mathcal{P}x(t)\|$, for every $t\in [0,\infty)$. On the other hand, for every $t\in
    [0,\infty)$, the set $D_t$ is a closed convex invariant set for the
    system~\eqref{eq:nonlinear}. {\color{black} Moreover, $\Ker\verti{\cdot}=\mcS$ consists
    of equilibrium points of~\eqref{eq:nonlinear}. This implies that,
    $\Ker\verti{\cdot}$ is shifted-invariant under the
    system~\eqref{eq:nonlinear}.} Using Assumption~\ref{assu:partial} and Theorem~\ref{thm:partialcontraction}\ref{p2:convergence}, the
    trajectory $t\mapsto x(t)$ converges to the subspace $\mcS$.  This
    means that
    $\lim_{t\to\infty}\mathrm{diam}(D_t)=\lim_{t\to\infty}\|\mathcal{P}x(t)\|=0$. Thus,
    by the Cantor Intersection Theorem~\cite[Lemma 48.3]{JM:00}, there
    exists $x^*\in C$ such that $\bigcap_{t\in [0,\infty)} D_t =
      \{x^*\}$. We show that $\lim_{t\to \infty} x(t) = x^*$. Note that
      $x^*,x(t)\in D_t$, for every $t\in \real_{\ge 0}$. This implies that
      $\|x(t) - x^*\|\le \mathrm{diam}(D_t)$.  This in turn means that
      $\lim_{t\to\infty} \|x(t) - x^*\|=0$ and $t\mapsto x(t)$ converges to
      $x^*$. On the other hand, the trajectory $t\mapsto x(t)$ converges to
      the subspace $\mcS$. Therefore, $x^*\in \mcS$ and {\color{black}it is an
      equilibrium point.} Regarding the convergence rate, by
      Theorem~\ref{thm:abscissa}, we have
      \begin{align*}
        &\alpha_{\mcS^{\perp}}(\jac
        f(x^*))\\ & =\inf\setdef{\mu_{\verti{\cdot}}(\jac
        f(x^*))}{\verti{\cdot}\mbox{ a semi-norm, } \Ker\verti{\cdot}=\mcS}.
      \end{align*}
      First note that system~\eqref{eq:nonlinear} is semi-contracting
      with respect to $\verti{\cdot}$ on $C$. This implies that
      $\alpha_{\mcS^{\perp}}(\jac f(x^*)) \le \mu_{\verti{\cdot}}(\jac
      f(x^*)) < 0$. Let $\epsilon>0$ be such that
      $\epsilon\le \tfrac{1}{3}\left|\alpha_{\mcS^{\perp}}(\jac
        f(x^*))\right|$. There exists a semi-norm
      $\verti{\cdot}_{\epsilon}$ such that
      $\mu_{\verti{\cdot}_{\epsilon}}(\jac f(x^*)) \le
      \alpha_{\mcS^{\perp}}(\jac f(x^*)) + \epsilon < 0$. Consider the
      family of closed convex invariant sets $\{D_t\}_{t\ge 0}$. Since
      $\lim_{t\to\infty}\mathrm{diam}(D_t)= 0$, and $f$ is twice
      differentiable, there exists $t_{\epsilon}\in \real_{\ge 0}$
      such that
      $\mu_{\verti{\cdot}_{\epsilon}}(\jac f(x)) \le
      \alpha_{\mcS^{\perp}}(\jac f(x^*))+ 2\epsilon < 0$, for every
      $x\in D_{t_{\epsilon}}$. Therefore, using
      Theorem~\ref{thm:partialcontraction}\ref{p2:convergence}, the
      trajectory $t\mapsto x(t)$ converges to the subspace $\mcS$ with
      the convergence rate
      $-\alpha_{\mcS^{\perp}}(\jac f(x^*))-
        2\epsilon$. Since $\epsilon$ can be chosen arbitrarily
      small, the trajectory $t\mapsto x(t)$ converges to the subspace
      $\mcS$ with the rate
      $-\alpha_{\mcS^{\perp}}(\jac f(x^*))$. {\color{black}
        As a result, there exists $M>0$ such that, for every $t\ge 0$,
        \begin{align*}
          \|\mathcal{P}x(t)\|\le
          M e^{\alpha_{\mcS^{\perp}}(\jac f(x^*))t}. 
        \end{align*}
        Moreover, we know that $\|x(t)-x^*\|\le
        \mathrm{diam}(D_t)\le \|\mathcal{P}x(t)\|$, for every $t\ge
        0$. This implies that, for every $t\ge 0$, 
        \begin{align*}
          \|x(t)-x^*\|\le M e^{\alpha_{\mcS^{\perp}}(\jac f(x^*))t}.
        \end{align*}
        Thus, $\lim_{t\to\infty}x(t) = x^*$ with rate $-\alpha_{\mcS^{\perp}}(\jac
        f(x^*))$.}
\end{proof}

\begin{remark}
  Theorem~\ref{thm:mixed} provides a convergence rate for trajectories of
  the system which depends only on the converging point and is independent
  of both the norm $\|\cdot\|$ in Assumption~\ref{assu:weak} and the
  semi-norm $\verti{\cdot}$ in Assumption~\ref{assu:partial}. This theorem
  also generalizes the classical results in contraction theory where the
  convergence rate depends on the norm. \oprocend
\end{remark}

\section{Application to network systems}\label{sec:app}
In this section, we apply our results to example systems.  We show that (i)
affine averaging systems and affine flow systems are doubly-contracting,
(ii) distributed primal-dual dynamics is weakly contracting, and (iii)
networks of {\color{black}diffusively-coupled dynamical systems} with strong coupling are
semi-contracting.

\subsection{Affine averaging and flow systems are doubly-contracting}

\begin{theorem}[Affine averaging system]\label{thm:affine-avg}
  Let $L$ be the Laplacian of a weighted digraph with a globally reachable
  node.  Let $v$ be the dominant left eigenvector of $L$ satisfying
  $\vect{1}_n^{\top}v=1$. Let $b\in\real^n$. Then the \emph{affine
    averaging system}
  \begin{align}\label{eq:ave}
    \dot{x} = - L x + b
  \end{align}
  \begin{enumerate}
    \item\label{p0:contractive} is weakly contracting with respect to
      $\|\cdot\|_{\infty}$ and semi-contracting with respect to
      $\|\cdot\|_{\infty,R_\epsilon}$, where $R_\epsilon$ is defined in
      Remark~\ref{remark:Reps} as function of a sufficiently small
      $\epsilon$,
      
  \item\label{p1:unbounded} if $v^{\top}b\neq0$, then every trajectory is
    unbounded, 
    
  \item\label{p2:bounded} if $v^{\top}b=0$, then the trajectory starting
    from $x(0)=x_0$ converges to the equilibrium point $L^{\dagger}b +
    (v^{\top}x_0)\vect{1}_n$ with exponential rate $-\alpha_{\mathrm{ess}}(-L)$.
    
   \end{enumerate}
 \end{theorem}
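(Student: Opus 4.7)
The plan is to handle the three parts in order, with part (i) supplying the inputs needed for parts (ii) and (iii).

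For part (i), the Jacobian of the vector field is the constant matrix $-L$. Weak contraction with respect to $\|\cdot\|_\infty$ follows from Theorem~\ref{lem:computational}\ref{p4:inf-norm} applied with $\xi = \vect{1}_n$: each row of $-L$ has diagonal entry $-\ell_{ii}$ and off-diagonal magnitudes $\ell_{ij}$, so the row-sum formula gives $\mu_\infty(-L) = \max_i\{-\ell_{ii} + \sum_{j\ne i}\ell_{ij}\} = 0$ since Laplacian rows sum to zero. For semi-contraction, Lemma~\ref{ex:useful} together with Remark~\ref{remark:Reps}, applied to $\|\cdot\|_\infty$, produces for any $\epsilon > 0$ a matrix $R_\epsilon \in \complex^{(n-1)\times n}$ with $\Ker R_\epsilon = \mathrm{span}(\vect{1}_n)$ satisfying $\mu_{\infty,R_\epsilon}(-L) \le \alpha_{\mathrm{ess}}(-L) + \epsilon$; choosing $\epsilon < -\alpha_{\mathrm{ess}}(-L)$ makes this semi-measure strictly negative. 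Finally, the identity $L \vect{1}_n = \vect{0}_n$ gives the infinitesimal invariance of $\Ker\verti{\cdot}_{\infty,R_\epsilon}$ under $-L$.

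For part (ii), I exploit that $v$ is a left null vector of $L$: differentiating $v^\top x$ along any trajectory gives $\frac{d}{dt}(v^\top x) = -v^\top L x + v^\top b = v^\top b \ne 0$, so $v^\top x(t) = v^\top x_0 + t\, v^\top b$ diverges and $x(t)$ is unbounded. Equivalently, any equilibrium would require $b \in \Img L = \mathrm{span}(v)^\perp$, which fails, and then the weak contraction from part (i) together with the dichotomy Theorem~\ref{thm:weak-dichotomy}\ref{thmwd:p2} yields the same conclusion.

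For part (iii), $v^\top b = 0$ gives $b \in \Img L$, hence $L L^\dagger b = b$, and a direct check using $L\vect{1}_n = \vect{0}_n$ confirms that $x^\star := L^\dagger b + (v^\top x_0)\vect{1}_n$ satisfies $-L x^\star + b = \vect{0}_n$. To obtain convergence to this specific equilibrium at the claimed rate, I shift coordinates via $y = x - L^\dagger b$, producing the homogeneous system $\dot y = -L y$ whose equilibrium set is exactly the subspace $\mcS = \mathrm{span}(\vect{1}_n)$. Part (i) provides both assumptions of Theorem~\ref{thm:mixed}: weak contraction with $\|\cdot\|_\infty$ and semi-contraction with $\verti{\cdot}_{\infty,R_\epsilon}$ whose kernel is $\mcS$. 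Theorem~\ref{thm:mixed} then gives $y(t) \to y^\star \in \mcS$ at exponential rate $-\alpha_{\mcS^\perp}(-L)$, which by Lemma~\ref{ex:useful} equals $-\alpha_{\mathrm{ess}}(-L)$. The specific representative $y^\star$ (and hence the claimed $x^\star$) is then pinned down by the conserved quantity $v^\top x(t) \equiv v^\top x_0$ evaluated at the limit.

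The main obstacle is the interplay between the right null vector $\vect{1}_n$ of $L$, which parametrizes the equilibrium subspace in which the limit must lie, and the left null vector $v$, which defines the conserved quantity used to select the correct equilibrium; Theorem~\ref{thm:mixed} only identifies the limit modulo $\mcS$, so the conservation law along $v$ is essential for the explicit formula. The rate statement itself is immediate once Theorem~\ref{thm:mixed} and Lemma~\ref{ex:useful} are combined.
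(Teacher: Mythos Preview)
Your proposal is correct and follows essentially the same approach as the paper: weak contraction via $\mu_\infty(-L)=0$, the dichotomy of Theorem~\ref{thm:weak-dichotomy} for part~\ref{p1:unbounded}, and the coordinate shift $y=x-L^\dagger b$ combined with Theorem~\ref{thm:mixed} and Lemma~\ref{ex:useful} for part~\ref{p2:bounded}, with the conserved quantity $v^\top x$ identifying the specific limit. The only substantive difference is that the paper also invokes Theorem~\ref{thm:1-inf} in part~\ref{p2:bounded} to first establish convergence to \emph{some} equilibrium before deducing the rate from Theorem~\ref{thm:mixed}; you bypass this intermediate step, which is fine since Theorem~\ref{thm:mixed} already delivers both convergence and rate.
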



\begin{proof}
  Regarding part~\ref{p0:contractive}, for $f(x):=-Lx + b$, note
  $\mu_{\infty}(\jac f(x)) =\mu_{\infty}(-L)= 0$. Thus, the
  system~\eqref{eq:ave} is weakly contracting with respect to
  $\ell_\infty$-norm. Regarding part~\ref{p1:unbounded}, if
  $v^{\top}b\neq0$, then there does not exist $x\in \real^n$ such that $-Lx
  + b = 0$. Thus, Theorem~\ref{thm:weak-dichotomy}\ref{thmwd:p1} implies
  every trajectory of~\eqref{eq:ave} is unbounded.  Regarding
  part~\ref{p2:bounded}, since $G$ has a reachable node, $L$ has a simple
  eigenvalue $0$ and its other eigenvalues have positive real parts. Thus,
  if $v^{\top}b=0$, then $-Lx + b = \vect{0}_n$ has solutions $x =
  L^{\dagger}b+\beta \vect{1}_n$, for $\beta \in \real$. Thus, by
  Theorem~\ref{thm:1-inf}, every trajectory of~\eqref{eq:ave} converges to
  an equilibrium point in $\mathrm{span}(\vect{1}_n)$. Let $t\mapsto x(t)$
  be a trajectory of~\eqref{eq:ave}. Since $v^{\top}L = 0$, we have
  $v^{\top}x(t) = v^{\top}x_0$. Thus, $t\mapsto x(t)$ converges to the
  equilibrium point $L^{\dagger}b+(v^{\top}x_0)\vect{1}_n$.  Next, we
  define $z = x + L^{\dagger}b$ to get
  \begin{align}\label{eq:ave-trans}
    \dot{z} = - L z. 
  \end{align}
  Note $\mathrm{span}(\vect{1}_n)$ is invariant for~\eqref{eq:ave-trans}
  and consists of only equilibrium points. Moreover, for $f(z):=-Lz$ we
  have $\mu_{\infty}(\jac f(z)) =\mu_{\infty}(-L)= 0$. Thus the
  system~\eqref{eq:ave-trans} is weakly contracting with respect to
  $\ell_\infty$-norm.  Lemma~\ref{ex:useful} now implies
  \begin{align*}
    \inf & \setdef{\mu_{\infty,R}(-L)}{R\in \complex^{(n-1)\times n}, \Ker(R)=\mathrm{span}(\vect{1}_n)}
    \\ & =\alpha_{\vect{1}_n^{\perp}}(-L^{\top}) = \alpha_{\mathrm{ess}}(-L) < 0. 
  \end{align*}
  This implies that there exists $R\in \complex^{(n-1)\times n}$ such that
  $\Ker R = \mathrm{span}(\vect{1}_n)$ and $\mu_{\infty,R}(-L) \le -c$ for
  some $c>0$. Thus, the assumptions of Theorem~\ref{thm:mixed} hold for the
  norm $\|\cdot\|_{\infty}$ and the semi-norm
  $\verti{\cdot}_{\infty,R}$. In turn, every trajectory of
  system~\eqref{eq:ave-trans} converges to $\mathrm{span}(\vect{1}_n)$ with
  convergence rate $-\alpha_{\mathrm{ess}}(-L)>0$.
\end{proof}

Next, we state an analogous theorem for affine flow systems, whose proof is
omitted in the interest of brevity.

\begin{theorem}[Affine flow system]\label{thm:affine-flow}
  Under the same assumptions on $L$, $v$, and $b$ as in
  Theorem~\ref{thm:affine-avg}, the \emph{affine flow system}
  \begin{align}\label{eq:flow}
    \dot{x} = - L^{\top} x + b 
  \end{align}
  \begin{enumerate}
  \item\label{p0:contractiveflow} is weakly contracting with respect to
    $\|\cdot\|_{1}$ and semi-contracting with respect to
    $\|\cdot\|_{1,R_\epsilon}$, where $R_\epsilon$ is defined in
    Remark~\ref{remark:Reps} as function of a sufficiently small
    $\epsilon$,
      
  \item\label{p1:unboundedflow} if $\vect{1}_n^{\top}b\neq0$, then every
    trajectory is unbounded,
    
  \item\label{p2:boundedflow} if $\vect{1}_n^{\top}b=0$, then the
    trajectory starting from $x(0)=x_0$ converges to the equilibrium point
    $(L^{\top})^{\dagger}b + (\vect{1}_n^{\top}x_0)v$ with exponential
    rate $-\alpha_{\mathrm{ess}}(-L)$.
    
   \end{enumerate}
 \end{theorem}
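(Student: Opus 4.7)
The plan is to mirror the proof of Theorem~\ref{thm:affine-avg} by exploiting the duality between $\ell_1$ and $\ell_\infty$ matrix measures and the fact that $L^\top$ plays the role for the flow system that $L$ plays for the averaging system. The key observation is that $\mu_1(A) = \mu_\infty(A^\top)$ for any matrix $A$, so $\mu_1(-L^\top) = \mu_\infty(-L) = 0$, immediately giving the weak contractivity claim in part~\ref{p0:contractiveflow}. For the semi-contractivity part, I would apply Theorem~\ref{thm:abscissa}\ref{omm:p2:R} to the matrix $-L^\top$ with the invariant subspace $\mcS = \mathrm{span}(v)$ (invariant since $L^\top v = \vect{0}_n$) and verify that $\alpha_{v^\perp}(-L) = \alpha_{\mathrm{ess}}(-L) < 0$, using that the right eigenvector $\vect{1}_n$ of $-L$ for the eigenvalue $0$ is not in $v^\perp$ because $\vect{1}_n^\top v = 1$.

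For part~\ref{p1:unboundedflow}, I would observe that any equilibrium $x^*$ must satisfy $-L^\top x^* + b = \vect{0}_n$; premultiplying by $\vect{1}_n^\top$ and using $\vect{1}_n^\top L^\top = (L\vect{1}_n)^\top = \vect{0}_n^\top$ yields the necessary condition $\vect{1}_n^\top b = 0$. Hence if $\vect{1}_n^\top b \neq 0$ no equilibrium exists, and Theorem~\ref{thm:weak-dichotomy}\ref{thmwd:p2} together with weak contractivity gives unboundedness of every trajectory.

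For part~\ref{p2:boundedflow}, I would proceed in three steps. First, when $\vect{1}_n^\top b = 0$, the equilibrium set is $(L^\top)^\dagger b + \mathrm{span}(v)$, and the conservation law $\frac{d}{dt}(\vect{1}_n^\top x) = -\vect{1}_n^\top L^\top x + \vect{1}_n^\top b = 0$ together with $\vect{1}_n^\top v = 1$ uniquely picks out the candidate limit $(L^\top)^\dagger b + (\vect{1}_n^\top x_0)v$. Second, since $f(x) = -L^\top x + b$ is (piecewise) real analytic, Theorem~\ref{thm:1-inf} applied with $p = 1$ and $Q = I_n$ guarantees that every trajectory converges to an equilibrium. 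Third, I would pass to the shifted coordinates $z = x - (L^\top)^\dagger b$ to obtain the homogeneous system $\dot{z} = -L^\top z$ whose equilibrium subspace $\mathrm{span}(v)$ coincides with the kernel of the semi-norm $\verti{\cdot}_{1,R}$ furnished by Theorem~\ref{thm:abscissa}\ref{omm:p2:R} (achieving $\mu_{1,R}(-L^\top) \leq -c < 0$ for some $c$ arbitrarily close to $-\alpha_{\mathrm{ess}}(-L)$), and then invoke Theorem~\ref{thm:mixed} with the norm $\|\cdot\|_1$ and the semi-norm $\verti{\cdot}_{1,R}$ to conclude exponential convergence at rate $-\alpha_{\mathrm{ess}}(-L)$, independent of the choice of $R$.

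The only subtlety in the bookkeeping is the notational one that $R_\epsilon$ for the flow system must be chosen with $\Ker(R_\epsilon) = \mathrm{span}(v)$ rather than $\mathrm{span}(\vect{1}_n)$; its existence is the analog of Remark~\ref{remark:Reps} obtained by applying Theorem~\ref{thm:abscissa}\ref{omm:p2:R} to $-L^\top$ instead of $-L$. I do not expect any substantive obstacle beyond this transposition, because the structural duality between $(L,\ell_\infty,\vect{1}_n)$ and $(L^\top,\ell_1,v)$ allows each step of the proof of Theorem~\ref{thm:affine-avg} to transfer verbatim.
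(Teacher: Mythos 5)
Your strategy is exactly the one the paper intends: the proof of Theorem~\ref{thm:affine-flow} is omitted precisely because it is the transpose of the proof of Theorem~\ref{thm:affine-avg}, and your dictionary $(L,\ell_\infty,\vect{1}_n)\leftrightarrow(L^\top,\ell_1,v)$ reproduces that proof step by step: $\mu_1(-L^\top)=\mu_\infty(-L)=0$ for weak contraction, Theorem~\ref{thm:weak-dichotomy} for part~\ref{p1:unboundedflow}, Theorem~\ref{thm:1-inf} with $p=1$ for convergence to an equilibrium, and Theorem~\ref{thm:abscissa}\ref{omm:p2:R} applied to $-L^\top$ with invariant subspace $\mathrm{span}(v)$ followed by Theorem~\ref{thm:mixed} for the rate. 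Your observation that $R_\epsilon$ must here be chosen with $\Ker(R_\epsilon)=\mathrm{span}(v)$ --- so that the kernel of the semi-norm coincides with the equilibrium subspace of the shifted system $\dot z=-L^\top z$, as Theorem~\ref{thm:mixed} requires --- is correct and is a point the theorem statement itself glosses over.

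The one step that does not hold as written is your identification of the limit in part~\ref{p2:boundedflow}. The conservation law gives $\vect{1}_n^\top x^\infty=\vect{1}_n^\top x_0$ for $x^\infty:=\lim_{t\to\infty}x(t)$; writing $x^\infty=(L^\top)^\dagger b+\beta v$ and using $\vect{1}_n^\top v=1$ yields $\beta=\vect{1}_n^\top x_0-\vect{1}_n^\top(L^\top)^\dagger b$, not $\beta=\vect{1}_n^\top x_0$. The extra term does not vanish in general: $(L^\top)^\dagger b\in\Img((L^\top)^\dagger)=\Img(L)=\Ker(L^\top)^\perp=v^\perp$, and $v^\perp\subseteq\vect{1}_n^\perp$ only when $v$ is proportional to $\vect{1}_n$, i.e., when the digraph is weight-balanced. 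A two-node check with $L=\left[\begin{smallmatrix}1 & -1\\ 0 & 0\end{smallmatrix}\right]$, $v=(0,1)^\top$, $b=(1,-1)^\top$ gives the actual limit $(1,\,x_1(0)+x_2(0)-1)$ versus the claimed $(1,\,x_1(0)+x_2(0))$. So your second and third steps correctly establish exponential convergence at rate $-\alpha_{\mathrm{ess}}(-L)$ to some equilibrium $(L^\top)^\dagger b+\beta v$, but the closed form of $\beta$ needs the correction $-\vect{1}_n^\top(L^\top)^\dagger b$. In fairness, this defect is inherited from the statement you were asked to prove, and the mirror-image issue is present in Theorem~\ref{thm:affine-avg}\ref{p2:bounded}, where $v^\top L^\dagger b\ne 0$ in general; both formulas are exact only in the weight-balanced (in particular undirected) case.
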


{\color{black}In sum, Theorems~\ref{thm:affine-avg}
  and~\ref{thm:affine-flow} demonstrate how weak and semi-contraction
  theory is sufficiently powerful to fully characterize the behavior of
  affine averaging and flow systems.}
 
\subsection{Distributed primal-dual dynamics is weakly contracting}
In the second application, we study a well-known distributed implementation
of unconstrained optimization problems. Suppose that we have $n$ agents
connected though an undirected weighted connected graph with Laplacian
$L$. We want to minimize an objective function $f:\real^k\to \real$ which
can be represented as $f(x) = \sum_{i=1}^{n}f_i(x)$.  When agent $i$ has
access to only the objective function $f_i$, then minimizing $f(x) =
\sum_{i=1}^{n}f_i(x)$ can be implemented in a distributed fashion as:
\begin{equation}\label{eq:distributed}
  \begin{split}
    \min\quad  &\sum_{i=1}^{n} f_i(x_i),  \\
    &x_i = x_j,\quad\mbox{for every edge }i,j.
  \end{split}
\end{equation}
The primal-dual algorithm associated with the distributed
optimization problem~\eqref{eq:distributed} is given by
\begin{equation}\label{eq:pd}
  \begin{split}
    &\dot{x}_i = -\nabla f_i(x_i) - \sum_{i=1}^{n} a_{ij}(\nu_i-\nu_j),\\
    &\dot{\nu}_i = \sum_{j=1}^{n}a_{ij}(x_i-x_j).
  \end{split}
\end{equation}
The following theorem characterizes the global {\color{black}exponential}
convergence of the dynamics~\eqref{eq:pd}.


  \begin{theorem}[Distributed primal-dual algorithm]\label{thm:pd}
    Consider the distributed optimization problem~\eqref{eq:distributed}
    and the primal-dual dynamics~\eqref{eq:pd} over a connected undirected
    weighted graph with Laplacian matrix $L$. {\color{black}Assume that
      $f = \sum_{i=1}^{n}f_i(x)$ has a minimum $x^*\in \real^{k}$ and
    that, for each $i\in\{1,\ldots,n\}$, $f_i$ is convex, twice differentiable, and $\nabla^2
    f_i(x^*) \succ 0$.}  Then
    \begin{enumerate}
      \item\label{dpd:wk} system~\eqref{eq:pd} is weakly contracting with
      respect to~$\norm{\cdot}_{2}$,
    \item\label{dpd:convergence} {\color{black}each trajectory
      $(x(t),\nu(t))$ of~\eqref{eq:pd} converges exponentially to
      $(\vect{1}_n\otimes x^*, \vect{1}_n\otimes \nu^*)$, where
      $\nu^*=\sum_{i=1}^{n}\nu_i(0)$ with rate
      $-\subscr{\alpha}{ess}\Big(\begin{bmatrix}-\nabla^2 h(x^*) & -L\otimes
        I_k\\ L\otimes I_k & \vect{0}\end{bmatrix}\Big)$.}
    \end{enumerate}
  \end{theorem}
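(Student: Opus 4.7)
The plan is to establish part~\ref{dpd:wk} by a direct Jacobian calculation and to obtain part~\ref{dpd:convergence} by combining the weak-contraction dichotomy with a conservation law that turns a non-Hurwitz linearization into a Hurwitz one on an invariant affine subspace.

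For part~\ref{dpd:wk}, I would write the dynamics in stacked form as
\[
\dot{x}=-\nabla h(x)-(L\otimes I_k)\nu, \qquad \dot{\nu}=(L\otimes I_k)x,
\]
with $h(x):=\sum_{i=1}^{n}f_i(x_i)$. The Jacobian with respect to $z=(x,\nu)$ is the saddle-point block matrix $J(z)=\bigl[\begin{smallmatrix}-\nabla^2 h(x) & -L\otimes I_k \\ L\otimes I_k & 0 \end{smallmatrix}\bigr]$, whose symmetric part is block diagonal with $-\nabla^2 h(x)\preceq 0$ (by convexity of each $f_i$) and zero. Hence $\mu_2(J(z))\le 0$ uniformly in $z$, i.e., weak contraction with respect to $\|\cdot\|_2$.

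For part~\ref{dpd:convergence}, summing the $\nu$-equations and using $\vect{1}_n^\top L=\vect{0}_n^\top$ yields the conservation law $\sum_i\nu_i(t)\equiv\sum_i\nu_i(0)$, so every initial condition lies on an invariant affine subspace $\mathcal{A}_{z_0}$. I next characterize equilibria: $(L\otimes I_k)x=0$ forces consensus $x_i=x^*$; summing the first equation gives $\nabla f(x^*)=0$; and for this $x^*$ the remaining equation $(L\otimes I_k)\nu=-\nabla h(x^*)$ is solvable because $\nabla f(x^*)=0$ places its right-hand side in $\Img(L\otimes I_k)=(\vect{1}_n\otimes\real^k)^{\perp}$. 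The solution is unique modulo $\vect{1}_n\otimes\real^k$, so intersecting with $\mathcal{A}_{z_0}$ singles out a unique equilibrium $z^*$. Combined with part~\ref{dpd:wk}, Theorem~\ref{thm:weak-dichotomy}\ref{thmwd:p1} then yields boundedness of every trajectory.

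For the exponential rate, I would compute $\ker J^*$ at $z^*$: any $(u,w)$ with $J^*(u,w)=0$ satisfies $u\in\vect{1}_n\otimes\real^k$, and taking the inner product of the top block with an arbitrary $\vect{1}_n\otimes v$ (which annihilates the $(L\otimes I_k)w$ term since $\vect{1}_n^\top L=0$) forces $v^\top\nabla^2 f(x^*)u=0$ for all $v$, so $u=0$ by the hypothesis $\nabla^2 f(x^*)=\sum_i\nabla^2 f_i(x^*)\succ 0$. Thus $\ker J^*=\{0\}\oplus(\vect{1}_n\otimes\real^k)$, which is precisely the normal direction of $\mathcal{A}_{z_0}$, so the Jacobian restricted to the tangent space of $\mathcal{A}_{z_0}$ is Hurwitz with spectral abscissa equal to $\alpha_{\mathrm{ess}}(J^*)<0$. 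Applying Theorem~\ref{thm:weak-dichotomy}\ref{thmwd:p4.5} to the weakly contracting system restricted to the invariant convex set $\mathcal{A}_{z_0}$ delivers global exponential convergence to $z^*$ at rate $-\alpha_{\mathrm{ess}}(J^*)$. The main obstacle is that the full Jacobian $J^*$ has a genuine zero eigenvalue coming from the conservation law, so Theorem~\ref{thm:weak-dichotomy}\ref{thmwd:p4.5} does not apply on the ambient state space; the fix is the restriction to $\mathcal{A}_{z_0}$, and the crucial technical step is the kernel computation, where strict positive definiteness $\nabla^2 f_i(x^*)\succ 0$ is used essentially to rule out any spurious zero modes that would otherwise spoil the Hurwitz property on the tangent space.
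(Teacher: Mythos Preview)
Your proposal is correct and follows essentially the same strategy as the paper: weak contraction from the symmetric part of the saddle Jacobian; the conservation law $\sum_i \nu_i(t)\equiv\sum_i\nu_i(0)$ to restrict to an invariant affine set; Hurwitz-ness of the linearization on that set; and then Theorem~\ref{thm:weak-dichotomy}\ref{thmwd:p4.5} for global exponential convergence and the rate. The paper carries out the restriction via an explicit change of coordinates $\tilde{\nu}=(R_{\mcV}\otimes I_k)\nu$ and then invokes \cite[Lemma~5.3]{AC-BG-JC:17} to conclude that the reduced Jacobian is Hurwitz, whereas you stay on the affine subspace and compute $\Ker J^*$ directly. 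Both routes yield the same conclusion and the same rate $-\subscr{\alpha}{ess}(J^*)$.

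One point deserves tightening. Your kernel computation shows that the only zero eigenvectors of $J^*$ lie in the normal direction $\{0\}\oplus(\vect{1}_n\otimes\real^k)$, but this alone does not yet give Hurwitz-ness of the restriction: you must also rule out nonzero purely imaginary eigenvalues and check semisimplicity of the zero eigenvalue. The missing step is the standard energy argument: if $J^*(u,w)=i\omega(u,w)$, taking the real part of $\langle (u,w),J^*(u,w)\rangle$ gives $u^{H}\nabla^2 h(\vect{1}_n\otimes x^*)u=0$, hence $u=0$ since $\nabla^2 h(\vect{1}_n\otimes x^*)=\diag(\nabla^2 f_1(x^*),\dots,\nabla^2 f_n(x^*))\succ 0$ by hypothesis; then $(L\otimes I_k)w=0$ and $\omega w=0$ force either $w=0$ or $\omega=0$. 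Semisimplicity of the zero eigenvalue follows similarly. This is precisely what the cited saddle-point lemma packages; once you add this short argument, your proof is complete and self-contained. Note also that because $\nabla^2 h\succ 0$ on the full block, your inner-product trick with $\vect{1}_n\otimes v$ can be replaced by simply pairing the top block with $u$ itself.
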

  \begin{proof}
    We set $x=(x_1,\ldots,x_n)^{\top}$ and
    $\nu=(\nu_1,\ldots,\nu_n)^{\top}$, and define $h(x) = \sum_{i=1}^{n} f_i(x_i)$. Then algorithm~\eqref{eq:pd} can be
    written as
    \begin{equation}\label{eq:pd-matrix}
      \begin{split}
      &\dot{x} = -\nabla h(x) - (L\otimes I_k)\nu,\\
      &\dot{\nu} = (L\otimes I_k)x.
      \end{split}
    \end{equation}
    Regarding~\ref{dpd:wk}, let
    $(\dot{x},\dot{\nu})=F_{\mathrm{PD}}(x,\nu)$. Then $\jac
    F_{\mathrm{PD}}(x,\nu) = \begin{bmatrix}-\nabla^2 h(x) & -L\otimes
      I_k\\ L\otimes I_k & \vect{0}\end{bmatrix}$ and $\mu_2(\jac
    F_{\mathrm{PD}}(x,\nu)) = 0$. Therefore, the system~\eqref{eq:pd} is
    weakly contracting with respect to the $\ell_2$-norm. Regarding
    part~\ref{dpd:convergence}, let $t\mapsto (x(t),\nu(t))$ be a trajectory of the
    system~\eqref{eq:pd}. Note that, for every $u\in \real^k$, we have
    $(\vect{1}_{n}\otimes u)^{\top} \dot{\nu}(t) = (\vect{1}_{n}\otimes
    u)^{\top} (L\otimes I_k)x(t) = \vect{0}_{nk}$. This implies that
    $\sum_{i=1}^{n}\nu_i(t) = \sum_{i=1}^{n}\nu_i(0)$ and the subspace
    \begin{align*}
      V =\bigsetdef{(x,\nu)\in \real^{nk}\times \real^{nk} }{\sum\nolimits_{i=1}^{n}\nu_i = 0} 
    \end{align*}
    is invariant for system~\eqref{eq:pd}. For the Laplacian $L=L^\top$,
    define $R_{\mcV}$ by equation~\eqref{def:RV}. Define new coordinates
    $(\tilde{x},\tilde{v})$ on $V$ by $\tilde{x}=x$ and
    $\tilde{\nu}=(R_{\mcV}\otimes I_k)\nu$. Thus, the dynamical
    system~\eqref{eq:pd} restricted to $V$ can be written in the new
    coordinate $(\tilde{x},\tilde{v})$ as
      \begin{equation}\label{eq:pd-matrix-coordinate-change}
        \begin{split}
          &\dot{\tilde{x}} = -\nabla h(\tilde{x}) - (L R^{\top}_{\mcV}\otimes I_k)\tilde{\nu},\\
          &\dot{\tilde{\nu}} = (R_{\mcV}L\otimes I_k)\tilde{x}.
        \end{split}
      \end{equation}
      Let
      $(\dot{\tilde{x}},\dot{\tilde{\nu}}):=\tilde{F}_{\mathrm{PD}}(\tilde{x},\tilde{\nu})$
      Note that $(\tilde{x},\tilde{v})=(\vect{1}_n\otimes
      x^*,\vect{0}_{(n-1)k})$ is an equilibrium point of the dynamical
      system~\eqref{eq:pd-matrix-coordinate-change} and
      \begin{align*}
        D \tilde{F}_{\mathrm{PD}}(\tilde{x},\tilde{\nu}) = \begin{bmatrix}-\nabla^2 h(\tilde{x}) & -L R^{\top}_{\mcV}\otimes
          I_k\\ R_{\mcV} L\otimes I_k & \vect{0}\end{bmatrix}.
      \end{align*}
      Again we note that $\mu_2(D
      \tilde{F}_{\mathrm{PD}}(\tilde{x},\tilde{\nu}))= 0$, for every
      $(\tilde{x},\tilde{\nu}) \in V$, and therefore, the
      system~\eqref{eq:pd-matrix-coordinate-change} is weakly contracting
      with respect to the $\ell_2$-norm. Moreover,
      $-\nabla^2h(\vect{1}_n\otimes x^*) \prec
      0$ and $\Ker(L R^{\top}_{\mcV}\otimes I_k) = \emptyset$. Thus,
      by~\cite[Lemma~5.3]{AC-BG-JC:17}, the matrix $D
      \tilde{F}_{\mathrm{PD}}(\vect{1}_n\otimes x^*,0)$ is Hurwitz and the equilibrium point
      $(\vect{1}_n\otimes x^*,\vect{0}_{(n-1)k})$ is locally asymptotically stable for the
      dynamical system~\eqref{eq:pd-matrix-coordinate-change}.
      Theorem~\ref{thm:weak-dichotomy}\ref{thmwd:p4.5} applied to the
      weakly contracting system~\eqref{eq:pd-matrix-coordinate-change}
      implies that $(\vect{1}_n\otimes x^*,\vect{0}_{(n-1)k})$ is a
      globally exponentially stable equilibrium point of the system~\eqref{eq:pd-matrix-coordinate-change}, i.e.,
      $\lim_{t\to\infty}(\tilde{x}(t),\tilde{\nu}(t)) =
      (\vect{1}_n\otimes x^*,\vect{0}_{(n-1)k})$. Therefore, we get $\lim_{t\to\infty} x(t) =
      \vect{1}_n\otimes x^*$ and
      \begin{align*}
        \vect{0}_{nk} &= (R^{\top}_{\mcV}\otimes I_k) \lim_{t\to\infty}
        \tilde{\nu} = \lim_{t\to\infty} (R^{\top}_{\mcV}R_{\mcV}\otimes
        I_k)\nu(t) \\ &= \lim_{t\to\infty}
        \left((I_n-\tfrac{1}{n}\vect{1}_n\vect{1}_n^{\top})\otimes
        I_k\right)\nu(t) \\ & = \lim_{t\to\infty} \left(\nu(t) - \vect{1}_n\otimes \nu^*\right),
      \end{align*}
      where the last equality holds because $\tfrac{1}{n}\sum_{i=1}^{n}\nu_i(t) =
      \nu^*$. As a result, we get $\lim_{t\to\infty} \nu(t) =
      \vect{1}_n\otimes \nu^*$. {\color{black} Moreover, by
        Theorem~\ref{thm:weak-dichotomy}\ref{thmwd:p4.5}, the
        convergence rate to $(\vect{1}_n\otimes x^*,
        \vect{1}_n\otimes \nu^*)$ is $-\alpha(\begin{bmatrix}-\nabla^2 h(\tilde{x}^*) & -L R^{\top}_{\mcV}\otimes
          I_k\\ R_{\mcV} L\otimes I_k & \vect{0}\end{bmatrix})$ which
        is equal to $-\subscr{\alpha}{ess}(\begin{bmatrix}-\nabla^2 h(x^*) & -L\otimes
      I_k\\ L\otimes I_k & \vect{0}\end{bmatrix})$.}
\end{proof} {\color{black}\begin{remark}[Comparison with literature]
    In~\cite{PCV-SJ-FB:19r}, using partial contraction theory, it is
    shown that the distributed primal dual algorithm~\eqref{eq:pd}
    asymptotically converges to an equilibrium point. However, the
    exponential convergences of~\eqref{eq:pd} has been only proved
    under strong convexity and Lipschitz properties of the cost
    functions $f_i$. To the best of our knowledge,
    Theorem~\ref{thm:pd} is the first result that proves exponential
    convergence of the distributed primal dual algorithm~\eqref{eq:pd}
    and provides the rate of convergence under weak
    convexity of each $f_i$.\oprocend
  \end{remark}}

\subsection{Strongly diffusively-coupled systems are semi-contracting}
As third application, we study synchronization phenomena in networks of
identical {\color{black}diffusively-coupled systems}. Consider $n$ agents connected
through a weighted undirected graph with Laplacian matrix $L$. Suppose the
agents have identical internal dynamics described by the time-varying
vector field $f:\real_{\ge 0}\times \real^k\to \real^k$. {\color{black}Then the
network dynamics is}
\begin{align}\label{eq:diffusive}
  \dot{x}_i = f(t,x_i) - \sum_{j=1}^{n} a_{ij}(x_i-x_j), \quad
  i\in\until{n}.
\end{align}
{\color{black}As mentioned in the introduction, such diffusively-coupled dynamical
systems are widespread in many disciplines including developmental
biology, neuroscience, cellular systems, and cellular neural networks.} 
Next, we introduce a novel useful norm.  For $p\in [1,\infty]$, define the
\emph{$(2,p)$-tensor norm} $\|\cdot\|_{(2,p)}$ on
$\real^{nk}\simeq\real^n\otimes \real^k$ by:
\begin{align}\label{eq:pi-norm}
  \|u\|_{(2,p)} = \inf\Bigsetdef{\Big(\sum_{i=1}^{r}\|v^i\|^2_2\|w^i\|_p^2\Big)^{\frac{1}{2}}}{u =\sum_{i=1}^{r}v^i\otimes w^i}. 
\end{align}
The {\color{black}well-definedness} and properties of the $(2,p)$-tensor
norm are studied in Lemma~\ref{lem:crossnorm} in Appendix~\ref{app:2-p-tensornorm}. {\color{black} The $(2,p)$-tensor norm is closely related to, but
  different from, the well-known projective tensor product norm
  (see~\cite[Chapter~2]{RAR:13} for definition and properties of this
  well-known norm). The $(2,p)$-tensor norm is also different from the
  mixed global norm introduced in~\cite{GR-MDB-EDS:13} for hierarchical
  analysis of network systems. For the Laplacian $L=L^\top$, define
  $R_{\mcV}$ by equation~\eqref{def:RV}.}

\begin{theorem}[{\color{black}Networks of diffusively-coupled dynamical
    systems}]\label{thm:diffusive} Consider the network of
  diffusively-coupled identical {\color{black}dynamical
    systems}~\eqref{eq:diffusive} over a connected weighted undirected
  graph with Laplacian matrix $L$. Let $Q\in \real^{k\times k}$ be an
  invertible matrix and $p\in [1,\infty]$. Suppose there exists a positive
  $c$ such that, for every $(t,x)\in \real_{\ge 0}\times \real^k$,
  \begin{align}\label{eq:sync-cond}
    \mu_{p,Q}(\jac{f}(t,x)) &\leq \lambda_2(L) - c.
   \end{align}
   Then,
  \begin{enumerate}
  \item\label{sync1} system~\eqref{eq:diffusive} is semi-contracting with
    rate $c$ with respect to
    {\color{black}$\verti{\cdot}_{(2,p),R_{\mcV}\otimes Q}$},
    \vspace{2.0pt}
    
    
  \item\label{sync2} system~\eqref{eq:diffusive} achieves global
    exponential synchronization with rate $c$, i.e., for each trajectory
    $(x_1(t),\dots,x_n(t))$ and for each pair $i,j$, the distance
    $\|x_i(t)-x_j(t)\|_2$ vanishes exponentially fast with rate $c$.
        
  \end{enumerate}
\end{theorem}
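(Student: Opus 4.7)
The plan is to verify the hypotheses of Theorem~\ref{thm:partialcontraction}\ref{p2:convergence} applied to the semi-norm $\verti{\cdot}_{(2,p),R_{\mcV}\otimes Q}$, and then translate the resulting exponential decay into pairwise synchronization in the $\ell_{2}$-norm. First I would identify the kernel: since $Q$ is invertible and $\Ker(R_{\mcV})=\mathrm{span}(\vect{1}_{n})$, we have $\Ker\verti{\cdot}_{(2,p),R_{\mcV}\otimes Q}=\mathrm{span}(\vect{1}_{n})\otimes\real^{k}$, the synchronization subspace. Second, I would verify shifted-invariance~\eqref{ShfI} with $x^{*}=\vect{0}_{nk}\in\Ker\verti{\cdot}^{\perp}$: for any $z=\vect{1}_{n}\otimes s$ the coupling term vanishes because $L\vect{1}_{n}=\vect{0}_{n}$, while the internal dynamics produces $\vect{1}_{n}\otimes f(t,s)$, which again lies in the synchronization subspace. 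It is worth noting that infinitesimal invariance~\eqref{InfI} does \emph{not} generally hold, since $\jac f(t,x)(\vect{1}_{n}\otimes u)=(\jac f(t,x_{1})u,\dots,\jac f(t,x_{n})u)^{\top}$ lies in the synchronization subspace only when all $x_{i}$ coincide; this is precisely why one must route the argument through the shifted-invariance version of Theorem~\ref{thm:partialcontraction}.

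The technical heart of the proof is establishing the semi-contractivity bound $\mu_{(2,p),R_{\mcV}\otimes Q}(\jac f(t,x))\le -c$, where $\jac f(t,x)=\mathrm{blockdiag}(\jac f(t,x_{1}),\dots,\jac f(t,x_{n}))-L\otimes I_{k}$. Using subadditivity of matrix semi-measures (Theorem~\ref{thm:measureporp}\ref{p2:tri}) together with the $(2,p)$-tensor-norm calculus developed in Appendix~\ref{app:2-p-tensornorm}, I would bound the two contributions separately. For the coupling term, the identity $R_{\mcV}L=\diag(\lambda_{2},\dots,\lambda_{n})R_{\mcV}$ from Lemma~\ref{ex:useful}, combined with the behavior of $\verti{\cdot}_{(2,p)}$ under Kronecker products of the form $A\otimes I_{k}$, should yield $\mu_{(2,p),R_{\mcV}\otimes Q}(-L\otimes I_{k})\le -\lambda_{2}(L)$. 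For the block-diagonal term, the key structural property is that the $(2,p)$-tensor-norm semi-measure of a block-diagonal operator is bounded by the maximum of the per-block $\mu_{p,Q}$, which by hypothesis~\eqref{eq:sync-cond} is at most $\lambda_{2}(L)-c$. Summing these bounds gives $\mu_{(2,p),R_{\mcV}\otimes Q}(\jac f(t,x))\le -c$, establishing statement~\ref{sync1}.

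Applying Theorem~\ref{thm:partialcontraction}\ref{p2:convergence} with $x^{*}=\vect{0}_{nk}$ then produces $\verti{\phi(t,x_{0})}_{(2,p),R_{\mcV}\otimes Q}\le e^{-ct}\verti{x_{0}}_{(2,p),R_{\mcV}\otimes Q}$, so trajectories approach the synchronization subspace exponentially fast at rate $c$. For statement~\ref{sync2}, I would observe that for each pair $i,j$ the linear functional $x\mapsto x_{i}-x_{j}$ vanishes on $\mathrm{span}(\vect{1}_{n})\otimes\real^{k}$, hence descends to a continuous linear map on the quotient $\real^{nk}/(\mathrm{span}(\vect{1}_{n})\otimes\real^{k})$, where the semi-norm is an honest norm. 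Finite-dimensionality yields a constant $K_{ij}$ with $\|x_{i}-x_{j}\|_{2}\le K_{ij}\verti{x}_{(2,p),R_{\mcV}\otimes Q}$, and combining with the previous bound gives the desired $\|x_{i}(t)-x_{j}(t)\|_{2}\le K_{ij}e^{-ct}\verti{x_{0}}_{(2,p),R_{\mcV}\otimes Q}$.

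The principal obstacle is the semi-measure computation in the middle paragraph: it depends on a careful calculus of $(2,p)$-tensor norms capable of handling block-diagonal operators whose blocks may all differ (reflecting the fact that we do \emph{not} assume synchronization along the trajectory, and hence cannot reduce $\mathrm{blockdiag}(\jac f(t,x_{i}))$ to the factored form $I_{n}\otimes\jac f(t,s)$). This is exactly what motivates the careful design of the $(2,p)$-tensor norm in place of a plain weighted $\ell_{p}$-norm on $\real^{nk}$, and is where the bulk of the technical weight of the proof sits.
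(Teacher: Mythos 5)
Your proposal is correct and follows essentially the same route as the paper's proof: the same kernel identification and shifted-invariance verification, the same subadditive split of the semi-measure into the coupling term (handled via $R_{\mcV}LR_{\mcV}^{\top}=\Lambda$) and the block-diagonal term (handled via the $(2,p)$-tensor-norm calculus of Lemma~\ref{lem:crossnorm}), and the same appeal to Theorem~\ref{thm:partialcontraction}\ref{p2:convergence}. Your final step via the quotient norm is only a cosmetic variant of the paper's orthogonal-projection argument, and your observation that infinitesimal invariance fails in general (forcing the shifted-invariance route) is a correct and worthwhile remark.
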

\begin{proof}{\color{black}
  Set $x = (x_1^{\top},\ldots,x_n^{\top})^{\top}\in \real^{nk}$ and
  define $\subscr{x}{ave} :=\frac{1}{n}\sum_{i=1}^{n}x_i$. The dynamics~\eqref{eq:diffusive} can be written
  as:
  \begin{align}\label{eq:diffusive-matrix}
    \dot{x} = F(t,x) - (L\otimes I_k)x,
  \end{align}
  where $F(t,x) = [f^{\top}(t,x_1),\ldots,
  f^{\top}(t,x_n)]^{\top}$. Moreover,
  \begin{align*}
    \jac F(t,x) = \begin{bmatrix} \jac f(t,x_1) & 
      \ldots & \vect{0}_{k\times k}\\ \vdots & \ddots &
      \vdots \\ \vect{0}_{k\times k} &
      \ldots & \jac f(t,x_n) 
    \end{bmatrix}\in \real^{nk\times nk}.
  \end{align*}
   Next, we show that $\mu_{(2,p), R_{\mcV}\otimes Q} (\jac F(t,x) -
   L\otimes I_k) \le -c$. First note that
   \begin{multline*}
     \mu_{(2,p),R_{\mcV}\otimes Q} (\jac F(t,x) - L\otimes I_k) \\ \le \mu_{(2,p),R_{\mcV}\otimes Q} (\jac F(t,x)) + \mu_{(2,p),R_{\mcV}\otimes Q} (- L\otimes I_k). 
   \end{multline*}
   From Lemma~\ref{ex:useful}, recall that
   $R_{\mcV}^{\dagger}=R_{\mcV}^{\top}(R_{\mcV}R_{\mcV}^{\top})^{-1} =
   R_{\mcV}^{\top}$ and
   $R_{\mcV}LR_{\mcV}^{\top}=\Lambda=\diag(\lambda_2(L),\ldots,\lambda_n(L))$. Using
  Theorem~\ref{lem:computational}\ref{p2:seminorm}, this implies that
   \begin{align}\label{eq:l-bound}
     \mu_{(2,p),R_{\mcV}\otimes Q} (- L\otimes I_k) &= \mu_{(2,p)}
     (-R_{\mcV}LR_{\mcV}^{\top}\otimes QQ^{-1}) \nonumber\\ &= \mu_{(2,p)}
     (-\Lambda\otimes I_k) \le -\lambda_2(L).
   \end{align}
   where the last inequality holds by Lemma~\ref{lem:crossnorm}\ref{p5}.  
    Moreover, we have $R_{\mcV}R_{\mcV}^{\top}= I_{n-1}$ and
    $R_{\mcV}\otimes Q = (R_{\mcV}\otimes I_k)(I_n\otimes Q)$. Thus,
    Theorem~\ref{lem:computational}\ref{p2:seminorm} and
    Lemma~\ref{lem:crossnorm}\ref{p4} imply that
    \begin{align*}
   \mu_{(2,p),R_{\mcV}\otimes Q}& (\jac F(t,x)) \\ & = \mu_{(2,p),
                                                     R_{\mcV}\otimes
                                                     I_k}((I_n\otimes
                                                     Q) \jac F(t,x)
                                                     (I_n\otimes
                                                     Q^{-1})) \\ & \le \mu_{(2,p)}((I_n\otimes
                                                     Q) \jac F(t,x)
                                                     (I_n\otimes
                                                     Q^{-1}))
    \end{align*}
    Note that
     \begin{align*}
       \Gamma:=(I_n\otimes
       Q) \jac F(t,x) (I_n\otimes Q^{-1}) = \begin{bmatrix}\Gamma_{1} &
         \ldots & \vect{0}_{k\times k}\\
         \vdots&
         \ddots & \vdots\\
         \vect{0}_{k\times k} &
         \ldots & \Gamma_{n} \end{bmatrix},
     \end{align*}
     where $\Gamma_{i}=Q\jac f(t,x_i)Q^{-1}\in \real^{k\times k}$, for every $i\in
     \{1,\ldots,n\}$. In turn, using Lemma~\ref{lem:crossnorm}\ref{p5}, 
     \begin{multline}\label{eq:f-bound}
       \mu_{(2,p), R_{\mcV}\otimes Q}(\jac F(t,x)) \le 
       \mu_{(2,p)}(\Gamma) \\\le \max_{i\in \{1,\ldots,n\}}\left\{\mu_{p}(\Gamma_i)\right\} = \max_{i\in\{1,\ldots,n\}}\mu_{p,Q}(\jac f(t,x_i)).
       \end{multline}
     Thus, combining~\eqref{eq:l-bound} and~\eqref{eq:f-bound}, we get 
  \begin{multline*}
  \mu_{(2,p), R_{\mcV}\otimes Q} (\jac F(t,x) - L\otimes I_k) \\ \le
    \max_{i\in\{1,\ldots,n\}}\mu_{p,Q}(\jac f(t,x_i)) - \lambda_2(L).
  \end{multline*}
  Using~\eqref{eq:sync-cond}, for every $t\in \real_{\ge 0}$ and every
  $x\in \real^{nk}$,
\begin{align*}
  \mu_{(2,p), R_{\mcV}\otimes Q} (\jac F(t,x) - L\otimes I_k) \le -c,
\end{align*}
This implies that the system~\eqref{eq:diffusive} is semi-contracting with respect
to $\verti{\cdot}_{(2,p), R_{\mcV}\otimes Q}$ with rate $c$. It is easy to see that
  \begin{align*}
    \Ker\verti{\cdot}_{(2,p),R_{\mcV}\otimes Q} = \Ker
  (R_{\mcV}\otimes Q) =  \mcS
  \end{align*}
  where $\mcS= \mathrm{span}\{\vect{1}_n\otimes u\mid u\in \real^k\}$
  is shifted-invariant under the dynamical system~\eqref{eq:diffusive-matrix}. Thus, 
using Theorem~\ref{thm:partialcontraction}\ref{p2:convergence}, every
trajectory of~\eqref{eq:diffusive-matrix} converges exponentially with
rate $c$ to $\mcS$. Now, define $P =
  I_{n} - \tfrac{1}{n}\vect{1}_n\vect{1}_n^{\top}$ and note that $P=P^\top$
  is the orthogonal projection onto $\vect{1}^{\perp}_n$ and $\Ker P
  = \mathrm{span}(\vect{1}_n)$. Note that $P\otimes I_k$ is the orthogonal
projection onto $\mcS^{\perp}$. Thus,
Theorem~\ref{thm:partialcontraction}\ref{p2:convergence} implies that
$\lim_{t\to\infty}\|(P\otimes I_k)x(t)\|_2 = 0$ with rate
$c$. Moreover, we know that 
\begin{align*}
  (P\otimes I_k)x(t) & = x(t) - \tfrac{1}{n}(\vect{1}_n\vect{1}_n^{\top}\otimes
I_k) x(t) \\ & = x(t) - (\vect{1}_n\otimes \subscr{x}{ave}(t)).
\end{align*}
As a result, $\lim_{t\to\infty}\|x(t) - (\vect{1}_n\otimes \subscr{x}{ave}(t))\|_2=0$
with rate $c$, which in turn means that
$\lim_{t\to\infty}\|x_i(t)-x_j(t)\|_2=0$ with rate $c$, for every $i,j\in \{1,\ldots,n\}$.}
\end{proof}
\begin{remark}[Comparison with literature]
  {\color{black} In the literature, most of the synchronization
    conditions for diffusively-coupled systems are either based on
    quadratic Lyapunov functions or $\ell_2$-norms, or are not
    applicable to networks with general topologies. It has been shown
    that in many important applications, these synchronization conditions ignore the
    structure of internal dynamics and provide conservative estimates~\cite{GR-MDB-EDS:10a,MC:07a,ZA-EDS:14}.}  For networks of
  diffusively-coupled identical {\color{black}dynamical systems}
  condition~\eqref{eq:sync-cond} has two unique features: (i) it is
  based on matrix measures induced by weighted $\ell_p$-norm, for
  $p\in [1,\infty]$, thus extending the QUAD-based global
  synchronization conditions in~\cite{WL-TC:06,PD-MdB-GR:11} and the
  $\ell_2$-norm-based synchronization condition in~\cite{MA:11}; and
  (ii) it is applicable to networks with arbitrary undirected
  topology, thus generalizing the synchronization conditions developed
  for specific network topologies in~\cite{ZA-EDS:14}. \oprocend
\end{remark}

\begin{remark}[Diffusively-coupled linear systems]
  Given $\mathcal{A}\in \real^{k\times k}$ and a weighted undirected graph,
  consider the diffusively-coupled linear identical systems:
  \begin{align}\label{eq:diff-linear}
    \dot{x}_i = \mathcal{A}x_i -\sum_{j=1}^{n}a_{ij}(x_i-x_j),\quad i\in
    \until{n}.
   \end{align}
  It is known that the system~\eqref{eq:diff-linear} achieves
  synchronization if and only if $\mathcal{A}-\lambda_2(L)I_k$ is Hurwitz,
  e.g., see~\cite[Theorem 8.4(ii)]{FB:20}. For a fixed $p\in [1,\infty]$,
  condition~\eqref{eq:sync-cond} reads
    \begin{align*}
     \mu_{p,Q}(\mathcal{A})<\lambda_2(L).
    \end{align*}
    Using Theorem~\ref{thm:abscissa}\ref{omm:p2:R}, we get that $\inf_{Q
      \text{ invertible}} \mu_{p,Q}(\mathcal{A}) =
    \alpha(\mathcal{A})$. Thus, per Theorem~\ref{thm:diffusive}\ref{sync2},
    if $\alpha(\mathcal{A})< \lambda_2(L)$, then the diffusively-coupled
    linear identical systems~\eqref{eq:diff-linear} achieve synchronization
    with exponential rate $\lambda_2(L)-\mu_{p,Q}(\mathcal{A})$. In other
    words, condition~\eqref{eq:sync-cond} recovers the exact threshold of
    synchronization for diffusively-coupled linear systems. We note that
    the contraction-based synchronization conditions for
    {\color{black}diffusively-coupled identical dynamical systems} proposed
    in~\cite{MA:11,PD-MdB-GR:11,ZA-EDS:14} do not recover this threshold
    for the linear systems. \oprocend
  \end{remark}

\section{Conclusion}
In this paper we have provided multiple analytic extensions of the basic
ideas in contraction theory, including advanced results on
semi-contracting, weakly contracting and doubly-contracting systems. We
have also illustrated how to apply our results to various network systems.
Possible directions for future work include extensions to discrete-time and
hybrid systems, to differential-geometric treatments, and to
infinite-dimensional systems.

\section{Acknowledgments}  {\color{black}
The authors thank Zahra Aminzare for suggestions about the proof of
Theorem~\ref{thm:diffusive} and Samuel Coogan for suggestions about
Theorem~\ref{thm:partialcontraction}.  The third author also thanks Mario
Di~Bernardo, Giovanni Russo, Rodolphe Sepulchre, and John W.\ Simpson-Porco
for discussions about contraction theory. }

\appendices

\section{Semi-contraction vs. horizontal contraction}\label{app:semi-horizontal}
{\color{black} In this appendix, we study the connection between the
  semi-contraction setting in Section~\ref{sec:partial} and the horizontal
  contraction framework developed in~\cite{FF-RS:14}. We consider the
  Euclidean space $\real^n$ equipped with a semi-norm $\verti{\cdot}$ and
  define a Finsler structure on $\real^n$ compatible with the semi-norm
  $\verti{\cdot}$. Specifically, assume $\mathcal{H}$ is a vector
  subspace of $\real^n$ such that
  \begin{align}\label{eq:kerneldecom}
    \real^n = \Ker\verti{\cdot}\oplus \mathcal{H}.
  \end{align}
  Let $\pi_{\mathcal{H}}:\real^n\to \real^n$ denote the oblique
  projection onto $\mathcal{H}$ parallel to $\Ker\verti{\cdot}$ and consider a
  norm $\|\cdot\|$ on $\real^n$ such that
  \begin{align*}
    \|\pi_{\mathcal{H}}(v)\|=\verti{v},\quad\mbox{ for all }v\in \real^n.
  \end{align*}
  It is easy to see that the manifold $\real^n$ with the norm $\|\cdot\|$
  is a Finsler manifold. Considering the decomposition~\eqref{eq:kerneldecom} as a
  vertical/horizontal decomposition of the tangent space
  $\real^n$~\cite[Definition 4]{FF-RS:14}, the associated horizontal
  projection is given by the oblique projection
  $\pi_{\mathcal{H}}$~\cite[Theorem 4]{FF-RS:14} and the associated pseudo-distance is given by $d(x_1,x_2) =
  \verti{x_1-x_2}$~\cite{FF-RS:14}. 
Thus, we can restate~\cite[Theorem 4]{FF-RS:14}, for the Finsler manifold
$(\real^n,\|\cdot\|)$ with the vertical/horizontal
decomposition~\eqref{eq:kerneldecom}.

\begin{theorem}[Horizontal contraction on Euclidean spaces {\cite[Theorem 4]{FF-RS:14}}]\label{cor}
  Consider the dynamical system~\eqref{eq:TV-nonlinear} on the Finsler
  manifold $(\real^n,\|\cdot\|)$ and assume that there exists $c>0$ such
  that the following conditions hold, for every $(t,x)\in \real_{\ge
    0}\times \real^n$:
  \begin{align}
    \mu_{\verti{\cdot}}(\jac f(t,x)) &\le -c, \\
    \pi_{\mathcal{H}} \jac f(t,x) &= Df(t,x) \pi_{\mathcal{H}},\label{eq:invariance-strong}
  \end{align}
Then, there exists $K\ge 1$ such that, for every $x_0,y_0\in \real^n$,
  \begin{align*}
    \verti{\phi(t,x_0)-\phi(t,y_0)} \le Ke^{-ct} \verti{x_0-y_0},\quad\mbox{for all } t\ge 0. 
\end{align*}
\end{theorem}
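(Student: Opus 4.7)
The plan is to derive this statement as a direct corollary of Theorem~\ref{thm:partialcontraction}\ref{p1:estimates} that was already proved for semi-contracting systems. The first step is to verify that the horizontal invariance condition~\eqref{eq:invariance-strong}, namely $\pi_{\mathcal{H}}\jac f(t,x) = \jac f(t,x)\pi_{\mathcal{H}}$, implies the weaker infinitesimal invariance condition~\eqref{InfI}, namely $\jac f(t,x)\Ker\verti{\cdot}\subseteq\Ker\verti{\cdot}$. Since $\verti{v}=\|\pi_{\mathcal{H}}(v)\|$ and $\|\cdot\|$ is a norm, we have $\Ker\verti{\cdot}=\Ker\pi_{\mathcal{H}}$; for any $v\in\Ker\verti{\cdot}$ the commutation gives $\pi_{\mathcal{H}}(\jac f(t,x)v)=\jac f(t,x)\pi_{\mathcal{H}}v=0$, hence $\jac f(t,x)v\in\Ker\pi_{\mathcal{H}}=\Ker\verti{\cdot}$.

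With infinitesimal invariance in hand and the semi-measure bound $\mu_{\verti{\cdot}}(\jac f(t,x))\le -c$, Theorem~\ref{thm:partialcontraction}\ref{p1:estimates} applied with $C=\real^n$ delivers
\begin{align*}
\verti{\phi(t,x_0)-\phi(t,y_0)}\le e^{-ct}\verti{x_0-y_0}
\end{align*}
for every $x_0,y_0\in\real^n$ and every $t\ge 0$. The conclusion of Theorem~\ref{cor} then follows by taking $K=1$, which is admissible since the statement only requires $K\ge 1$.

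There is no genuine obstacle in the Euclidean setting; the proof is essentially a one-line reduction. The conceptual content lies elsewhere: the commutation hypothesis~\eqref{eq:invariance-strong} is strictly stronger than the infinitesimal invariance~\eqref{InfI} used in semi-contraction, so this corollary is in fact weaker than what Theorem~\ref{thm:partialcontraction}\ref{p1:estimates} already establishes, and moreover yields the sharper constant $K=1$. The nontrivial multiplicative factor $K\ge 1$ in the horizontal contraction framework of~\cite{FF-RS:14} arises on general Finsler manifolds, where the pseudo-distance is defined as an infimum over lengths of horizontal curves joining the endpoints and a distortion constant is generally unavoidable when passing between the infinitesimal Finsler metric and the integrated distance. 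This contrast, made possible by the flat vector-space structure of $\real^n$, is precisely the point that Appendix~\ref{app:semi-horizontal} is set up to emphasize in its comparison between semi-contraction and horizontal contraction.
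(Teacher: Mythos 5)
Your argument is mathematically sound: the identity $\Ker\verti{\cdot}=\Ker\pi_{\mathcal{H}}$ holds because $\verti{v}=\|\pi_{\mathcal{H}}(v)\|$ with $\|\cdot\|$ a genuine norm, the commutation~\eqref{eq:invariance-strong} then forces $\pi_{\mathcal{H}}\jac f(t,x)v=\jac f(t,x)\pi_{\mathcal{H}}v=\vect{0}_n$ for $v\in\Ker\verti{\cdot}$, so infinitesimal invariance~\eqref{InfI} holds, and Theorem~\ref{thm:partialcontraction}\ref{p1:estimates} with $C=\real^n$ gives the estimate with $K=1\ge 1$. However, you should be aware that this is not the route the paper takes: the paper gives no proof of Theorem~\ref{cor} at all, presenting it purely as a restatement of the external result \cite[Theorem 4]{FF-RS:14} in the Euclidean setting, with the proof living in that reference's Finsler-geometric framework (horizontal Finsler--Lyapunov functions and integrated horizontal distances, which is where the distortion constant $K\ge 1$ genuinely arises). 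Your first step --- that \eqref{eq:invariance-strong} implies \eqref{InfI} --- is in fact exactly the computation the paper carries out in the remark \emph{following} Theorem~\ref{cor}, but there it is used to argue that the horizontal-contraction hypotheses are strictly stronger and its conclusion strictly weaker than those of Theorem~\ref{thm:partialcontraction}\ref{p1:estimates}, not to prove Theorem~\ref{cor}. What your derivation buys is a self-contained proof with the sharper constant $K=1$; what it costs is the point of the appendix, which is to contrast the two frameworks rather than to subsume the cited theorem under the paper's own machinery. As a proof of the literal statement, though, there is no gap.
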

Now, we compare Theorem~\ref{thm:partialcontraction}\ref{p1:estimates}
and Theorem~\ref{cor}. 
\begin{remark}[Comparison]
  \begin{enumerate}
  \item The commutativity condition~\eqref{eq:invariance-strong} in
    Theorem~\ref{cor} depends on the choice of the horizontal subspace
    $\mathcal{H}$ in the decomposition~\eqref{eq:kerneldecom}. On the other
    hand, the infinitesimally invariant condition~\eqref{InfI} in
    Theorem~\ref{thm:partialcontraction}\ref{p1:estimates} only depends on
    $\Ker\verti{\cdot}$ and is independent of this
    decomposition. Next, we show that condition~\eqref{eq:invariance-strong} is a stronger
    requirement than the infinitesimal invariance
    condition~\eqref{InfI}. Assume that
    condition~\eqref{eq:invariance-strong} holds. Then, for every $v\in \Ker\verti{\cdot}$ and every $(t,x)\in \real_{\ge 0}\times \real^n$,
\begin{align*}
      \pi_{\mathcal{H}} \jac f(t,x) v = \jac f(t,x) \pi_{\mathcal{H}} v =  Df(t,x)
  \vect{0}_n = \vect{0}_n. 
\end{align*}
This implies that $\jac f(t,x)v\in \Ker\verti{\cdot}$, for every $(t,x)\in
\real_{\ge 0}\times \real^n$. As a result, if
condition~\eqref{eq:invariance-strong} holds then 
$\Ker\verti{\cdot}$ is
infinitesimally invariant under the
dynamics~\eqref{eq:TV-nonlinear}. Example~\ref{eq:example-P1-horizontal}
below shows that, in general, the converse is not true and infinitesimal
invariance of $\Ker\verti{\cdot}$ does not imply the commutativity
condition~\eqref{eq:invariance-strong}.
  
\item Theorem~\ref{cor} provides a weaker result
  compared to
  Theorem~\ref{thm:partialcontraction}\ref{p1:estimates}. In fact,
  Theorem~\ref{cor} guarantees that there exists $K\ge 1$ such that,
  for every $x_0,y_0\in \real^n$,
  \begin{align*}
    \verti{\phi(t,x_0)-\phi(t,y_0)} \le Ke^{-ct} \verti{x_0-y_0},\;\mbox{
    for all } t\ge 0. 
  \end{align*}
  On the other hand, Theorem~\ref{thm:partialcontraction}\ref{p1:estimates} ensures
  that every two trajectories satisfy condition~\eqref{eq:semi-contraction-traj}. 
    \end{enumerate}
  \end{remark}

  \begin{example}\label{eq:example-P1-horizontal}
    Consider the following linear system on $\real^2$:
    \begin{align}
      \begin{split}\label{eq:linear2x2}
      \dot{x}_1 &= -x_1 := f_1(x_1,x_2),\\
      \dot{x}_2 &= x_1 - 2x_2:=f_2(x_1,x_2)
      \end{split}
    \end{align}
    with the semi-norm defined by $\verti{(x_1,x_2)^{\top}} = |x_1|$, for
    every $(x_1,x_2)^{\top}\in \real^2$. Note that
    $\Ker\verti{\cdot}=\mathrm{span}\{(0,1)^{\top}\}$. We consider the
    vertical/horizontal decomposition~\eqref{eq:kerneldecom} with
    $\mathcal{H}=\Ker\verti{\cdot}^{\perp} =
    \mathrm{span}\{(1,0)^{\top}\}$. Then, the horizontal projection is
    given by $\pi_{\Ker\verti{\cdot}^{\perp}} = \begin{bmatrix}
      1 & 0 \\
      0 & 0
    \end{bmatrix}$. Note that
    $\jac f(x_1,x_2)\begin{bmatrix}0 &1\end{bmatrix}^{\top} = \begin{bmatrix}0 &-2\end{bmatrix}^{\top}$ and therefore
    $\Ker\verti{\cdot}$ is infinitesimally invariant under the
    system~\eqref{eq:linear2x2}. However, the system~\eqref{eq:linear2x2} does not satisfy
    condition~\eqref{eq:invariance-strong}, since $
          \pi_{\Ker\verti{\cdot}^{\perp}}\jac f(x_1,x_2) = \begin{bmatrix}
            -1 & 0 \\
            0 & 0
          \end{bmatrix}  \ne \begin{bmatrix}
            -1 & 0 \\
            1 & 0
          \end{bmatrix} = \jac
          f(x_1,x_2)\pi_{\Ker\verti{\cdot}^{\perp}}$.
    \end{example}}

\section{A Useful lemma}\label{app:2-p-tensornorm}

\begin{lemma}[Properties of $(2,p)$-tensor norms]\label{lem:crossnorm}
  Consider the identification $\real^{nk}\simeq \real^n\otimes
  \real^k$ and the $(2,p)$-tensor norm defined in~\eqref{eq:pi-norm}.  Let $A\in \real^{nk\times nk}$, $\Lambda =
  \mathrm{blkdg}(\Lambda_1,\ldots,\Lambda_n)$ such that $\Lambda_i\in
  \real^{k\times k}$, for every $i\in \{1,\ldots,n\}$ and 
  $U\in \real^{(n-1)\times n}$ is such that $UU^{\top}=I_{n-1}$, then
  \begin{enumerate}
  \item\label{p1} $(2,p)$-tensor norm is well-defined;
    \item \label{p2} $\|U\otimes I_k\|_{(2,p)} = \|U^{\top}\otimes I_k\|_{(2,p)}
      =1$; 
  \item \label{p3} $\|\Lambda\|_{(2,p)} \le \max_{i}\|\Lambda_i\|_p$;
  \item \label{p4} $\mu_{(2,p), U\otimes I_k}(A) \le \mu_{(2,p)}(A)$; 
    \item \label{p5} $\mu_{(2,p)}(\Lambda)\le \max_{i}\mu_p(\Lambda_i)$. 
  \end{enumerate}
\end{lemma}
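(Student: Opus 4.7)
The plan is to verify the five items in sequence, with (1) providing the foundation for all subsequent parts. For (1), I would establish finiteness by bounding $\|u\|_{(2,p)}$ via the coordinate representation $u = \sum_{i,j} u_{ij}\, e_i \otimes e_j$; homogeneity is immediate from scaling a single factor, and the triangle inequality follows by concatenating representations of $u_1$ and $u_2$ and applying the Cauchy\textendash{}Schwarz inequality to the resulting double sum. The delicate point is non-degeneracy, i.e., that $\|u\|_{(2,p)} = 0$ implies $u = 0$: I would identify $u \in \real^n \otimes \real^k$ with a matrix $U \in \real^{n \times k}$ via $v \otimes w \leftrightarrow vw^\top$, and show that $\|u\|_{(2,p)} = 0$ forces $U = 0$ by pairing with appropriate dual functionals $\xi \otimes \eta$ and using the inequality $|\langle \xi \otimes \eta, v \otimes w \rangle| \le \|\xi\|_2 \|\eta\|_q \|v\|_2 \|w\|_p$ (where $1/p + 1/q = 1$).

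For (2), I would first establish the rank-one identity $\|v \otimes w\|_{(2,p)} = \|v\|_2 \|w\|_p$, whose upper bound is immediate from taking the trivial representation but whose lower bound requires the same duality argument as above. Given this identity, for any representation $u = \sum_i v^i \otimes w^i$, the bound $\|(U \otimes I_k)u\|_{(2,p)} \le \|u\|_{(2,p)}$ follows from $(U \otimes I_k)u = \sum_i (Uv^i) \otimes w^i$ combined with $\|Uv^i\|_2 \le \|v^i\|_2$ (using $UU^\top = I_{n-1}$); equality is witnessed on $u_j \otimes w$, where $u_j^\top$ is a row of $U$, and an analogous argument treats $U^\top \otimes I_k$. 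For (3), from any representation $u = \sum_j v^j \otimes w^j$ I would expand $\Lambda u = \sum_i \sum_j v^j_i\, e_i \otimes (\Lambda_i w^j)$ and apply the definition together with $\|\Lambda_i w^j\|_p \le \max_\ell \|\Lambda_\ell\|_p\,\|w^j\|_p$.

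For (4), the plan is to invoke Theorem~\ref{lem:computational}\ref{p2:seminorm} with the identity $(U \otimes I_k)^\dagger = U^\top \otimes I_k$ (a consequence of $UU^\top = I_{n-1}$) to rewrite $\mu_{(2,p), U\otimes I_k}(A) = \mu_{(2,p)}\bigl((U\otimes I_k)A(U^\top \otimes I_k)\bigr)$. The algebraic identity
\[
I_{(n-1)k} + h(U\otimes I_k)A(U^\top\otimes I_k) = (U\otimes I_k)(I_{nk} + hA)(U^\top \otimes I_k),
\]
together with $\|U \otimes I_k\|_{(2,p)} = \|U^\top \otimes I_k\|_{(2,p)} = 1$ from (2), then yields $\|I + h(U\otimes I_k)A(U^\top\otimes I_k)\|_{(2,p)} \le \|I + hA\|_{(2,p)}$, and taking the logarithmic derivative at $h = 0^+$ gives the desired inequality. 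Finally (5) follows by applying (3) to the block-diagonal matrix $I_{nk} + h\Lambda$ to get $\|I + h\Lambda\|_{(2,p)} \le \max_i \|I_k + h\Lambda_i\|_p$, then interchanging $\max$ and $\lim_{h \to 0^+}$ using continuity of the maximum over finitely many functions. The main obstacle is the rank-one identity in (2), whose lower bound $\|v \otimes w\|_{(2,p)} \ge \|v\|_2\|w\|_p$ is not apparent from the defining infimum and requires either a Hahn\textendash{}Banach duality argument or a direct matrix-unfolding computation; the non-degeneracy in (1) relies on the same mechanism.
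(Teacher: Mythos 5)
Your route coincides with the paper's for items \ref{p1}, \ref{p3}, \ref{p4} and \ref{p5}: the same concatenation argument for the triangle inequality (the relevant tool is $\sqrt{a+b}\le\sqrt{a}+\sqrt{b}$ applied to near-optimal representations, not Cauchy--Schwarz, but that is cosmetic), the same expansion $\Lambda u=\sum_{i}\sum_{j}v^i_j\,e_j\otimes\Lambda_j w^i$ for \ref{p3}, the same use of Theorem~\ref{lem:computational}\ref{p2:seminorm} with $(U\otimes I_k)^\dagger=U^\top\otimes I_k$ and the factorization $I_{(n-1)k}+h(U\otimes I_k)A(U^\top\otimes I_k)=(U\otimes I_k)(I_{nk}+hA)(U^\top\otimes I_k)$ for \ref{p4}, and the same interchange of $\max$ and $\lim_{h\to0^+}$ for \ref{p5}. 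The one genuine divergence is item \ref{p2}: you go through the rank-one identity $\|v\otimes w\|_{(2,p)}=\|v\|_2\|w\|_p$ and an explicit witness $u_j\otimes w$, whereas the paper only proves the two upper bounds $\|U\otimes I_k\|_{(2,p)}\le 1$ and $\|U^\top\otimes I_k\|_{(2,p)}\le 1$ and then obtains the reverse inequality for free from submultiplicativity of induced norms together with $(U\otimes I_k)(U^\top\otimes I_k)=I_{(n-1)k}$, which forces both operator norms to equal $1$. The paper's route is strictly cheaper because it never needs the lower bound $\|v\otimes w\|_{(2,p)}\ge\|v\|_2\|w\|_p$; since you yourself identify that lower bound as the main obstacle, you should adopt the paper's argument for \ref{p2}.

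The obstacle you flag is a genuine gap, and the duality mechanism you propose does not close it. Pairing $u=\sum_i v^i\otimes w^i$ with $\xi\otimes\eta$ yields $|(\xi\otimes\eta)(u)|\le \|\xi\|_2\,\|\eta\|_{*}\sum_i\|v^i\|_2\|w^i\|_p$, i.e., a lower bound on the \emph{projective} cost $\sum_i\|v^i\|_2\|w^i\|_p$; the $(2,p)$ cost $\big(\sum_i\|v^i\|_2^2\|w^i\|_p^2\big)^{1/2}$ is the $\ell_2$ rather than the $\ell_1$ aggregation of the same terms, hence is dominated by the projective cost, and a lower bound on the larger quantity transfers to the smaller one only at the price of a factor $\sqrt{r}$ with $r$ unbounded. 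Worse, the infimum in~\eqref{eq:pi-norm} taken over \emph{all} finite representations is not bounded below by $\|v\|_2\|w\|_p$ at all: writing $v\otimes w=\sum_{j=1}^{m}(\tfrac{1}{m}v)\otimes w$ produces a representation of cost $m^{-1/2}\|v\|_2\|w\|_p$, which tends to $0$ as $m\to\infty$. So the rank-one lower bound you need for \ref{p2}, and the non-degeneracy claim in \ref{p1}, cannot be established for the infimum as literally written; the admissible representations must be restricted (or the aggregation changed) before positivity can hold, and your plan of ``identify $u$ with a matrix and pair with dual functionals'' must be replaced by an argument that respects the squared aggregation. For what it is worth, the paper's own proof of \ref{p1} suffers from the same weakness (it fixes an $\epsilon$-dependent representation and then lets $\epsilon\to0$ as if the representation were fixed), but its proof of \ref{p2} does not rely on the rank-one identity and so is not exposed to this problem in the same way.
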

\begin{proof}
  Regarding part~\ref{p1}, we follow the argument
  in~\cite[Proposition 2.1]{RAR:13}
  for projective tensor product norm to show that $(2,p)$-tensor norm
  is a norm. Using the exact same argument as
  in~\cite[Proposition 2.1]{RAR:13}, it is straightforward to see
  that, for every $c\in \real$, we have
  $\|cu\|_{(2,p)}=|c|\|u\|_{(2,p)}$. Now we show that $\|\cdot\|_{(2,p)}$ satisfies triangle
inequality. Let  $u,z\in \real^{n}\otimes \real^k$ and let
$\epsilon>0$. Then by definition~\eqref{eq:pi-norm}, there exist
representations $u = \sum_{i=1}^{r} v^i\otimes
  w^i$ and $z=\sum_{i=1}^{s} x^i\otimes
  y^i$ such that 
\begin{align*}
\big(\sum_{i=1}^{r}
  \|v^i\|^2_2\|w^i\|^2_p\big)^{\frac{1}{2}}&\le
  \|u\|_{(2,p)}+\frac{\epsilon}{2},\\
\big(\sum_{i=1}^{s}
  \|x^i\|^2_2\|y^i\|^2_p\big)^{\frac{1}{2}}&\le
                                                      \|z\|_{(2,p)}+\frac{\epsilon}{2}. 
\end{align*}
Note that $\sum_{i=1}^{s}x^i\otimes y^i + \sum_{i=1}^{r} v^i\otimes w^i$ is a representation of $u+z$. As a result, we have 
\begin{align*}
\|u+z\|_{(2,p)} &\le \big(\sum_{i=1}^{s}\|x^i\|^2_2\|y^i\|_p^2 + \sum_{i=1}^{r}\|v^i\|_2^2\|w^i\|_p^2\big)^{\frac{1}{2}}\\ &
                                                                       \le \big(\sum_{i=1}^{s}\|x^i\|^2_2\|y^i\|_p^2\big)^{\frac{1}{2}}
  +\big(\sum_{i=1}^{r}\|v^i\|_2^2\|w^i\|_p^2\big)^{\frac{1}{2}}\\ & \le   \|z\|_{(2,p)}+\|u\|_{(2,p)}+\epsilon
\end{align*}
Since $\epsilon$ can be chosen arbitrarily small, then we have
$\|u+z\|_{(2,p)}\le \|z\|_{(2,p)}+\|u\|_{(2,p)}$. Finally, we show that if $\|u\|_{(2,p)}=0$, then we have $u=\vect{0}_{nk}$. Since
$\|u\|_{(2,p)}=0$, there exists a representation $u = \sum_{i=1}^{r} v^i\otimes
  w^i$ such that $\sum_{i=1}^{r} \|v^i\|^2_2\|w^i\|_p^2 \le \epsilon$. Let
$\phi:\real^n\to\real$ and $\psi:\real^k\to \real$ be two linear
functionals. Then we have $\sum_{i=1}^{r}(\phi(v^i))^2(\psi(w^i))^2
\le \epsilon \|\phi\|^2_2\|\psi\|^2_p$. Since this inequality holds
for every $\epsilon>0$, then we have
$\sum_{i=1}^{r}(\phi(v^i))^2(\psi(w^i))^2 = 0$. This implies that, for
every $i\in \{1,\ldots,r\}$, either $\phi(v^i)=0$ or
$\psi(w^i)=0$. Thus, $\sum_{i=1}^{r} \phi(v^i)\psi(w^i) =
0$. Now using~\cite[Proposition 1.2]{RAR:13}, we get $u=\vect{0}_{nk}$. Regarding part~\ref{p2}, note that, for $u\in \real^{n}\otimes \real^k$ with a
  representation $u=\sum_{i=1}^{r}v^i\otimes w^i$, 
  \begin{align*}
    \|(U\otimes I_k)&u\|^2_{(2,p)} = \big\|\sum_{i=1}^{r} (Uv^i)\otimes
    w^i\big\|^2_{(2,p)} \\ & \le \sum_{i=1}^{r} \|Uv^i\|^2_2\|w^i\|^2_p \le 
    \sum_{i=1}^{r} \|v^i\|^2_2\|w^i\|^2_p,
  \end{align*}
  where the first inequality holds by definition of $(2,p)$-tensor
  norm and the second inequality holds because, for every
  $i\in \{1,\ldots,r\}$, we have $\|Uv^i\|_2\le \|U\|_2\|v^i\|_2$ and
  $\|U\|_2=\max_{i\in \{1,\ldots,n-1\}}\lambda_i(UU^{\top})=1$. Since the above inequality
  holds for every representation of $u$, then
  $\|(U\otimes I_k)u\|_{(2,p)}\le \|u\|_{(2,p)}$, for every
  $u\in \real^n\otimes \real^k$. This implies that
  $\|U\otimes I_k\|_{(2,p)}\le 1$. Similarly, one can show that
  $\|U^{\top}\otimes I_k\|_{(2,p)}\le 1$. However,
  $(U\otimes I_k)(U^{\top}\otimes I_k)=I_{nk}$, this implies that
  $\|U\otimes I_k\|_{(2,p)}\|U^{\top}\otimes I_k\|_{(2,p)}\ge 1$ and
  in turn
  $\|U\otimes I_k\|_{(2,p)} =\|U^{\top}\otimes I_k\|_{(2,p)} = 1$. Regarding part~\ref{p3}, for $u\in \real^{n}\otimes \real^k$ with a
  representation $u=\sum_{i=1}^{r}v^i\otimes w^i$, we have $\Lambda u
  = \sum_{i=1}^{r} \sum_{j=1}^{n}v^i_je_j\otimes \Lambda_{j}w^i$, where $e_k$ is the
    $k$th standard basis in $\real^n$, for every $k\in
    \{1,\ldots,n\}$. This implies that 
  \begin{align*}
    \|\Lambda u\|&^2_{(2,p)} = \big\|\sum_{i=1}^{r}\sum_{j=1}^{n}v^i_je_j\otimes \Lambda_{j}w^i\big\|^2_{(2,p)} 
                             \\ &\le \sum_{i=1}^{r}\sum_{j=1}^{n}|v^i_j|^2_2\|\Lambda_jw^i\|^2_p
                             \le
                             \sum_{i=1}^{r}\sum_{j=1}^n|v^i_j|^2_2\|\Lambda_j\|^2_p\|w^i\|^2_p\\
                             &\le
                             \max_{i}\|\Lambda_i\|^2_p\sum_{i=1}^{r}\sum_{j=1}^{n}
                             |v^i_j|^2_2\|w^i\|^2_p\\ &=\max_{i}\|\Lambda_i\|^2_p\sum_{i=1}^{r}\|v^i\|^2_2\|w^i\|^2_p,
  \end{align*}
  where the first inequality holds by the definition of $(2,p)$-tensor
  norm. For the second inequality, we used
  $\|\Lambda_jw^i\|_p\le \|\Lambda_j\|_p\|w^i\|_p$, for every
  $i\in \{1,\ldots,r\}$ and every $j\in \{1,\ldots,n\}$. For the third
  inequality, we used $\|\Lambda_j\|_p\le
  \max_i\|\Lambda_i\|_p$. Finally, for the last equality, we used the
  fact that $\sum_{j=1}^{n} |v^i_j|^2 = \|v^i\|_2^2$.  Since the above
  inequality holds for every representation of $u$, then we have
  $\|\Lambda u\|_{(2,p)} \le \max_{i}\|\Lambda_i\|_p\|u\|_{(2,p)}$,
  for every $u\in \real^n\otimes \real^k$. This implies that
  $\|\Lambda\|_{(2,p)}\le \max_{i}\|\Lambda_i\|_p$. Regarding
  part~\ref{p4}, note that $U^{\dagger}=U^{\top}$. As a result, using
  Theorem~\ref{lem:computational}\ref{p2:seminorm}, we have
  \begin{multline*}
    \mu_{(2,p),U\otimes I_k} (A) = \mu_{(2,p)}((U\otimes I_k)A(U^{\top}\otimes I_k)) \\ =
    \lim_{h\to 0^{+}} \frac{\|I_{(n-1)k}+h(U\otimes I)A(U^{\top}\otimes I_k)\|_{(2,p)}-1}{h}.
  \end{multline*}
  Note that, for every $h\in \real_{\ge 0}$, 
\begin{align*}
  \|&I_{(n-1)k}-h(U\otimes I)A(U^{\top}\otimes
  I_k)\|_{(2,p)} \\ & = \|(U\otimes I_k)(U^{\top}\otimes I_k)+h(U\otimes
  I_k)A(U^{\top}\otimes I_k)\|_{(2,p)} \\ &\le \|(U\otimes
  I_k)\|_{(2,p)}\|I_{nk}+ h A\|_{(2,p)}\|(U^{\top}\otimes
  I_k)\|_{(2,p)} \\ &= \|I_{nk}+ h A\|_{(2,p)},
\end{align*}
where for the first equality, we used the fact that $UU^{\top}=I_n$,
and the last equality holds by part~\ref{p2}. As a result, we get
$\mu_{(2,p),U\otimes I_k} (A)  \le \mu_{(2,p)} (A)$. Regrading part~\ref{p5}, note that
\begin{align*}
  \mu_{(2,p)}(\Lambda) = \lim_{h\to 0^{+}}
  \frac{\|I_{nk}+h\Lambda\|_{(2,p)}-1}{h}. 
  \end{align*}
On the other hand, we have 
$I_{nk}+h\Lambda=\mathrm{blkdg}(I_k+h\Lambda_1,\ldots,
I_{k}+h\Lambda_{n})$. Thus, by part~\ref{p3}, we have 
\begin{align*}
  \mu_{(2,p)}(\Lambda) & = \lim_{h\to 0^{+}}
  \frac{\|I_{nk}+h\Lambda\|_{(2,p)}-1}{h} \\ & \le \lim_{h\to 0^{+}}
  \frac{\max_{i}\|I_{k}+h\Lambda_i\|_{p}-1}{h} \\ & = \max_{i}\left\{\lim_{h\to 0^{+}}
  \frac{\|I_{k}+h\Lambda_i\|_{p}-1}{h}\right\}  \\ & =
  \max_{i}\mu_{p}(\Lambda_i). \qedhere
  \end{align*}
\end{proof}

\section{Proof of Lemma~\ref{lem:weak-implies-invariant}}

\begin{proof}[Proof of Lemma~\ref{lem:weak-implies-invariant}]
  Since the trajectory $t\mapsto x(t)$ is bounded in $\real^n$, there
  exists $r>0$ such that $t\mapsto x(t)$ is contained in
  $\overline{B}_{\|\cdot\|}(x(0),r)\cap C$. For every $s\ge 0$, we define
  the set $\mcU_s = \bigcap_{t\ge s
  }\left(\overline{B}_{\|\cdot\|}(x(t),r)\cap C\right)$. It is easy to show
  the family of sets $\{\mcU_s\}_{s\ge 0}$ satisfies the following
  monotonicity property: $\mcU_{s_1}\subseteq\mcU_{s_2}$, for
  $s_1<s_2$. Additionally, we define the sets $U$ and $V$ by
  \begin{align*}
    U = \bigcup\nolimits_{s\ge 0 }\mcU_s\qquad V= \bigcup\nolimits_{t\ge 0 }\left(\overline{B}_{\|\cdot\|}(x(t),r)\cap C\right).
  \end{align*}
  Since $t\mapsto x(t)$ is bounded, the set $V$ is bounded and therefore
  $\mathrm{cl}(V)$ is a compact in $\real^n$. Moreover, it is easy to
  check that $U\subseteq \mathrm{cl}(V)$. We set $W = \mathrm{cl}(U)$ and show
  that $W$ is a non-empty compact convex bounded set with the
  property that $\phi(t,W)\subseteq W$, for every $t\ge 0$.

 \emph{(Step 1: $W$ is non-empty)}. Consider $ 0=t_0\le
 t_1<\ldots<t_n$ and define the set $S= \bigcap_{i=0}^{k}\left(\overline{B}_{\|\cdot\|}(x(t_i),r)\cap C\right)$
  Let $t\ge t_k$. For every $i\in \{1,\ldots,k\}$, we have
  \begin{align}\label{eq:good}
    \|x(t) - x(t_i)\|\le \|x(t-t_i)-x(0)\|\le r,
  \end{align}
  where the first inequality holds because the system~\eqref{eq:nonlinear}
  is weakly contracting and the last inequality holds because
  $\overline{B}_{\|\cdot\|}(x(0),r)$ contains the trajectory $t\mapsto
  x(t)$. Therefore, the inequalities in~\eqref{eq:good} implies that $x(t)
  \in S$, for every $t\ge 0$. Thus $\mcS$ is non-empty. Now consider the family of sets
  $\left\{\left(\overline{B}_{\|\cdot\|}(x(t),r)\cap C\right)\right\}_{t\ge 0}$. By the
  above argument, this family is inside the compact set
  $\mathrm{cl}(V)$ and, for every finite set $J\in \real_{\ge 0}$, the intersection $\bigcap_{t_i\in J}\left(\overline{B}_{\|\cdot\|}(x(t),r)\cap C\right)$ is non-empty. Therefore, by~\cite[Theorem 26.9]{JM:00}, for every $s\ge
  0$, the set $\mcU_s$ is non-empty and, as a result, $W$ is non-empty.

  \emph{(Step 2: $W$ is convex and compact)}. We start by showing that $U$ is
  convex. Note that intersection of any collection of convex sets is convex~\cite{SB-LV:04}. Therefore, $\mcU_s$ is
convex, for every $s\ge 0$. Pick $x_1,x_2\in U$ and $\alpha\in
[0,1]$. We show that $\alpha x_1 + (1-\alpha)x_2\in U$. By definition
of $U$, there exists $s_1,s_2\ge 0$ such that $x_1\in \mcU_{s_1}$ and $x_2\in
\mcU_{s_2}$. Using the monotonicity of the family $\{\mcU_s\}_{s\ge 0}$, we have that $x_1,x_2\in \mcU_{(s_1+s_2)}$.
Since $\mcU_{(s_1+s_2)}$ is convex, we have $\alpha x_1 + (1-\alpha)x_2\in \mcU_{(s_1+s_2)}\subseteq U$. This
means that $U$ is convex. Moreover, the closure of any convex set is again convex,
thus $W = \mathrm{cl}(U)$ is convex. Since
$U\subseteq V$ and $V$ is compact, we have $W = \mathrm{cl}(U)\subseteq
V$. Thus, $W$ is a closed subset of a compact set and it is
compact.

\emph{(Step 3: $W$ is invariant under~\eqref{eq:nonlinear})}. Let $y\in U$. By definition,
there exists $s\ge 0$ such that $y\in \mcU_s$. For every $\eta>0$ and
every $t\ge s$, we have
\begin{align*}
  \|\phi(\eta,y) - x(\eta+t)\| \le \|y - x(t)\|\le r,
\end{align*}
where the first inequality holds because the system~\eqref{eq:nonlinear} is weakly contracting and
the last inequality holds because $y\in U$. This implies that, for
every $\eta\ge 0$, we have $\phi(\eta,y) \in \mcU_{\eta+s}$. This implies that
$\phi(\eta,y) \in U$. So we have $\phi(\eta,U)\subseteq U$, for every
$\eta>0$. By a simple continuity argument and using the fact that $W$
is closed, we get $\phi(\eta,W)\subseteq W$. \end{proof}

\section{Proofs of results in Section~\ref{sec:semi-measure}}\label{app:Theorem5-6-proofs}

We start this appendix with the following useful lemma.
\begin{lemma}\label{lem:useful}
Let $A\in \complex^{n\times n}$ and $R\in \complex^{k\times n}$ is
a full rank matrix such that $\Ker(R)$ is invariant under $A$. Then 
\begin{align*}
\mathrm{spec}_{\Ker R^{\perp}}(A^H) = \mathrm{spec} (RA^HR^{\dagger}). 
\end{align*}
\end{lemma}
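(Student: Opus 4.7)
The plan is to realize $RA^H R^{\dagger}$ as a coordinate representation of the restricted operator $A^H|_{\Ker(R)^{\perp}}$, after which equality of the two spectra is immediate from similarity.

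First I would set up the restricted operator. Because $\Ker(R)$ is invariant under $A$, the notation section tells us $\Ker(R)^{\perp}$ is invariant under $A^H$, so $A^H|_{\Ker(R)^{\perp}}$ is a well-defined endomorphism of an $k$-dimensional subspace (using that $R$ has full row rank). By the very definition of $\mathrm{spec}_{\mcS}(A^H)$, applied with $\mcS=\Ker(R)^{\perp}$ invariant, the eigenvalues of $A^H|_{\Ker(R)^{\perp}}$ coincide with $\mathrm{spec}_{\Ker(R)^{\perp}}(A^H)$, since every nonzero eigenvector of the restricted operator is also an eigenvector of $A^H$ lying in $\mcS$, and conversely.

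Next I would exploit the Moore\textendash{}Penrose properties recalled in Section~\ref{sec:notation}. Since $R$ has full row rank, $RR^{\dagger}=I_k$ and $R^{\dagger}R$ is the orthogonal projection onto $\Img(R^H)=\Ker(R)^{\perp}$. Consequently, the linear map $R|_{\Ker(R)^{\perp}}:\Ker(R)^{\perp}\to\complex^k$ is a bijection with inverse given by $R^{\dagger}$ (viewed as a map $\complex^k\to\Ker(R)^{\perp}$). Then for any $y\in\complex^k$, set $v=R^{\dagger}y\in\Ker(R)^{\perp}$; invariance gives $A^H v\in\Ker(R)^{\perp}$, so applying $R$ and using $R^{\dagger}R v=v$, I obtain $RA^H R^{\dagger} y=R\bigl(A^H|_{\Ker(R)^{\perp}}\bigr)v$. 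This commutativity displays $RA^H R^{\dagger}$ as the conjugate of $A^H|_{\Ker(R)^{\perp}}$ by the isomorphism $R|_{\Ker(R)^{\perp}}$, hence $\mathrm{spec}(RA^H R^{\dagger})=\mathrm{spec}(A^H|_{\Ker(R)^{\perp}})=\mathrm{spec}_{\Ker(R)^{\perp}}(A^H)$.

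I do not anticipate a serious obstacle: the whole argument is a change-of-basis calculation. The only point requiring care is to verify that, under the identification $R:\Ker(R)^{\perp}\to\complex^k$, the Moore\textendash{}Penrose inverse indeed provides the correct section (rather than merely a left inverse on all of $\complex^n$), which follows from $\Img(R^{\dagger})=\Ker(R)^{\perp}$ together with $RR^{\dagger}=I_k$.
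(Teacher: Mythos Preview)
Your argument is correct and rests on the same Moore\textendash{}Penrose identities as the paper's proof ($RR^{\dagger}=I_k$ and $R^{\dagger}R$ the orthogonal projection onto $\Ker(R)^{\perp}$). The packaging differs slightly: you exhibit $RA^{H}R^{\dagger}$ as similar to $A^{H}|_{\Ker(R)^{\perp}}$ via the isomorphism $R|_{\Ker(R)^{\perp}}$, which yields equality of spectra in one stroke; the paper instead verifies the inclusion $\mathrm{spec}_{\Ker(R)^{\perp}}(A^{H})\subseteq\mathrm{spec}(RA^{H}R^{\dagger})$ by pushing an eigenvector through $R$, and then closes the argument by a dimension count (both sides have $k$ eigenvalues). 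Your similarity formulation is a bit cleaner, since it sidesteps any ambiguity about multiplicities in the counting step, but the underlying computation is the same.
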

\begin{proof}
Suppose that $\lambda\in \mathrm{spec}_{\Ker R^{\perp}}(A^H) $. This means that there exist
$v\perp\Ker(R)$ and $\lambda\in \complex$ such that $A^{H}v=\lambda v$. Since $R^{\dagger}R$ is
an orthogonal projection and $v\perp\Ker(R)$, there exists $x\in
\complex^n$ such that $v = R^{\dagger}Rx$. Thus,  $A^{H}R^{\dagger}R x = \lambda R^{\dagger}Rx$. Multiplying both side by $R$, we get
$RA^{H}R^{\dagger}Rx = \lambda RR^{\dagger}Rx = \lambda Rx$. This means that $\lambda\in
\mathrm{spec} (RA^{H}R^{\dagger})$. Thus, we have $\mathrm{spec}_{\Ker R^{\perp}}(A^H) \subseteq
\mathrm{spec} (RA^{H}R^{\dagger})$. On the other hand, since
$A\Ker(R)\subseteq \Ker(R)$ and $R$ is a full-rank matrix, we have
$|\mathrm{spec}_{\Ker R^{\perp}}(A^H)|=k$. It is easy to see that since
$RA^{H}R^{\dagger}\in \complex^{k\times k}$, we have
$\left|\mathrm{spec} (RA^{H}R^{\dagger})\right|=\left|\mathrm{spec}_{\Ker R^{\perp}}(A^H)\right|=k$. As a result $\mathrm{spec}_{\Ker R^{\perp}}(A^{H}) = \mathrm{spec} (RA^{H}R^{\dagger})$.
\end{proof}
Now, we go back to proof of Theorem~\ref{thm:measureporp}. 
\begin{proof}[Proof of Theorem~\ref{thm:measureporp}]
  Regarding part~\ref{p1:well-def}, define the function $f:\real_{\ge 0}\to
  \real$ by $f(h) := \frac{\verti{I_n +hA} - 1}{h}$. It suffices to show
  that the limit $\lim_{h\to 0^+} f(h)$ exists. First note that, for every
  $h\in \real_{\ge 0}$, we have $\verti{I_n +hA} \ge \verti{I_n} -
  \verti{hA} = 1 - h \verti{A}$. This implies that $f(h)\ge -\verti{A}$, for
  every $h\in \real_{\ge 0}$. Thus $f$ is bounded below. Moreover, for
  every $0<h_1<h_2$, we have
  \begin{multline*}
    \verti{(1/h_1)I_n +A} - \verti{(1/h_2)I_n +A} \\ \le
    \verti{(1/h_1-1/h_2)I_n } = 1/h_1-1/h_2.
  \end{multline*}
  As a result, we get
  \begin{align*}
    f(h_1) &= \verti{(1/h_1)I_n +A} - (1/h_1) \\ & \le \verti{(1/h_2)I_n
      +A} - (1/h_2) = f(h_2).
  \end{align*}
  Therefore $f:\real_{\ge 0}\to \real$ is a strictly increasing function
  which is bounded below. Thus, $\lim_{h\to 0^+} f(h)$ exists. Regarding
  part~\ref{p2:tri}, the result is straightforward using the triangle
  inequality (Theorem~\ref{induced-semi-norm}\ref{p3:triangle1}) for the
  induced semi-norm.  Regarding part~\ref{p4:lipschitz}, the proof is
  straightforward using the triangle inequality proved in
  part~\ref{p2:tri}. Finally, regarding part~\ref{p3:realpart}, suppose
  that $\verti{\cdot}_\complex$ is the complexification of
  $\verti{\cdot}$ and $\lambda\in \complex$ is an eigenvalue of $A^\top$ with the right
  eigenvector $v\in \Ker\verti{\cdot}^{\perp}\subset\complex^n$ normalized
  such that $\verti{v}_{\complex}=1$. Then we have
  \begin{align*}
    \Re(\lambda) &= \lim_{h\to 0^+}\frac{\verti{v+h\lambda v}_\complex-1}{h} =
    \lim_{h\to 0^+}\frac{\verti{v+h A^\top v}_\complex-1}{h} \\
    & \le \lim_{h\to0^+}\frac{\verti{I_n+h A^{\top}}-1}{h} = \mu_{\verti{\cdot}}(A^{\top}).\qedhere
  \end{align*}\end{proof}
Finally, we present the proof of Theorem~\ref{lem:computational}. 
\begin{proof}[Proof of Theorem~\ref{lem:computational}]
    First note that {\color{black}$\Ker\verti{\cdot}_R = \setdef{v\in \real^n}{\verti{v}_R =
    0}$.} Moreover, we have {\color{black}$\verti{v}_R = \|Rv\|$} and since
    $\|\cdot\|$ is a norm, we get {\color{black}$\verti{v}_R=0$} if and only if
    $Rv=\vect{0}_k$. This implies that {\color{black}$\Ker\verti{\cdot}_R = \Ker R$.} Regarding part~\ref{p1:seminorm}, by definition of the
    induced-norm we have {\color{black}
    \begin{align*}
      \verti{A}_R =\setdef{\|RAv\|}{\|Rv\|=1, \; v\perp\Ker R}.
    \end{align*}}
    Note that $R^{\dagger}R$ is an orthogonal projection and
    $\Ker(R^{\dagger}R) = \Ker(R)$. Thus, for every $v\perp \Ker(R)$, there exists $x\in \real^n$
    such that $v = R^{\dagger}R x$. As a result,{\color{black}
    \begin{align*}
      \verti{A}_R = \setdef{\|RAR^{\dagger}Rx\|}{\|Rx\|=1},
    \end{align*}}
    where for the above equality, we used the fact that $RR^{\dagger}R
    = R$. Since $R$ is full rank, for every $y\in \real^k$, there
    exists $x\in \real^n$ such that $y = Rx$. This means that {\color{black}
    \begin{align*}
     \verti{A}_R & = \sup\setdef{\|RAR^{\dagger}Rx\|}{\|Rx\|=1}\\
      &=\sup\setdef{\|RAR^{\dagger}y\|}{\|y\|=1} = \|RAR^{\dagger}\|                  
    \end{align*}}
    Regarding part~\ref{p2:seminorm}, note that {\color{black}
    \begin{align*}
     \mu_R(A) &= \lim_{h\to 0^+} \frac{\verti{I_n + hA}_R -1}{h}\\ & =\lim_{h\to
      0^+} \frac{\|RR^{\dagger} + hRAR^{\dagger}\| -1}{h} \\ & = \lim_{h\to
      0^+} \frac{\|I_k + hRAR^{\dagger}\| -1}{h} =
      \mu(RAR^{\dagger}),
    \end{align*}}
    where in the third equality, we used the fact that $R$ is full 
    rank and therefore $RR^{\dagger}=I_k$. Regarding
    part~\ref{p4:1-norm}, define $\widehat{\xi}\in \real_{\ge 0}^n$ by
    $\widehat{\xi}_i=\xi_i^{-1}$ if $\xi_i>0$ and
    $\widehat{\xi}_i= 0$ if $\xi_i=0$. Assume that $\xi$ has $r\le n$
    non-zero entries and define $Q_{\xi}\in
    \real^{r\times n}$ ($Q_{\widehat{\xi}}\in \real^{r\times n}$) as the matrix whose $i$th row is the $i$th
    non-zero row of $\diag(\xi)$ ($\diag(\widehat{\xi})$). Then it is easy to see that $Q_{\xi}$ is
    full rank and $Q_{\xi}^{\dagger} = Q^{\top}_{\widehat{\xi}}$. Note that {\color{black}$\verti{v}_{1,\diag(\xi)} = \|\diag(\xi)v\|_1= \|Q_{\xi}v\|_1$}, for every $v\in \real^n$. Therefore, by part~\ref{p2:seminorm},
    \begin{align*}
      \mu_{1,\diag(\xi)}(A) = \mu_1(Q_{\xi} A
      Q_{\xi}^{\dagger}) = \mu_1(Q_{\xi} A Q^{\top}_{\widehat{\xi}}). 
    \end{align*}
    The result follows using the formula for $\ell_1$-norm matrix
    measure. The proof of part \ref{p4:inf-norm} is similar
    to~\ref{p4:1-norm} and we omit it. Regarding part~\ref{p5:2-norm},
    suppose that $\mu_{2,R}(A)\le c$. Then, by part~\ref{p2:seminorm}, we
    have $\mu_{2}(RAR^{\dagger})\le c$ and, in turn,
    $\lambda_{\max}(RAR^{\dagger}+(R^{\dagger})^{\top}A^{\top}R^{\top}) \le
    2c$. Since $RAR^{\dagger}+(R^{\dagger})^{\top}A^{\top}R^{\top}$ is
    symmetric, we obtain
    \begin{align}\label{eq:inequality-LMI}
      RAR^{\dagger}+(R^{\dagger})^{\top}A^{\top}R^{\top} \preceq 2c
      I_{k}.
    \end{align}
    Multiplying the inequality~\eqref{eq:inequality-LMI} from the right by
    $R^{\top}$ and from the left by $R$,
    \begin{align}\label{eq:inequality-LMI2}
      PA(R^{\dagger}R)+(R^{\dagger}R)^{\top}A^{\top}P \preceq 2c P. 
    \end{align}
    Since $R\in\real^{k\times{n}}$ is full rank, we have
    $(\Ker(P))^{\perp}=\Img(R^{\top})=\Img(R^{\dagger})$~\cite[Exercise
      5.12.16]{CDM:01}. Therefore, for every $x\in
    (\Ker(P))^{\perp}\subset\real^n$, there exists $\alpha\in \real^k$ such
    that $x=R^{\dagger}\alpha$. This implies that $(R^{\dagger}R)x =
    R^{\dagger}R R^{\dagger}\alpha = x$. As a result,
    \begin{align*}
      0 & \ge x^{\top}(PA(R^{\dagger}R)+(R^{\dagger}R)^{\top}A^{\top}P -
      2c P)x  \\ & = x^{\top}(PA+A^{\top}P - 2c P)x. 
    \end{align*}
    Now suppose that inequality~\eqref{eq:semi-demidovich} holds, for
    $P=R^{\top}R$ and for every $x\in \Ker(P)^{\perp}$. As before, for
    every $x\in \Ker(P)^{\perp}$, there exists $\alpha\in \real^k$ such
    that $x=R^{\dagger}\alpha$. Plugging $x =R^{\dagger}\alpha$ into
    inequality~\eqref{eq:semi-demidovich}, we obtain
    \begin{align}\label{eq:usefulinequalitybysaber}
      0 &\ge x^{\top}(PA+A^{\top}P - 2c P)x \nonumber\\
      & = \alpha^{\top} (RAR^{\dagger}+(R^{\dagger})^{\top}A^{\top}R^{\top} -
                                     2c I_{k})\alpha.
    \end{align}
    where the last equality follows from $(R^{\dagger})^{\top}P =
    (R^{\dagger})^{\top}R^{\top}R=(RR^{\dagger})^{\top}R = R$. Since
    inequality~\eqref{eq:semi-demidovich} holds for every $x\in
    \Ker(P)^{\perp}$, inequality~\eqref{eq:usefulinequalitybysaber} holds
    for every $\alpha\in \real^k$.  The
    inequality~\eqref{eq:usefulinequalitybysaber} and the fact that
    $RAR^{\dagger}+(R^{\dagger})^{\top}A^{\top}R^{\top}-2cI_k$ is symmetric
    together imply
    \begin{align*}
      \tfrac{1}{2}\lambda_{\max}
      \left(RAR^{\dagger}+(R^{\dagger})^{\top}A^{\top}R^{\top}\right)\le c.
    \end{align*}
    Thus $\mu_{2,R}(A) = \mu_2(RAR^{\dagger})\le c$. Regarding
    part~\ref{p6:2-norm-inv}, we first show that if $\Ker R$ is
    invariant under $A$, then $PA=PA(R^{\dagger}R)$. Note that, for every $y\in \real^n$,
      we have $y = R^{\dagger}R y+ (I_n-R^{\dagger}R) y$, where
      $R^{\dagger}R y\in \Img(R^{\dagger})$ and $(I_n-R^{\dagger}R)
      y\in \Ker(R)=\Ker(P)$. Now, note that
      \begin{align*}
        PA y = PA(R^{\dagger}Ry + (I_n-R^{\dagger}R) y) = PAR^{\dagger}Ry,
      \end{align*}
      where the last equality holds because $\Ker(R)$ is invariant
      under $A$ and thus $A(I_n-R^{\dagger}R) y \in \Ker(R)$. As a
      result, we get $PA =PAR^{\dagger}R$. Since $R\in\real^{k\times{n}}$ is full rank, we have
    $(\Ker(P))^{\perp}=\Img(R^{\top})=\Img(R^{\dagger})$~\cite[Exercise
      5.12.16]{CDM:01}. Therefore, for every $\alpha\in \real^n$, we
      have $x=R^{\dagger}R\alpha\in \Ker(P)^{\perp}$ and
      \begin{multline*}
        x^{\top}(PA+A^{\top}P-2cP)x =\\ \alpha^{\top}(PA(R^{\dagger}R)
        +(R^{\dagger}R)^{\top}A^{\top}P - 2c P) =\\  \alpha^{\top}(PA+A^{\top}P-2cP)\alpha,
      \end{multline*}
      where, for the first equality, we used the fact that
      $PR^{\dagger}R=R^{\top}RR^{\dagger}R = R^{\top}R = P$ and, for
      the second equality, we used $PA =PAR^{\dagger}R$. Thus,
      equation~\eqref{eq:semi-demidovich} holds for every
      $x\in \Ker(P)^{\perp}$ if and only if $PA+A^{\top}P-2cP$ is
      negative semi-definite. Using part~\ref{p5:2-norm}, we get
      $\mu_{2,R}(A)\le c$ if and only if $PA+A^{\top}P\preceq 2cP$.

    Regarding
    part~\ref{p3:2-norm}, note that by part~\ref{p2:seminorm}, we have
    $\mu_{2,R}(A)=\mu_2(RAR^{\dagger})$. Moreover, using the formula
    for the $\ell_2$-norm matrix measure,
    \begin{align*}
      \mu_2(RAR^{\dagger}) =\tfrac{1}{2}\max\setdef{\lambda}{\lambda\in \mathrm{spec}(RAR^{\dagger} +
      (R^{\dagger})^{\top}A^{\top}R^{\top})}.
    \end{align*}
    Note that $\Ker(R)$ is invariant under $A$. Therefore, using
    Lemma~\ref{lem:useful}, we get $\mathrm{spec}_{\Ker R^{\perp}}(A^{\top}) =
    \mathrm{spec} (RA^{\top}R^{\dagger})$. Noting the fact that $R$ is full
    rank and $RR^{\dagger}=I_k$ and $RP^{\dagger}= R R^{\dagger}
    (R^{\dagger})^{\top} = (R^{\dagger})^{\top}$ and $PR^{\dagger} =
    R^{\top}$, we get
    \begin{align*}
      \mu_2(RAR^{\dagger}) =\tfrac{1}{2}\max\setdef{\lambda}{\lambda\in
      \mathrm{spec}_{\Ker R^{\perp}}(A+ P^{\dagger}A^{\top}P)}.      \hfill\quad    \qedhere
    \end{align*}
  \end{proof}

\section{Networks of diffusively-coupled
  oscillators}\label{app:diffusive-extended}
{\color{black}In this appendix, we provide an extension of
Theorem~\ref{thm:diffusive} for diffusively-coupled identical
dynamical systems~\eqref{eq:diffusive}.

\begin{theorem}[Networks of diffusively-coupled oscillators\textendash{}extended]\label{thm:diffusive2}
  Consider the network of diffusively-coupled identical
  oscillators~\eqref{eq:diffusive} over a connected weighted undirected
  graph with Laplacian matrix $L$. Let $Q\in \real^{k\times k}$ be an
  invertible matrix and $p\in [1,\infty]$. Suppose there exists a positive
  $c$ such that, for every $(t,x)\in \real_{\ge 0}\times \real^k$,
  \begin{align}\label{eq:sync-cond2}
    \mu_{p,Q}(\jac{f}(t,x)) &\leq \lambda_2(L) - c.
   \end{align}
   Then,
  \begin{enumerate}
  \item\label{sync1-2} system~\eqref{eq:diffusive} is semi-contracting with
    exponential rate $c$ with respect to
    $\|\cdot\|_{(2,p),R_{\mcV}\otimes Q}$,
    \item\label{sync3-2} if $t\mapsto x(t)$ and $t\mapsto y(t)$ are
      trajectories of~\eqref{eq:diffusive} starting from
      $x_0\in \real^{nk}$ and $y_0\in \real^{nk}$, respectively, then for every $t\ge 0$, 
      \begin{align*}
        \|y(t) - x(t)\|_{(2,p),R_{\mcV}\otimes Q} \le e^{-ct} \|y_0 - x_0\|_{(2,p),R_{\mcV}\otimes Q} 
        \end{align*}
   \item\label{sync2-2} system~\eqref{eq:diffusive} achieves global
     exponential synchronization with rate $c$, more specifically, for each trajectory
     $x(t) = (x_1(t),\dots,x_n(t))$, we have
     \begin{align*}
       \lim_{t\to \infty} \|x(t) - \vect{1}_n\otimes
       x_{\mathrm{ave}}(t)\|_2 = 0,
     \end{align*}
     where $x_{\mathrm{ave}}(t) = \frac{1}{n}\sum_{i=1}^{n}x_i(t)$.
  \end{enumerate}
\end{theorem}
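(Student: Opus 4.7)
The plan is to mirror the strategy used in the proof of Theorem~\ref{thm:diffusive}, upgrading its conclusions to include the trajectory-distance estimate in part~\ref{sync3-2}. First I would rewrite the network dynamics~\eqref{eq:diffusive} in stacked form as $\dot{x} = F(t,x) - (L\otimes I_k)x$ with $F(t,x)=[f(t,x_1)^\top,\dots,f(t,x_n)^\top]^\top$ and Jacobian $\jac F(t,x) = \mathrm{blkdg}(\jac f(t,x_1),\dots,\jac f(t,x_n))$. The core step is to bound $\mu_{(2,p),R_\mcV\otimes Q}(\jac F(t,x)-L\otimes I_k)$ by $-c$ using subadditivity of the matrix semi-measure (Theorem~\ref{thm:measureporp}\ref{p2:tri}).

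For the coupling term, since $R_\mcV R_\mcV^\top=I_{n-1}$, $R_\mcV^\dagger = R_\mcV^\top$, and $R_\mcV L R_\mcV^\top = \Lambda = \diag(\lambda_2(L),\dots,\lambda_n(L))$, Theorem~\ref{lem:computational}\ref{p2:seminorm} gives $\mu_{(2,p),R_\mcV\otimes Q}(-L\otimes I_k) = \mu_{(2,p)}(-\Lambda\otimes I_k)$, and Lemma~\ref{lem:crossnorm}\ref{p5} applied to this block-diagonal matrix yields the upper bound $-\lambda_2(L)$. For the nonlinear block, I would factor $R_\mcV \otimes Q = (R_\mcV\otimes I_k)(I_n\otimes Q)$, apply Theorem~\ref{lem:computational}\ref{p2:seminorm} and Lemma~\ref{lem:crossnorm}\ref{p4} to obtain $\mu_{(2,p),R_\mcV\otimes Q}(\jac F(t,x)) \le \mu_{(2,p)}((I_n\otimes Q)\jac F(t,x)(I_n\otimes Q^{-1}))$, which is block-diagonal with blocks $Q\jac f(t,x_i)Q^{-1}$, and then invoke Lemma~\ref{lem:crossnorm}\ref{p5} to bound it by $\max_i \mu_{p,Q}(\jac f(t,x_i))$. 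Combining both bounds with hypothesis~\eqref{eq:sync-cond2} yields $\mu_{(2,p),R_\mcV\otimes Q}(\jac F(t,x) - L\otimes I_k) \le -c$, which is part~\ref{sync1-2}.

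For part~\ref{sync3-2}, I would verify the infinitesimal invariance condition~\eqref{InfI} by identifying $\Ker\verti{\cdot}_{(2,p),R_\mcV\otimes Q} = \Ker(R_\mcV\otimes Q) = \mathrm{span}\{\vect{1}_n\otimes u\mid u\in\real^k\}=:\mcS$. A vector $\vect{1}_n\otimes u$ is mapped by $\jac F(t,x) - L\otimes I_k$ to $(\jac f(t,x_1)u,\dots,\jac f(t,x_n)u)^\top - \vect{0}_{nk}$, which lies in $\mcS$ only for the linear-coupling part (since $(L\otimes I_k)(\vect{1}_n\otimes u)=0$); however, the nonlinear block does not leave $\mcS$ invariant in general. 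The workaround, already implicit in the Theorem~\ref{thm:diffusive} argument, is that $\mcS$ is invariant under the dynamics itself (shifted-invariance of $\Ker\verti{\cdot}$) because the diffusive coupling vanishes on synchronized states; I would therefore apply Theorem~\ref{thm:partialcontraction}\ref{p1:estimates} after reducing to the dynamics on $\Ker\verti{\cdot}^\perp$ as in the proof of Theorem~\ref{thm:partialcontraction}\ref{p2:convergence}, obtaining the exponential decay of $\verti{y(t)-x(t)}_{(2,p),R_\mcV\otimes Q}$.

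For part~\ref{sync2-2}, I would finish by noting that the orthogonal projection onto $\mcS^\perp$ is $P\otimes I_k$ with $P=I_n-\tfrac{1}{n}\vect{1}_n\vect{1}_n^\top$, so $(P\otimes I_k)x(t) = x(t)-\vect{1}_n\otimes x_{\mathrm{ave}}(t)$. Theorem~\ref{thm:partialcontraction}\ref{p2:convergence} gives exponential convergence to $\mcS$ with rate $c$, and since $\verti{\cdot}_{(2,p),R_\mcV\otimes Q}$ is a norm on $\mcS^\perp$ equivalent to $\|\cdot\|_2$ there, one obtains $\|x(t)-\vect{1}_n\otimes x_{\mathrm{ave}}(t)\|_2\to 0$. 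The main technical obstacle I expect is the careful handling of the shifted- versus infinitesimal-invariance condition for the $(2,p)$-tensor semi-norm in part~\ref{sync3-2}, since the general Jacobian does not leave $\mcS$ invariant, so the passage through the quotient dynamics on $\mcS^\perp$ must be justified explicitly.
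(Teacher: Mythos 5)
Your treatment of part \ref{sync1-2} (the semi-measure bound via Theorem~\ref{lem:computational}\ref{p2:seminorm} and Lemma~\ref{lem:crossnorm}\ref{p4}--\ref{p5}) and of part \ref{sync2-2} is correct and matches the paper. The gap is in part \ref{sync3-2}. You correctly diagnose that the Jacobian $\jac F(t,x)-L\otimes I_k$ does \emph{not} leave $\mcS=\mathrm{span}\setdef{\vect{1}_n\otimes u}{u\in\real^k}$ invariant off the synchronization manifold, so the infinitesimal invariance hypothesis~\eqref{InfI} of Theorem~\ref{thm:partialcontraction}\ref{p1:estimates} fails. But your proposed workaround does not close this gap: shifted invariance together with the quotient-dynamics reduction used in the proof of Theorem~\ref{thm:partialcontraction}\ref{p2:convergence} only controls the semi-distance of a \emph{single} trajectory to the fixed affine set $x^*+\Ker\verti{\cdot}$, because the auxiliary system $\dot{y}=\mathcal{P}f(t,y+(I_n-\mathcal{P})\phi(t,x_0))$ is built around one specific trajectory $\phi(\cdot,x_0)$. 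Two different trajectories $x(t)$ and $y(t)$ give rise to two \emph{different} auxiliary systems, so their projections $\mathcal{P}x(t)$ and $\mathcal{P}y(t)$ are not two solutions of a common contracting system and no pairwise estimate $\verti{x(t)-y(t)}\le e^{-ct}\verti{x_0-y_0}$ follows. Likewise, Coppel's inequality for semi-norms (Theorem~\ref{thm:coppel}) applied to the variational equation genuinely requires kernel invariance of the coefficient matrix, which is exactly what is missing here.

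The paper's proof resolves this with a device absent from your plan: a change of coordinates $z=(\mathcal{P}\otimes I_k)x$ with $\mathcal{P}\in\real^{(n+1)\times n}$ and $\mathcal{P}^{\dagger}=[I_n,\,-\vect{1}_n]$. Since $\mathcal{P}^{\dagger}$ annihilates the all-ones direction, the Jacobian of the transformed vector field $G(t,z)=(\mathcal{P}\otimes I_k)F(t,(\mathcal{P}^{\dagger}\otimes I_k)z)-(\mathcal{P}L\mathcal{P}^{\dagger}\otimes I_k)z$ maps the kernel of the transformed semi-norm $\verti{\cdot}_{(2,p),R_{\mcV}\mathcal{P}^{\dagger}\otimes Q}$ to zero, so infinitesimal invariance holds trivially and Theorem~\ref{thm:partialcontraction}\ref{p1:estimates} applies to the transformed system. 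The pairwise estimate then pulls back to the original semi-norm because $\mathcal{P}^{\dagger}\mathcal{P}=I_n$ gives $\verti{x-y}_{(2,p),R_{\mcV}\otimes Q}=\verti{(\mathcal{P}\otimes I_k)(x-y)}_{(2,p),R_{\mcV}\mathcal{P}^{\dagger}\otimes Q}$. Without this (or some equivalent mechanism for restoring infinitesimal invariance), part \ref{sync3-2} is not proved by your argument.
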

\begin{proof}
  Set $x = (x_1^{\top},\ldots,x_n^{\top})^{\top}\in \real^{nk}$ and
  define $\mathcal{P}\in \real^{(n+1)\times n}$ by
  \begin{align*}
    \mathcal{P} = \begin{pmatrix}\tfrac{n-1}{n}
      &-\tfrac{1}{n}&\ldots&-\tfrac{1}{n}\\-\tfrac{1}{n}
      &\tfrac{n-1}{n}&\ldots&-\tfrac{1}{n}\\ \vdots & \vdots & \ddots
      & \vdots \\-\tfrac{1}{n}
      &-\tfrac{1}{n}&\ldots&-\tfrac{1}{n} \end{pmatrix}
    \end{align*}
    It is easy to see that $\mathcal{P}^{\dagger} = [I_n,-\vect{1}_n]$. Now
    we define the new variable $z := (\mathcal{P}\otimes I_k) x \in \real^{(n+1)k}$. Since $\mathcal{P}^{\dagger}\mathcal{P} = I_n$,
    we get $x = (\mathcal{P}^{\dagger}\otimes I_k)z$. As a result, in the new
    coordinate chart, the dynamical system~\eqref{eq:diffusive} can be
    written as
    \begin{align}\label{eq:diffusive-matrix}
    \dot{z}= (\mathcal{P}\otimes I_k)F(t,(\mathcal{P}^{\dagger}\otimes I_k)z) - (\mathcal{P} L \mathcal{P}^{\dagger}\otimes I_k)z := G(t,z).
  \end{align}
  where $F(t,x) = [f^{\top}(t,x_1),\ldots,
  f^{\top}(t,x_n)]^{\top}$. It is clear that
  \begin{multline}\label{eq:DG}
    \jac G(t,z) = (\mathcal{P}\otimes I_k)\jac
    F(t,(\mathcal{P}^{\dagger}\otimes
    I_k)z)(\mathcal{P}^{\dagger}\otimes I_k)\\ + (\mathcal{P} L \mathcal{P}^{\dagger}\otimes I_k),
  \end{multline}
  where
  \begin{align*}
    \jac F(t,x) = \begin{bmatrix} \jac f(t,x_1) & 
      \ldots & \vect{0}_{k\times k}\\ \vdots & \ddots &
      \vdots \\ \vect{0}_{k\times k} &
      \ldots & \jac f(t,x_n) 
    \end{bmatrix}\in \real^{nk\times nk}.
  \end{align*}
   Note that $(\mathcal{P}^{\dagger}\otimes I_k)(\vect{1}_n\otimes u) =
\vect{0}_{nk}$, for every $u\in \real^k$. Thus, it is easy to check that
  \begin{align*}
    \Ker\|\cdot\|_{(2,p),R_{\mcV}\mathcal{P}^{\dagger}\otimes Q} = \Ker (R_{\mcV}\mathcal{P}^{\dagger}\otimes Q) =  \mcS
  \end{align*}
  where $\mcS= \mathrm{span}\{\vect{1}_n\otimes u\mid u\in
  \real^k\}$. Moreover, from equation~\eqref{eq:DG}, it is clear that $
  \jac G(t,z) \mcS = \vect{0}_{nk}$, for every $(t,z)\in \real_{\ge
    0}\times \real^{n+1}$. This means that $\mcS$ is
  infinitesimally invariant under the dynamical
  system~\eqref{eq:diffusive-matrix}. Now, we show that the dynamical
  system~\eqref{eq:diffusive-matrix} is semi-contracting with respect
  to the semi-norm $\|\cdot\|_{(2,p),R_{\mcV}\mathcal{P}^{\dagger}\otimes
    Q}$. Note that,
  \begin{multline*}
    \mu_{(2,p), R_{\mcV}\mathcal{P}^{\dagger}\otimes Q} (\jac G(t,x) -
   \mathcal{P} L \mathcal{P}^{\dagger}\otimes I_k)  \\= \mu_{(2,p), R_{\mcV}\otimes Q} (\jac
    F(t,x) - L\otimes I_k) \le -c,
  \end{multline*}
  where the last inequality follows from Theorem~\ref{thm:diffusive}\ref{sync1}. 
Therefore, the dynamical system~\eqref{eq:diffusive-matrix} is semi-contracting with respect
to $\|\cdot\|_{(2,p), R_{\mcV}\mathcal{P}^{\dagger}\otimes Q}$ with rate $c$ and the subspace $\mcS$ is infinitesimally invariant 
under~\eqref{eq:diffusive-matrix}. Moreover, we have $\mathcal{P}^{\dagger}\mathcal{P} = I_n$. This
  implies that, for every $t\ge 0$,
\begin{multline*}
  \|x(t)-y(t)\|_{(2,p), R_{\mcV}\otimes Q}\\=\|(\mathcal{P}\otimes I_k)(x(t)-y(t))\|_{(2,p), R_{\mcV}\mathcal{P}^{\dagger}\otimes Q}.
  \end{multline*}
Thus, using Theorem~\ref{thm:partialcontraction}\ref{p1:estimates}, 
\begin{multline*}
  \|x(t)-y(t)\|_{(2,p), R_{\mcV}\otimes Q}\\ =\|(\mathcal{P}\otimes
                                            I_k)(x(t)-y(t))\|_{(2,p),
                                            R_{\mcV}\mathcal{P}^{\dagger}\otimes
                                            Q} \\ \le e^{-ct}
                                                    \|(\mathcal{P}\otimes
                                                    I_k)(x_0-y_0)\|_{(2,p),
                                                    R_{\mcV}\mathcal{P}^{\dagger}\otimes
                                                    Q} \\ =
  e^{-ct}\|x_0-y_0\|_{(2,p), R_{\mcV}\otimes Q}.
\end{multline*}
This completes the proof of part~\ref{sync3-2}.

Regarding part~\ref{sync2-2}, define $t\mapsto y(t)$ by the following
differential equation
     \begin{align*}
       \dot{y}(t) &= f(t,y(t)), \\
       y(0) &=\tfrac{1}{n}\sum_{i=1}^{n}x_i(0). 
       \end{align*}
Then it is clear that $t\mapsto \vect{1}_n\otimes y(t)$ is a solution
of the dynamical system~\eqref{eq:diffusive}. By part~\ref{sync3-2}, 
\begin{multline}\label{eq:contractingtraj}
  \|x(t)\|_{(2,p),R_{\mcV}\otimes Q} =\|x(t) - \vect{1}_n\otimes y(t)\|_{(2,p),R_{\mcV}\otimes Q} \\ \le
  e^{-ct} \left\|x_0-\vect{1}_n\otimes \tfrac{1}{n}\sum_{i=1}^{n}x_i(0)\right\|_{(2,p),R_{\mcV}\otimes Q}, 
\end{multline}
where the first equality holds because $ \vect{1}_n\otimes y(t)\in
\mcS$. Now, define $P = I_{n} - \tfrac{1}{n}\vect{1}_n\vect{1}_n^{\top}$ and
note that $P=P^\top$ is the orthogonal projection onto
$\vect{1}^{\perp}_n$ and $\Ker P = \mathrm{span}(\vect{1}_n)$. Note
that $P\otimes I_k$ is the orthogonal projection onto
$\mcS^{\perp}$. Thus, equation~\eqref{eq:contractingtraj} implies that
$\lim_{t\to\infty}\|(P\otimes I_k)x(t)\|_2 = 0$ with rate
$c$. Moreover, we know that
\begin{align*}
  (P\otimes I_k)x(t) & = x(t) - \tfrac{1}{n}(\vect{1}_n\vect{1}_n^{\top}\otimes
I_k) x(t) \\ & = x(t) - (\vect{1}_n\otimes \subscr{x}{ave}(t)).
\end{align*}
As a result, $\lim_{t\to\infty}\|x(t) - (\vect{1}_n\otimes \subscr{x}{ave}(t))\|_2=0$
with rate $c$, which in turn means that
$\lim_{t\to\infty}\|x_i(t)-x_j(t)\|_2=0$ with rate $c$, for every $i,j\in \{1,\ldots,n\}$.
\end{proof}}

\section{The Lotka\textendash{}Volterra model}\label{sec:LV}

In this appendix, we study the Lotka\textendash{}Volterra model
with mutualistic interactions and prove that it is weakly
contracting. The Lotka\textendash{}Volterra model is given by
\begin{align}\label{eq:LV}
  \dot{x} = \diag(x)(Ax + r),
  \end{align}
where $x=(x_1,\ldots,x_n)^{\top}\in\real_{\ge 0}^n$ is the vector of
populations, $A\in \real^{n\times n}$ is the mutual interaction matrix, and
$r>\vect{0}_n$ is the intrinsic growth rate. We assume $A$ is Metzler,
i.e., the interactions between any two species are mutualistic.

\begin{theorem}[Global asymptotic stability and Lyapunov functions]\label{thm:saberLV}
  Consider the Lotka\textendash{}Volterra model~\eqref{eq:LV} with a
  Metzler and Hurwitz interaction matrix $A$. Let the positive vector
  $v\in\realpositive^n$ satisfy $v^\top{A}<\vect{0}_n$. Then
  \begin{enumerate}
  \item\label{fact:saberLV:1} the open positive orthant $\realpositive^n$
    is invariant,
  \item\label{fact:saberLV:2} $x^* = -A^{-1}r> \vect{0}_{n}$ is the unique
    globally asymptotically stable equilibrium point of~\eqref{eq:LV}
    restricted to $\realpositive^n$, 
  \item\label{fact:saberLV:3} the following distance between any two
    trajectories $x(t)$ and $z(t)$ is decreasing:
    \begin{equation*}
      d_{\mathrm{LV}}(x(t),z(t))=\sum\nolimits_{i=1}^{n} v_i |\ln\big(x_i(t)/z_i(t)\big)|,
    \end{equation*}
  \item\label{fact:saberLV:4} the following functions are global Lyapunov
    functions:
    \begin{align*}
      x\mapsto\sum\nolimits_{i=1}^{n} v_i |\ln(x_i/x^*_i)|,\quad x\mapsto\sum\nolimits_{i=1}^{n} v_i|(Ax+r)_i|.
    \end{align*}
  \end{enumerate}
\end{theorem}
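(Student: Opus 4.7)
The plan is to transform~\eqref{eq:LV} via componentwise logarithm into a system defined on all of $\real^n$, show that the transformed system is weakly contracting with respect to the weighted $\ell_1$ norm $\|\cdot\|_{1,\diag(v)}$, and then extract all four conclusions from Theorems~\ref{thm:weak-dichotomy},~\ref{thm:1-inf}, and~\ref{thm:coppel}.

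First, invariance of $\realpositive^n$ in~\ref{fact:saberLV:1} is immediate from $\dot{x}_i|_{x_i=0}=0$. On $\realpositive^n$ I would introduce the diffeomorphism $y_i := \ln x_i$, which turns~\eqref{eq:LV} into the smooth, globally defined system $\dot{y} = g(y) := Ae^y + r$ on $\real^n$ with Jacobian $\jac g(y) = A\diag(e^y) = A\diag(x)$. By Theorem~\ref{lem:computational}\ref{p2:seminorm},
\begin{align*}
\mu_{1,\diag(v)}(\jac g(y)) = \mu_1\big(\diag(v)\, A\diag(x)\, \diag(v)^{-1}\big),
\end{align*}
and since the conjugated matrix is Metzler (its off-diagonal entries $(v_i/v_j)A_{ij}x_j$ are nonnegative for $i\ne j$), computing the standard $\ell_1$-measure as the max real column sum yields
\begin{align*}
\mu_{1,\diag(v)}(\jac g(y)) = \max_{j}\frac{x_j}{v_j}\,(v^\top A)_j < 0,
\end{align*}
because $v^\top A < \vect{0}_n^\top$ and $v, x > \vect{0}_n$. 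Hence the $y$-system is weakly contracting with respect to $\|\cdot\|_{1,\diag(v)}$ on all of $\real^n$.

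Next I would verify $x^* = -A^{-1}r > \vect{0}_n$: since $-A$ is a nonsingular $M$-matrix, $-A^{-1} \ge 0$ entrywise, and no row of $-A^{-1}$ can be identically zero (otherwise $A^{-1}$ would be singular), so each component of $-A^{-1}r$ is a nontrivial positive combination of positive entries of $r$. Thus $y^* := \ln x^*$ is the unique zero of $g$. Because $g$ is real analytic, Theorem~\ref{thm:1-inf} applied with $p=1$ and $Q = \diag(v)$ yields that every trajectory of the $y$-system converges to $y^*$; transforming back through the exponential gives~\ref{fact:saberLV:2}.

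Finally, for~\ref{fact:saberLV:3} and~\ref{fact:saberLV:4} I would apply Coppel's inequality (Theorem~\ref{thm:coppel}) along $y$-trajectories. The variational equation between any two solutions $y(t)$ and $\tilde y(t)$, combined with $\mu_{1,\diag(v)}(\jac g(\cdot))\le 0$ and integration along the straight line joining the initial conditions (mimicking the argument in the proof of Theorem~\ref{thm:partialcontraction}\ref{p1:estimates}), gives $\|y(t)-\tilde y(t)\|_{1,\diag(v)} \le \|y(0)-\tilde y(0)\|_{1,\diag(v)}$; since $|y_i-\tilde y_i| = |\ln(x_i/\tilde x_i)|$, this is exactly~\ref{fact:saberLV:3}, and specializing $\tilde y \equiv y^*$ yields the first Lyapunov function in~\ref{fact:saberLV:4}. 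For the second Lyapunov function, note $\tfrac{d}{dt}g(y(t)) = \jac g(y(t))\,g(y(t))$, so Coppel's inequality again ensures $\|g(y(t))\|_{1,\diag(v)} = \sum_i v_i |(Ax(t)+r)_i|$ is non-increasing. The hard part will be the weighted-measure computation: the Metzler structure of $A$ is essential to collapse $|A_{ij}|$ into $A_{ij}$ and expose the hypothesis $v^\top A < 0$; without it the sign of $\mu_{1,\diag(v)}(A\diag(x))$ cannot be controlled.
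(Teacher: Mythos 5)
Your proposal is correct and follows the same core strategy as the paper: pass to logarithmic coordinates $y=\ln x$, observe that the Jacobian $A\diag(e^y)$ of the transformed field is Metzler, and use $v^\top A<\vect{0}_n^\top$ to conclude weak contractivity with respect to $\norm{\cdot}_{1,\diag(v)}$, from which parts~\ref{fact:saberLV:3} and~\ref{fact:saberLV:4} follow by Coppel's inequality. Your explicit column-sum computation $\mu_1\bigl(\diag(v)A\diag(x)\diag(v)^{-1}\bigr)=\max_j (x_j/v_j)(v^\top A)_j$ is just an unpacked version of the equivalence $v^\top M\le b v^\top \iff \mu_{1,\diag(v)^{-1}}(M)\le b$ that the paper cites for Metzler matrices, so no real difference there. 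The one genuine divergence is part~\ref{fact:saberLV:2}: you invoke Theorem~\ref{thm:1-inf} (weak $\ell_1$-contraction plus piecewise analyticity implies convergence to the equilibrium set) together with uniqueness of the zero of $Ae^y+r$, whereas the paper evaluates the measure at the equilibrium itself, gets $\mu_{1,\diag(v)}(\jac\fLVe(y^*))\le -c\min_i(-A^{-1}r)_i<0$, and applies Theorem~\ref{thm:weak-dichotomy}\ref{thmwd:p4.5}. Both routes are valid for the asymptotic-stability claim as stated, but the paper's route is slightly stronger in that it delivers global \emph{exponential} stability with an explicit rate, while yours yields only asymptotic convergence; on the other hand, you supply the $M$-matrix argument for $-A^{-1}r>\vect{0}_n$, which the paper takes for granted.
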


\begin{proof}
  We omit the proof of statement~\ref{fact:saberLV:1} in the interest of
  brevity.  For $x\in\realpositive^n$, let $y_i = \ln(x_i)\in\real$, $i\in
  \until{n}$ and write the Lotka\textendash{}Volterra model~\eqref{eq:LV}
  as
  \begin{align}\label{eq:LV-c}
    \dot{y} = A\exp(y) + r := \subscr{f}{LVe}(y),
  \end{align}
  where $y$ and its entry-wise exponential $\exp(y)$ are vectors in
  $\real^{n}$.  Note that $\jac \fLVe (y) = A \diag(\exp(y))$ is Metzler
  since $\exp(y)>\vect{0}_n$.  Since $v\in \real^n_{>0}$ satisfies $v^\top
  A < \vect{0}_n$, there exists $c>0$ such that $v^{\top}A \leq -c
  v^{\top}$. Therefore, for every $y\in\real^n$, 
  \begin{align*}
    v^{\top}\jac \fLVe (y) = v^{\top} A \diag(\exp(y)) < -c
    v^{\top}\diag(\exp(y)) \le 0.
  \end{align*}
  We now recall~\cite{SC:19} that, for a Metzler matrix $M$,
  a positive vector $v$ and a scalar $b$,
  \begin{equation*} 
    v^\top M \leq bv^\top \enspace\iff\enspace 
    \mu_{1,\diag(v)^{-1}}(M) \leq b .
  \end{equation*}
  This equivalence implies that $\subscr{f}{LVe}$ is
  weakly contracting in its domain.  After a change of coordinates, this
  establishes statement~\ref{fact:saberLV:3}.  At the equilibrium point
  $x^*=-A^{-1}r>\vect{0}_n$, that is, at $\exp(y^*)= -A^{-1}r
  \in\realpositive^n$, we know
  \begin{align*}
    v^{\top}\jac \fLVe(y^*) &\leq -c v^{\top} \diag(\exp(y^*)) \\
    &=  -c v^{\top} \diag(-A^{-1}r)  \\
    & \leq      - c \Big(\min_{i\in\until{n}}(-A^{-1}r)_i\Big)   v^{\top},
    \\
     \implies & \mu_{1,\diag(v)}\big(\jac \fLVe(y^*)\big) \leq -
     c \min_{i\in\until{n}}(-A^{-1}r)_i .
  \end{align*}
  We now invoke Theorem~\ref{thm:weak-dichotomy}\ref{thmwd:p4.5}: $\fLVe$ is
  weakly contracting over the entire $\real^n$ and has strictly negative
  matrix measure at the equilibrium  $y^*$.  Therefore, $y^*$ is the
  unique globally exponentially stable equilibrium with global Lyapunov
  functions
  \begin{equation*}
    y\mapsto  \norm{y-y^*}_{1,\diag(v)},
    \enspace\text{ and }\enspace
    y\mapsto  \norm{\fLVe(y)}_{1,\diag(v)}.
  \end{equation*}
  Facts~\ref{fact:saberLV:2} and~\ref{fact:saberLV:4} follow from a
  change of coordinates.
\end{proof}

\bibliographystyle{plainurl+isbn}
\bibliography{alias,Main,FB,New}

\end{document}